\documentclass[a4paper, 12pt]{article}

\usepackage{amsthm}
\usepackage{amsmath}
\usepackage{amssymb}
\usepackage{graphicx}
\usepackage[round]{natbib}
\usepackage{textcomp}
\usepackage{bbm}
\usepackage{bm}
\usepackage{url}
\usepackage[margin=2cm]{geometry}
\usepackage{mathrsfs} 
\usepackage[dvipsnames]{xcolor}
\usepackage{csquotes}
\usepackage{float}

\usepackage{array}
\usepackage{comment}

\usepackage{algorithm}
\usepackage{algpseudocode}

\usepackage{listings}
\usepackage{setspace}

\usepackage{yfonts}
\usepackage{pdfpages}
\usepackage[colorlinks=true,allcolors=blue]{hyperref}
\usepackage{mathtools}
\usepackage{stmaryrd} 
\usepackage{scalerel} 

\renewcommand{\subset}{\subseteq}

\newtheorem{thm}{Theorem}[section] 
\newtheorem{lem}[thm]{Lemma} 
\newtheorem{prop}[thm]{Proposition}

\theoremstyle{definition}  
\newtheorem{defn}{Definition}[section]

\newtheorem{rem}{Remark}[section] 
\newtheorem{example}{Example}[section]

\newtheorem{assumption}{}

\algrenewcommand\algorithmicrequire{\textbf{Input:}}
\algrenewcommand\algorithmicensure{\textbf{Output:}}

\newcommand{\A}{\mathcal{A}} 
\newcommand{\C}{\mathcal{C}} 
\newcommand{\D}{\mathcal{D}} 
\newcommand{\F}{\mathcal{F}} 
\renewcommand{\L}{\mathcal{L}} 
 
\newcommand{\cN}{\mathcal{N}}
\renewcommand{\S}{\mathcal{S}} 
\newcommand{\T}{\mathcal{T}}
\newcommand{\X}{\mathcal{X}}

\newcommand{\Exp}{\mathbb{E}}
\newcommand{\Prob}{\mathbb{P}}
\newcommand{\Qrob}{\mathbb{Q}}
\newcommand{\R}{\mathbb{R}} 

\newcommand{\BR}{\mathcal{B}(\mathbb{R})} 

\newcommand{\one}{\mathbbm{1}} 

\newcommand{\prob}{\mathsf{P}} 

\DeclareMathOperator*{\argmin}{arg\,min}
\DeclarePairedDelimiter{\norm}{\lVert}{\rVert} 

\newcommand{\fhat}{\widehat{f}}
\newcommand{\ftilde}{\widetilde{f}}

\newcommand{\iso}{\mathsf{iso}}

\title{Uncertainty quantification for expectation-calibrated predictions}
\date{}
\author{Georgios Gavrilopoulos\thanks{Seminar for Statistics, ETH Zurich, Zurich, Switzerland.} \and Johanna Ziegel\footnotemark[1]}

\begin{document}
    \maketitle
    \begin{abstract}
        The existing literature on model calibration focuses mainly on classification and probabilistic predictions.
        In this work, we address calibrated point predictions for the conditional mean.
        Although existing impossibility results preclude exact out-of-sample calibrated predictions, we develop calibrated confidence intervals that provide uncertainty quantification around such predictions.
        Our results come with distribution-free theoretical guarantees and are applicable in model-agnostic, finite-sample settings under exchangeability by leveraging conformal prediction.
        Calibrated confidence intervals rely on a general underlying binning scheme.
        We present two examples of such a binning scheme, one based on a data-independent partition and the other on isotonic regression.
        Under structural assumptions, we prove that calibrated confidence intervals based on isotonic regression come with strong asymptotic consistency properties and have an asymptotically vanishing width.
        We illustrate the empirical performance of our methods by applying them first in a simulated setting and then to a highly imbalanced insurance dataset.
    \end{abstract}

    \textbf{Keywords:} Calibration, conditional mean, conformal prediction, uncertainty quantification, isotonic regression, isotonic calibration.

    \section{Introduction}


    The concept of forecast calibration was developed in the meteorology literature, to evaluate probabilistic forecasts for binary events, such as the probability of precipitation.
    A probabilistic forecast for a binary event is calibrated if its predicted class frequencies align with the observed frequencies.
    Formally, if \(Y\) is a binary random variable, a prediction \(p=p(X)\) for the probability \(\Prob(Y=1\, | \, X)\) is called (out-of-sample) calibrated if
    \begin{equation}
        \label{eq: calibration_binary}
        \mathbb{P}\left(Y=1\, | \, p(X)\right)=p(X)\quad \text{almost surely}.
    \end{equation}
    
    Throughout the 1960s and 1970s, calibration appeared in the literature with various names, such as unbiasedness in the small, realism, realism of confidence, and reliability \citep{Lichtenstein_1977}.
    A weaker notion of calibration, called unbiasedness in the large, had also been proposed \citep{Murphy_1967}.
    A concise review of probabilistic forecasting in those early days is given by \citet{Dawid_1986}.

    Calibration originally attracted a lot of attention because it incentivizes honest predictions \citep{Dawid_1986}.
    Recently, calibration has also been considered important because it facilitates interpretability of black-box models.
    For example, classification models often rely on the computation of a score in \([0,1]\), which is then turned into a binary output via some prespecified decision rule.
    This score is often interpreted as a class-membership probability, and it is used to quantify model uncertainty.
    If the original model is calibrated, this interpretation is valid.
    
    However, for some commonly used classification models, such as naive Bayes classifiers, SVMs, and modern neural networks, the predicted scores often do not align with the observed class frequencies \citep{Zadrozny_2002, Guo_2017}.
    This has prompted the development of post-processing methods to improve calibration properties.
    Examples of such methods include histogram binning \citep{Zadrozny_2001, Gupta_2020}, isotonic calibration \citep{Zadrozny_2002}, and Platt scaling \citep{Platt_2000}.

    Relying on histogram binning and isotonic calibration, \citet{Vovk_2004} and \citet{Vovk_2014} developed Venn and Venn-Abers predictors, respectively.
    These are post-processing recalibration methods with set-valued outputs, which are guaranteed to contain a calibrated prediction.
    The advantages of Venn and Venn-Abers predictors are their theoretical validity guarantees and their empirical performance.

    Calibration can be generalized beyond binary outcomes.
    Given random variables \(X\in \X\) and \(Y\in \R\), a (random) function \(f:\X\to \R\) is called \emph{(expectation-)calibrated} if
    \begin{equation}
        \label{eq: perfect_calibration}
        \mathbb{E}[Y| f(X)]=f(X)\quad \text{almost surely.}
    \end{equation}
    We can think of \(f\) as a predictive model for the conditional mean \(\Exp[Y|X]\).
    The definition of expectation-calibration was introduced by \citet{Tarpey_1996} using the term \emph{self-consistency}.
    The same property later appeared in \citet{Gupta_2020, wuetr_ziegel_23, Vander_Laan_2024} as auto-calibration and perfect calibration.
    \citet{Gneiting_2023} generalized it to other functionals by introducing the concept of T-calibration.

    A more general interpretation of \eqref{eq: perfect_calibration} is through nested information sets.
    Ideally, \(f(X)\) makes use of all the relevant information in \(X\) concerning $Y$, and in that case, \eqref{eq: perfect_calibration} implies that \(f(X)\) is equal to its target functional \(\mathbb{E}[Y|X]\).
    If, on the other hand, \(f(X)\) makes no use of that information and is calibrated, then \(f(X)\) is equal to the marginal expectation \(\mathbb{E}[Y]\), which is uninformative.
    The information set ($\sigma$-algebra) generated by \(f(X)\) itself can be thought of as a good compromise between these two extremes.
    This motivates definition \eqref{eq: perfect_calibration}.
    According to that, \(f(X)\) captures the information carried by itself, which explains the term \emph{self-consistency}.

    In finite samples, even the uninformative marginal expectation \(\mathbb{E}[Y]\) is impossible to access.
 Therefore, in practice, the task of deriving precisely calibrated predictions is generally impossible.
    \citet{Gupta_2020} formally showed the impossibility of this task for binary classification and proved that, even for \eqref{eq: perfect_calibration} to be satisfied approximately, \(f\) must induce a finite or countably infinite partition on the space \(\X\).
    They used histogram binning to obtain such approximately calibrated models \(f\) for binary classification and they showed that such models can be used to construct confidence intervals that contain \(\mathbb{E}[Y|f(X)]\).
    However, careful hyperparameter tuning is needed to avoid inflated intervals.

    \citet{wuetr_ziegel_23} used a slightly different technique, which borrows ideas from isotonic calibration \citep{Zadrozny_2001} and Venn-Abers predictors \citep{Vovk_2014}.
    After training a model \(\fhat\), they fit isotonic regression to \((\fhat(X_1),Y_1),\ldots,(\fhat(X_n),Y_n)\), where \(\{(X_i,Y_i)\}_{i=1}^n\) is a holdout calibration set.
    They show that the fitted function satisfies \eqref{eq: perfect_calibration} \emph{in-sample}, that is, when \((X,Y)\) is sampled from the empirical distribution of \(\{(X_i,Y_i)\}_{i=1}^n\).

    In this paper, we extend Venn-Abers predictors \citep{Vovk_2014} beyond binary classification.
    Using in-sample calibrated predictions, 
    we construct \emph{calibrated confidence intervals}, that is, intervals \(\C_{n+1,\alpha}=\C_{n+1,\alpha}(X_{n+1})\) such that
    \begin{equation}
        \label{eq: calibrated_confidence_intervals}
        \mathbb{P}\Big(f(X_{n+1})\in \C_{n+1,\alpha}\ \Big | \ \A\Big)\geq 1-\alpha,
    \end{equation}
    where \(f\) is calibrated, \(\alpha\) is a prespecified error level, and \(\A\) is a \(\sigma\)-algebra, which may depend on the training set, the calibration set, and even the test point \((X_{n+1},Y_{n+1})\).
    Since \(f\) is calibrated, this condition can be rewritten as \(\mathbb{P}\big(\mathbb{E}[Y_{n+1}|f(X_{n+1})]\in \C_{n+1,\alpha}\ \big | \ \A\big)\geq 1-\alpha\), which matches the definition of \citet{Gupta_2020}.
    Figure \ref{fig: GAM100_intervals} provides an illustration of these calibrated confidence intervals, computed for multiple test points \(X_{n+1}\).
    Further details are given in Section \ref{sec: simulation}.

    \begin{figure}
        \centering
        \includegraphics[width=0.65\linewidth]{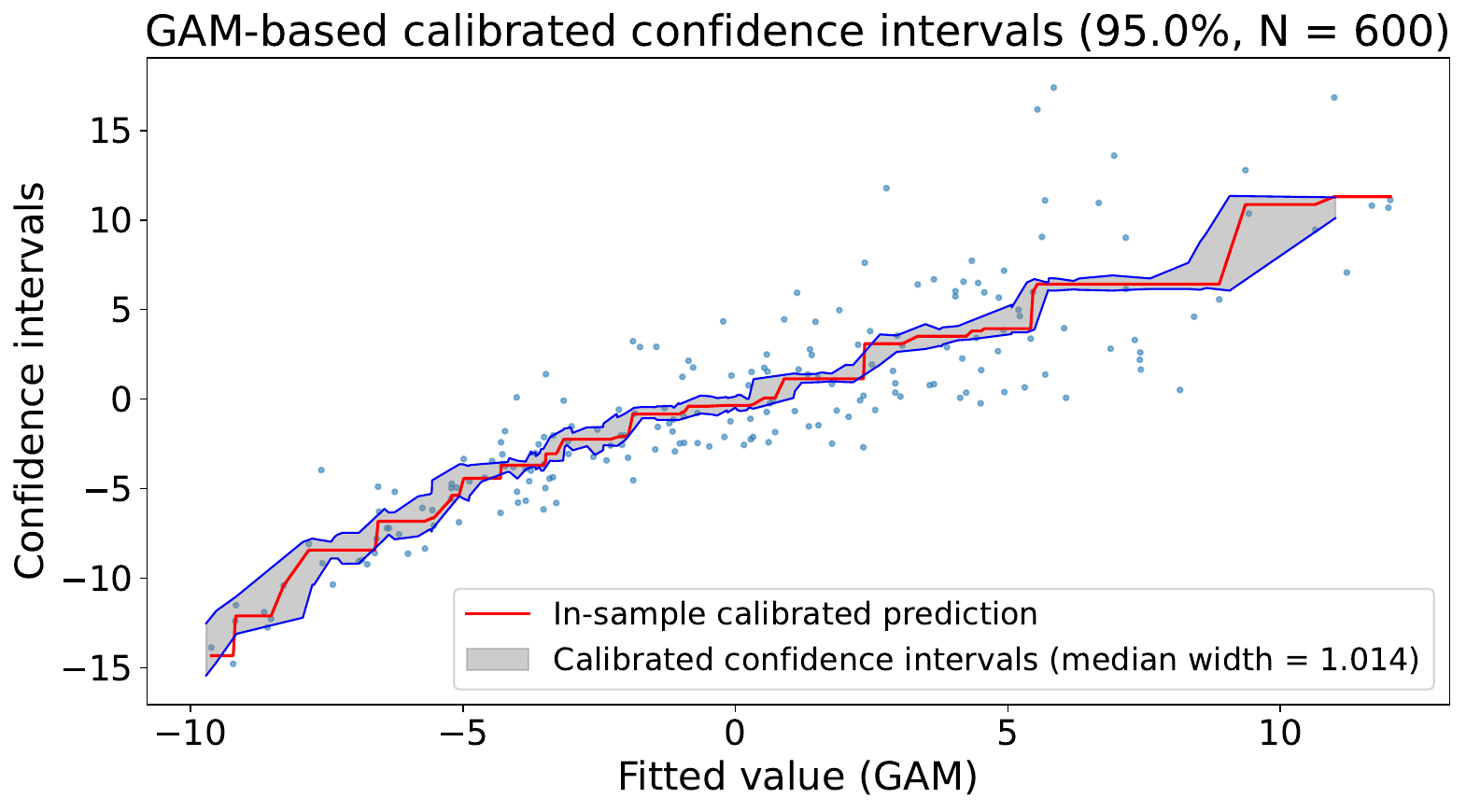}
        \caption{Calibrated confidence intervals computed for multiple test points \(X_{n+1}\).
        The in-sample calibrated prediction of \citet{wuetr_ziegel_23} is shown in red.}
        \label{fig: GAM100_intervals}
    \end{figure}
    
    These confidence intervals enjoy the same validity guarantees as the ones constructed by \citet{Gupta_2020}, but this time in the more general setting of a real-valued response variable \(Y\).
    In practice, they can be computed explicitly, without any approximations, and without cost-intensive algorithms.
    
    All our results come with appropriate theoretical guarantees.
    Calibrated confidence intervals have at least nominal finite-sample coverage and their width shrinks to zero asymptotically.
    We also propose a best-candidate point prediction from these intervals and show that it is consistent for the true conditional expectation \(\mathbb{E}[Y|X]\) under structural assumptions.
    This best candidate relates to the recalibrated model \(\ftilde\) in \citet{wuetr_ziegel_23}.
    Our results show that their proposed recalibrated prediction is also consistent for the conditional mean.

    This is not the first time that Venn-Abers predictors are generalized to the continuous setting.
    Recently, \citet{Vander_Laan_2024} developed conditionally valid prediction sets for \(Y_{n+1}\), i.e. sets \( \widehat{C}_n(X_{n+1})\) such that \(\Prob(Y_{n+1}\in \widehat{C}_n(X_{n+1})\, | \, f(X_{n+1}))\geq 1-\alpha\), where \(f\) is calibrated.
    However, their approach is computationally intensive, and \(\widehat{C}_n(X_{n+1})\) can only be approximated in practice.
    Moreover, our targets are confidence intervals that contain a calibrated prediction \(\ftilde(X_{n+1})\), not prediction sets for \(Y_{n+1}\) itself.

    The work of \citet{Petej_2026}, which was developed concurrently with ours, also extends Venn-Abers predictors to continuous labels.
    However, they only develop intervals with marginal coverage guarantees, which can be expressed as a special case of our confidence intervals.
    Moreover, they do not study the asymptotic behavior of their intervals or the asymptotic properties of recalibrated predictions.

    To construct calibrated confidence intervals, we make use of prediction sets for \(Y_{n+1}\).
    Our method does not put any restrictions on how these prediction sets are generated.
    Nevertheless, conformal prediction offers a theoretically solid, interpretable, and distribution-free uncertainty quantification framework that yields prediction sets with almost exact coverage guarantees.
    Systematic treatments of conformal prediction can be found both in the seminal work of \citet{Vovk_2005} and in the most recent books by \citet{Angelopoulos_2023} and \citet{Angelopoulos_2025}, which also provide insight into more recent developments in the field.
    
    In our applications, we use both the standard version of conformal prediction and conformalized quantile regression \citep{Romano_2019}, which outputs prediction sets with locally adaptive width.
    The prediction sets derived by \citet{Vander_Laan_2024} could also be used, but at a much higher computational cost.

    While we focus on point forecasts in this paper, a large strand of literature has proposed generalizations of \eqref{eq: calibration_binary} to probabilistic forecasts.
    There, the target is the prediction of the entire (conditional) distribution of \(Y\).
    As it turns out, \eqref{eq: calibration_binary} can be generalized in various ways, giving rise to probabilistic calibration \citep{Dawid_1984, Diebold_1998}, threshold calibration \citep{henzi2021isotonic}, marginal calibration \citep{Gneiting_2007} and auto-calibration.
    An extensive analysis of these notions of calibration is given in \citet{Gneiting_2023}.

    Various methods to construct calibrated probabilistic predictions have been proposed.
    Conformal predictive systems, which were introduced by \citet{Vovk_2019}, give rise to probabilistically calibrated predictive distributions.
    Recent works by \citet{Bostrom_2021} and \citet{Allen_2025} used binning and isotonic distributional regression \citep{henzi2021isotonic}, respectively,  to construct predictive distributions with stronger calibration properties. 

    This paper has the following structure: In Section \ref{sec: main_result}, we define \emph{calibrated confidence intervals} and we present our main result about the construction of such intervals.
    In Section \ref{sec: methodology}, we provide methodological details for the implementation of calibrated confidence intervals, and in Section \ref{sec: predictive_performance}, we discuss the predictive performance and the consistency of calibrated predictions.
    We conclude the theoretical part of the paper with Section \ref{sec: interval_width}, where we provide asymptotic results about the width of calibrated confidence intervals.
    Section \ref{sec: simulation} contains an implementation of our methods on a simulated dataset. Section \ref{sec: case_study} presents a case study on third-party motor liability claims that demonstrates the adaptivity of our method to different types of data and unbalanced datasets, such as those often seen in insurance.
    Code and replication material for the simulations and the case study are available at \url{https://github.com/GGavrilos/Calibrated_Intervals}.

    \section{Calibrated confidence intervals}\label{sec: main_result}

    Throughout this paper, we assume that \(X\) is a random covariate vector taking values in a standard Borel space \(\X\) and that \(Y\) is a real-valued random variable.
    Both \(X\) and \(Y\) are defined on a probability space \((\Omega, \F, \Prob)\).
    We denote the joint distribution of \((X,Y)\) by \(\prob_{(X,Y)}\).
    We observe realizations of this random pair, which we divide into a training set \(\S_1 = \{(X_{-i},Y_{-i})\}_{i=0}^r,\)
    and an independent calibration set \(\S_2 = \{(X_i,Y_i)\}_{i=1}^n\).
    We also observe a test covariate point \(X_{n+1}\), but not its label \(Y_{n+1}\).
    We denote by \(\Prob_n\) the empirical distribution of the calibration set and by \(\Prob_{n+1}\) the (unknown) empirical distribution of the augmented set \(\S_2\cup \{(X_{n+1},Y_{n+1})\}\).
    For a given \(y\in \R\), we denote by \(\Prob_{y}\) the empirical distribution of the augmented set \(\S_2\cup\{(X_{n+1},y)\}\).

    We say that a family \(\{Z_i\}_{i=1}^N\) of random variables is exchangeable if, for any fixed permutation \(\sigma\) of \(\{1,\ldots,N\}\), it holds that
    \(\big(Z_1,\ldots,Z_{N}\big) \overset{d}{=} \big(Z_{\sigma(1)},\ldots,Z_{\sigma(N)}\big)\).
    Exchangeability is weaker than the assumption that \(Z_1,\ldots,Z_N\) are iid.
    However, exchangeable points are still identically distributed.
    One important implication of exchangeability is that, conditionally on the empirical measure \(\Prob_{n+1}\), the distribution of each point \((X_i,Y_i)\) is \(\Prob_{n+1}\) itself \citep[Proposition~2.2]{Angelopoulos_2025}.
    In short,
    \begin{equation}
        \label{eq: exchangeability_implication}
        (X_i,Y_i)\, | \, \Prob_{n+1}\sim \Prob_{n+1},\quad \text{for all } i\in \{1,\ldots,n+1\}.
    \end{equation}
    This means that, under exchangeability, points are indistinguishable.
    In the following, we assume that \(\S_2\cup\{(X_{n+1},Y_{n+1})\}\) is exchangeable and independent from \(\S_1\).

    \subsection{In-sample calibration and binning procedures} 

    Out-of-sample calibration is a strong property.
    For binary classification, \citet{Gupta_2020} showed that calibration is impossible without access to \(\prob_{(X,Y)}\).
    As an alternative, they proposed asymptotic calibration, which can be achieved using binning strategies.
    A property related to asymptotic calibration is in-sample calibration, introduced by \citet{wuetr_ziegel_23}.

    \begin{defn}[In-Sample Calibration]
        An \emph{in-sample calibrated procedure} is a (possibly random) mapping $G$ that assigns to each sample \((X_1,Y_1),\ldots,(X_n,Y_n)\in \X\times \R\) a function \(G(\Prob_n,\cdot):\X\to \R\) such that
        \begin{equation}
            \label{eq: in_sample_calibration}
            \Exp_{(X,Y)\sim \Prob_n}\Big[Y -G(\Prob_n, X)\ \Big| \ G(\Prob_n, X)\Big] = 0,\quad \Prob_n\text{-almost surely}. 
        \end{equation}
    \end{defn}

    \begin{rem}
        \label{rem: finite_values}
        In the definition of in-sample calibration, we assume that \((X,Y)\sim \Prob_n\).
        Therefore, in-sample calibration only depends on the values of \(G(\Prob_n, \cdot)\) at \(X_1,\ldots,X_n\).
        Changing any value \(G(\Prob_{n}, x)\) for \(x\notin \{X_1,\ldots,X_n\}\) does not affect the validity of \eqref{eq: in_sample_calibration}.
        Thus, we may assume that \(G(\Prob_{n}, \X)=\{G(\Prob_{n}, X_1),\ldots,G(\Prob_{n}, X_n)\}\).
        Henceforth, when referring to in-sample calibrated procedures, we implicitly exclude procedures \(G\) that do not have this form.
    \end{rem}
    
    A map \(G\) is called a \emph{binning procedure} if it assigns to any sample \(\D = \{(X_i,Y_i)\}_{i=1}^n\) a function \(G_{\D}:\X\to \R\) that takes finitely many values \(\beta_1,\ldots,\beta_s\), which may depend on \(\D\).
    Equivalently, there exists an integer \(s_{\D}\geq 1\) and a \(\D\)-dependent partition \(B_1,\ldots,B_s\) of \(\X\), such that \(G_{\D}(x)=\sum_{k=1}^{s_{\D}} \beta_k \one\{x\in B_k\}\), where \(\beta_1,\ldots,\beta_{s_{\D}}\) are distinct.
    The following proposition shows that in-sample calibrated procedures and binning procedures are very closely related.

    \begin{prop}
        \label{prop: equivalence_in_sample_binning}
        Any in-sample calibrated procedure is a binning procedure.
        Conversely, a binning procedure \(G\) with representation
        \begin{equation}
            \label{eq: binning_procedure}
            G(\Prob_{n},x)=\sum_{k=1}^{s_{\D}} \beta_k \one\{x\in B_k\}
        \end{equation}
        is in-sample calibrated if and only if \(\beta_k = \left(\sum_{i\in D_k} Y_i\right)/(\#D_k)\),
        where \(D_k=\{i\in \{1,\ldots,n\}: X_i\in B_k\}\) and \((X_i,Y_i)\) denotes the \(i\)-th point of the sample \(\D\).
    \end{prop}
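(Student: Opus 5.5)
The first claim will follow from Remark \ref{rem: finite_values}. For an in-sample calibrated procedure \(G\) we may assume \(G(\Prob_n,\X)=\{G(\Prob_n,X_1),\ldots,G(\Prob_n,X_n)\}\). This is a finite set of at most \(n\) values, so \(G(\Prob_n,\cdot)\) takes finitely many values. Listing the distinct ones as \(\beta_1,\ldots,\beta_{s_{\D}}\) and setting \(B_k=G(\Prob_n,\cdot)^{-1}(\{\beta_k\})\) gives a partition of \(\X\) and the representation \eqref{eq: binning_procedure}. Hence \(G\) is a binning procedure.

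For the converse I will compute the in-sample conditional expectation explicitly. Under \((X,Y)\sim\Prob_n\), the pair equals \((X_i,Y_i)\) with probability \(1/n\). The random variable \(G(\Prob_n,X)\) takes the value \(\beta_k\) exactly on the event \(\{X\in B_k\}\), whose \(\Prob_n\)-probability is \(\#D_k/n\). Because the \(\beta_k\) are distinct, the \(\sigma\)-algebra generated by \(G(\Prob_n,X)\) is generated by the finite partition \(\{X\in B_k\}\), \(k=1,\ldots,s_{\D}\). On the atoms with positive probability, that is those with \(D_k\neq\emptyset\), the conditional expectation is the elementary one:
\begin{equation*}
    \Exp_{(X,Y)\sim\Prob_n}\big[Y-G(\Prob_n,X)\,\big|\,G(\Prob_n,X)=\beta_k\big]=\frac{1}{\#D_k}\sum_{i\in D_k}Y_i-\beta_k .
\end{equation*}
Atoms with \(D_k=\emptyset\) are \(\Prob_n\)-null, so they impose no condition in \eqref{eq: in_sample_calibration}. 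Hence \eqref{eq: in_sample_calibration} holds \(\Prob_n\)-almost surely if and only if \(\beta_k=(\sum_{i\in D_k}Y_i)/\#D_k\) for every \(k\) with \(D_k\neq\emptyset\).

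The step needing care is how to treat empty bins, since there the formula for \(\beta_k\) is undefined. Here I will invoke the convention of Remark \ref{rem: finite_values}: in-sample calibrated procedures are taken to have all their values attained at sample points. Under that convention every \(\beta_k\) is some \(G(\Prob_n,X_i)\), so every \(D_k\) is nonempty. The equivalence then holds for all \(k\), and the proof is complete.
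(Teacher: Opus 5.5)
Your proof is correct and follows the paper's argument: the first claim comes from Remark~\ref{rem: finite_values}, and the converse from computing the elementary conditional expectation on each atom $\{G(\Prob_n,X)=\beta_k\}$ under $\Prob_n$. Your explicit treatment of empty bins is a small refinement the paper leaves implicit: such bins are $\Prob_n$-null, they are excluded by the remark's convention in the ``only if'' direction, and in the ``if'' direction they are ruled out because the averages must be defined.
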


    \begin{proof}
        By Remark \ref{rem: finite_values}, it follows that any in-sample calibrated function takes finitely many values, so it is a binning procedure.
        For the converse, the binning procedure \eqref{eq: binning_procedure} is in-sample calibrated if and only if, for all indices \(k\in \{1,\ldots,s_{\D}\}\), it holds that
        \begin{align*}
            \mathbb{E}_{(X,Y)\sim \Prob_n}\Big[Y\ \Big | \  G(\Prob_{n},X)=\beta_k\Big]=\beta_k
            &\Leftrightarrow \frac{\Exp_{(X,Y)\sim \Prob_n}\Big[Y\cdot \one\{G(\Prob_{n},X)=\beta_k\}\Big]}{\Prob_n(G(\Prob_{n},X)=\beta_k)}=\beta_k\\
            &\Leftrightarrow \frac{\frac{1}{n}\sum_{i=1}^n Y_i \one\{G(\Prob_{n},X_i)=\beta_k\}}{\frac{1}{n}\sum_{i=1}^n \one\{G(\Prob_{n},X_i)=\beta_k\}}=\beta_k\\
            & \Leftrightarrow \frac{\sum_{i\in D_k} Y_i}{\# D_k}=\beta_k.\qquad \qedhere
        \end{align*}
    \end{proof}

    Hence, in-sample calibrated procedures are those that partition \(\X\) into bins \(B_1,\ldots,B_{s_{\D}}\) and average out the responses in every bin.
    We now give some common examples.

    \begin{example}
        \label{ex: regression_trees}
        Given \((X_1,Y_1),\ldots,(X_n,Y_n)\in \X\times \R\), where \(\X\subseteq \R^d\), a regression tree \(T:\X\to \R\) partitions \(\X\) into disjoint regions \(R_1,\ldots,R_s\).
        For a given input \(x\in R_k\), the tree outputs
        \begin{equation}
            \label{eq: regression_tree}
            T(x)=\frac{\sum_{i\in D_k} Y_i}{\# D_k}=\vcentcolon r_k,
        \end{equation}
        where \(D_k=\{i\in \{1,\ldots,n\}:X_i\in R_k\}\).
        The partition is chosen in such a way that the sum of squared errors \(\sum_{k=1}^s \sum_{i\in D_k} (Y_i-r_k)^2\) is minimized.
        In practice, it is computationally infeasible to search through all possible partitions, so trees make use of greedy algorithms to find simple partitions.
        In any case, \eqref{eq: regression_tree} is satisfied, so trees are in-sample calibrated.
    \end{example}

    \begin{example}
        \label{ex: isotonic_regression}
        Isotonic regression \citep{Ayer_1955} is relevant when \(\X\) is a subset of a totally ordered space and it is expressed through the minimization problem
        \begin{equation}
            \label{eq: isotonic_regression_minimization}
            \argmin_{T}\sum_{i=1}^n (Y_i-T(X_i))^2\quad \text{over all increasing functions }T:\X\to \R.
        \end{equation}
        Isotonic regression can be generalized to partially ordered spaces, but in this paper we restrict ourselves to the special case of a total order.
        \citet{BarlowBartholomewETAL1972} proved the existence of a solution \(\widehat{T}\), determined by a data-dependent partition \(D_1,\ldots,D_s\) of the set \(\{1,\ldots,n\}\).
        For an index \(j\in D_k\), it holds that
        \begin{equation}
            \label{eq: isotonic_regression_solution}
            \widehat{T}(X_j)=\frac{\sum_{i\in D_k} Y_i}{\# D_k}.
        \end{equation}
        Then, \(\widehat{T}\) can be extended to a right-continuous step function over the entire domain \(\X\), with jumps only at observed design points.
        By \eqref{eq: isotonic_regression_solution}, \(\widehat{T}\) is in-sample calibrated.

        Isotonic regression is going to play a central role in the development of our methods.
        One of its advantages over regression trees is that \eqref{eq: isotonic_regression_minimization} can be solved exactly using the PAV Algorithm \citep{BarlowBartholomewETAL1972}, whose complexity is \(O(n)\) if \(X_1,\ldots,X_n\) are already sorted \citep{Grotzinger_1984}.
        Accelerated versions have also been proposed \citep{Henzi_2022}.

        Another advantage of isotonic regression is its hyperparameter-free nature.
        Other binning methods, such as regression trees, require pruning or hyperparameter tuning.
        $k$-nearest neighbours and clustering methods also rely on hyperparameters related to the model complexity.
    \end{example}

    \subsection{Isotonic calibration}

    Isotonic regression offers in-sample calibration guarantees, but it is not recommended for prediction because it is applicable only when \(\X\subseteq \R\) and its performance relies heavily on the validity of the isotonic shape constraint.
    Nevertheless, it can be used as a post-processing, \emph{recalibration} tool.
    The idea to use isotonic regression for recalibration in binary classification originated in \citet{Zadrozny_2002}, who used it to recalibrate probabilistic predictions made by SVMs, Naive Bayes models and decision trees.
    This method was later refined by \citet{Vovk_2014} and it was generalized to a continuous setting by \citet{wuetr_ziegel_23}.
    
    The main idea is to fit isotonic regression to the sample \((\fhat(X_1),Y_1),\ldots,(\fhat(X_n),Y_n)\),
    where \(\fhat(X)\) is any prediction for the conditional mean \(\Exp[Y|X]\).
    If \(\widehat{T}:\R\to \R\) denotes the isotonic regression fit, then \(\widehat{T}(\fhat(X_i))\) is in-sample calibrated, because the binning procedure \(\widehat{T}\) satisfies the conditions of Proposition \ref{prop: equivalence_in_sample_binning}; see Example \ref{ex: isotonic_regression}.
    This method is known as \emph{isotonic calibration}.
    
    Isotonic calibration combines the calibration properties of isotonic regression with the predictive power of \(\fhat\).
    It relies on the assumption that a flexible model, like a neural network or a boosted tree, can capture the order relationships among \(\Exp[Y| X=x_1], \ldots, \Exp[Y| X=x_n]\), i.e.
    \begin{equation}
        \label{eq: estimating_order}
        \text{if }\fhat(X_i)\leq \fhat(X_j), \text{ then } \Exp[Y| X=X_i]\leq \Exp[Y| X=X_j].
    \end{equation}
    In practice, we would like to use a model \(\fhat\) that satisfies \eqref{eq: estimating_order} for as many pairs \(i,j\in \{1,\ldots,n\}\) as possible, so that isotonic regression is appropriate for the points \(\{(\fhat(X_i),Y_i)\}_{i=1}^n\).

    One advantage of isotonic calibration is that it can be used on top of any predictive model \(\fhat\), which enables its application to a vast variety of domains and data types.
    Additionally, it does not compromise predictive performance.
    As \citet{Vander_Laan_2024} noted, the in-sample performance of \(\widehat{T}(\fhat(X_i))\) is at least as good as that of \(\fhat(X_i)\), as isotonic regression minimizes \(\sum_{i=1}^n (Y_i-T(\fhat(X_i)))^2\) over all increasing functions \(T\), including the identity map.
    
    On the downside, \(\widehat{T}(\fhat(X_{n+1}))\) is only in-sample calibrated.
    This is not sufficient for most practical scenarios, where \((X_{n+1},Y_{n+1})\) is actually sampled from the same distribution as \(\{(X_i,Y_i)\}_{i=1}^n\) and not from their empirical distribution.

    \subsection{Calibrated confidence intervals}\label{subsec: calibrated_intervals}

    In this subsection, we present our main theoretical result.
    We show that, starting from an observed sample \(\{(X_i,Y_i)\}_{i=1}^n\) and \(X_{n+1}\), an in-sample calibrated procedure \(G\), and a prediction set for \(Y_{n+1}\), we can construct a calibrated confidence set, i.e. a set that satisfies \eqref{eq: calibrated_confidence_intervals}.
    This holds under the following assumption:

    \begin{assumption}
        \label{assum: A1}
        The random variables \(Y_1,\ldots,Y_{n+1}\) have a finite first moment.
    \end{assumption}
    The proof of the main theorem can be found in Appendix \ref{sec: appA}.
    
    \begin{thm}\label{thm:I}
    Let \(\S_1\) be a training sample and let \(\{(X_i,Y_i)\}_{i=1}^{n+1}\) be exchangeable and independent from \(\S_1\), with all variables defined on a probability space \((\Omega, \F, \Prob)\).
    Let \(G\) be an in-sample calibrated procedure that depends only on \(\S_1\).
    Denote by \(\Prob_{n+1}\) and \(\Prob_{y}\) the empirical distributions of \((X_1,Y_1),\ldots,(X_{n+1},Y_{n+1})\) and \((X_1,Y_1),\ldots,(X_n,Y_n),(X_{n+1},y)\)
    respectively, where \(y\in \R\).
    Furthermore, let \(I_\alpha\left(X_{n+1}\right)\) be a (random) set that satisfies 
    \begin{equation}\label{eq:Ialpha}
    \Prob\left(Y_{n+1}\in I_\alpha\left(X_{n+1}\right)\, | \, \A \right) \geq 1-\alpha, \quad \text{almost surely},
    \end{equation}
    for some $\sigma$-algebra \(\A\subseteq \F\).
    Set
    \begin{equation}
        \label{eq: C_alpha}
        \C_{n+1,\alpha}=\Big\{G(\Prob_{y},X_{n+1}) : y \in I_\alpha(X_{n+1})\Big\}.
    \end{equation}
    Under Assumption \ref{assum: A1}, it holds that \(G(\Prob_{n+1},X_{n+1})\) is calibrated and that
    \begin{equation}
        \label{eq: C_alpha_guarantee}
        \Prob\Big(G(\Prob_{n+1},X_{n+1}) \in \C_{n+1,\alpha} \, \Big|\, \A\Big) \ge 1-\alpha, \quad \text{ almost surely}.
    \end{equation}
    \end{thm}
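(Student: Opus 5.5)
The proof has two parts: the coverage guarantee \eqref{eq: C_alpha_guarantee}, which is almost immediate, and the out-of-sample calibration of \(G(\Prob_{n+1},X_{n+1})\), which is the real content.

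\emph{Coverage.} Since \(\Prob_{Y_{n+1}}=\Prob_{n+1}\) by definition, the event \(\{Y_{n+1}\in I_\alpha(X_{n+1})\}\) is contained in \(\{G(\Prob_{n+1},X_{n+1})\in \C_{n+1,\alpha}\}\) by \eqref{eq: C_alpha}. Monotonicity of conditional probability together with \eqref{eq:Ialpha} then gives \eqref{eq: C_alpha_guarantee} almost surely. Measurability of this event should be taken for granted, or handled with outer probability, as is standard.

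\emph{Calibration.} Set \(Z=G(\Prob_{n+1},X_{n+1})\); we must show \(\Exp[Y_{n+1}\mid Z]=Z\) almost surely. By Remark \ref{rem: finite_values} and Proposition \ref{prop: equivalence_in_sample_binning}, \(Z\) takes values in the finite set of bin means of the augmented sample. I will prove the stronger statement
\[
\Exp\big[Y_{n+1}-Z \mid \Prob_{n+1}, \S_1, Z\big]=0,
\]
and then use the tower property, since \(Z\) is measurable with respect to the larger \(\sigma\)-algebra generated by \(\Prob_{n+1}\), \(\S_1\) and \(Z\). Here \(G(\Prob_{n+1},\cdot)\) depends on the data only through \(\S_1\) and \(\Prob_{n+1}\), which is what makes the conditioning work.

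Condition first on \(\S_1\). Independence of \(\S_1\) from the exchangeable block keeps the block exchangeable conditionally, so \eqref{eq: exchangeability_implication} yields \((X_{n+1},Y_{n+1})\mid \Prob_{n+1},\S_1\sim \Prob_{n+1}\). Since \(Z=g(X_{n+1})\) with \(g=G(\Prob_{n+1},\cdot)\) fixed given \((\Prob_{n+1},\S_1)\), it follows that for each value \(\beta\),
\[
\Exp\big[(Y_{n+1}-Z)\one\{Z=\beta\}\mid \Prob_{n+1},\S_1\big]=\Exp_{(X,Y)\sim\Prob_{n+1}}\big[(Y-g(X))\one\{g(X)=\beta\}\big].
\]
The right-hand side is zero by in-sample calibration \eqref{eq: in_sample_calibration} applied to the augmented sample of size \(n+1\). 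Because \(Z\) is discrete given \((\Prob_{n+1},\S_1)\), summing these identities over the finitely many \(\beta\) gives the conditional statement displayed above.

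\emph{Main obstacle.} The delicate step is making the conditioning on \(\Prob_{n+1}\) rigorous. It requires a regular conditional distribution, which exists because \(\X\times\R\) is standard Borel, and a finite first moment of \(Y\) (Assumption \ref{assum: A1}) so that every conditional expectation involved is well defined. The randomness of \(G\) also has to be handled; I will assume its auxiliary randomness is part of \(\S_1\) or independent of the block, and absorb it into the conditioning. Finally, the displayed equality above must be checked carefully for the version of the conditional law.
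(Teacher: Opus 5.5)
Your proposal is correct and is essentially the paper's argument. The paper likewise conditions on $\Prob_{n+1}$ and uses the disintegration theorem together with independence of $\S_1$; it integrates $\S_1$ out by Fubini, where you instead condition jointly on $(\Prob_{n+1},\S_1)$. It then uses exchangeability to identify the conditional law of $(X_{n+1},Y_{n+1})$ with $\Prob_{n+1}$, applies in-sample calibration to the augmented sample with test sets $\{G(\Prob_{n+1},X_{n+1})\in B\}$, and obtains coverage from the same event inclusion. The one point the paper makes explicit that you leave implicit is integrability of $Z$: it holds because $Z$ is an average of a subset of $Y_1,\ldots,Y_{n+1}$, so $|Z|\le\sum_{i=1}^{n+1}|Y_i|$.
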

    
    The proof that \(G(\Prob_{n+1},X_{n+1})\) is calibrated generalizes \citet[Theorem~4.1]{Vander_Laan_2024}, who assume the existence of a finite second moment.
    We also provide a different proof idea, which only uses the basic properties of exchangeability, expressed through \eqref{eq: exchangeability_implication}.

    The construction of $\C_{n+1,\alpha}$ necessitates a prediction set $I_\alpha(X_{n+1})$ for $Y_{n+1}$.
    In Subsection \ref{subsec: motivating_prediction_set}, we provide some further insight on the importance of this set, and on possible alternatives that do not make use of it.
    In theory, there is no restriction on how \(I_\alpha(X_{n+1})\) is constructed, but in practice, conformal prediction would be the most reliable method in non-parametric settings.
    Thus, we may think of \(I_\alpha(X_{n+1})\) as a conformal prediction set.
    If \(I_\alpha(X_{n+1})\) has conditional coverage guarantees, these are passed on to $\C_{n+1,\alpha}$.
    Even though covariate-conditional coverage has been shown to be impossible in totally agnostic settings \citep{Barber_2023}, several variants such as Mondrian conformal prediction \citep{Vovk_2005}, conformalized quantile regression \citep{Romano_2019}, and distributional conformal prediction \citep{Chernozhukov_2021} provide certain guarantees in this direction.
    Recently, \citet{Gibbs_2025} and \citet{Duchi_2025} developed conformal prediction sets with stronger conditional coverage guarantees.
    In Remark \ref{rem: CP_Gibbs_2025} we explain how some of these stronger properties also carry over to \(\C_{n+1,\alpha}\).

    The prediction sets of \citet{Vander_Laan_2024}, which offer coverage of \(Y_{n+1}\) conditionally on \(G(\Prob_{n+1},X_{n+1})\), can also be used, but they can only be approximated.
    If the \(Y\)-space is bounded, discretization may offer sufficient approximation accuracy, albeit at a high computational cost, incurred by having to run isotonic calibration for many different values of \(y\).
    As we will show, computing our calibrated confidence set \(\C_{n+1, \alpha}\) does not require iterating over \(y\in I_{\alpha}(X_{n+1})\), which reduces the cost significantly.

    \section{Methodology for calibrated confidence intervals} \label{sec: methodology}

    To construct the calibrated confidence intervals of Theorem \ref{thm:I}, we need a calibration procedure \(G\) and a prediction set \(I_{\alpha}(X_{n+1})\) for \(Y_{n+1}\).
    In this section, we provide methodological details on the choice of \(G\) and on the effects of this choice on the resulting calibrated confidence intervals.

    In Subsections \ref{subsec: histogram_binning} and \ref{subsec: isotonic_calibration_methodology}, we present two different calibration procedures, inspired by Examples \ref{ex: regression_trees} and \ref{ex: isotonic_regression}.
    In Subsections \ref{subsec: motivating_prediction_set} and \ref{subsec: best_candidates}, we provide further methodological insight into calibrated confidence intervals, including comments on the choice of the prediction set \(I_\alpha(X_{n+1})\), and on best-candidate point predictions from the calibrated confidence interval \(\C_{n+1,\alpha}\).

    \subsection{Calibration via histogram binning}\label{subsec: histogram_binning}

    We begin with a purposefully simple binning procedure, which is described by Algorithm \ref{alg: simple_binning}.
    We do not expect it to perform well in practice, because it does not adapt well to the calibration dataset.
    However, it allows for explicit formulation of the calibrated confidence intervals.
    \begin{algorithm}[ht]
        \caption{Calibrated confidence intervals (simple binning)}
        \label{alg: simple_binning}
        \begin{algorithmic}[1]
            \Require Training set \(\S_1\), calibration set \(\S_2\), test point \(X_{n+1}\).
            \State Estimate \(f^\star(x)\vcentcolon = \Exp[Y|X=x]\) by fitting a model \(\fhat\) on \(\S_1\).
            \State Determine a partition \(B_1,\ldots,B_K\) of \(\R\) using \(\S_1\).
            \State Compute \(\fhat(X_i)\) for all indices \(i\in \{1,\ldots,n+1\}\) and determine the bins
            \begin{equation*}
                D_j \vcentcolon = \left\{i\in \{1,\ldots,n+1\}\, | \, \fhat(X_i)\in B_j\right\},\quad j=1,\ldots,K.
            \end{equation*}
            \Ensure Given a compact prediction interval \(I_{\alpha}(X_{n+1})=[\ell_{n+1},h_{n+1}]\), output
            \begin{equation}
                \label{eq: calibrated_interval_binning}
                \C_{n+1,\alpha} = \left[\frac{1}{\# D_{j^\star}}\left(\ell_{n+1} + \sum_{i \in D_{j^\star}\setminus\{n+1\}} Y_i\right), \frac{1}{\# D_{j^\star}}\left(h_{n+1} + \sum_{i \in D_{j^\star}\setminus\{n+1\}} Y_i\right)\right],
            \end{equation}
            where \(j^\star\) is such that \(n+1\in D_{j^\star}\).
        \end{algorithmic}
    \end{algorithm}
    
    The output of this algorithm matches the calibrated confidence interval \eqref{eq: C_alpha} in Theorem \ref{thm:I}.
    Indeed, the relevant in-sample calibration procedure, evaluated at \(X_{n+1}\), is given by
    \begin{equation*}
        G(\Prob_y, X_{n+1})=\frac{1}{\# D_{j^\star}}\left(y + \sum_{i\in D_{j^\star}\setminus \{n+1\}} Y_i\right).
    \end{equation*}
    Therefore, by \eqref{eq: C_alpha}, it follows that \(\left\{\left(y + \sum_{j\in D_{j^\star}\setminus \{n+1\}} Y_j\right)/\# D_{j^\star}\, \middle| \, \ell_{n+1}\leq y\leq h_{n+1}\right\}\) is equal to the output of Algorithm \ref{alg: simple_binning} in \eqref{eq: calibrated_interval_binning}.
    From Theorem \ref{thm:I}, it follows that the calibrated prediction contained in \(\C_{n+1,\alpha}\) with high probability is given by \(G(\Prob_{n+1},X_{n+1})=(1/\# D_{j^\star})\sum_{i\in D_{j^\star}} Y_i\).
    Since \(Y_{n+1}\) is unobserved, we do not have access to this prediction in practice.

    \begin{rem}
        The binning procedure \(G(\Prob_y,\cdot)\) in Algorithm \ref{alg: simple_binning} is the following: given \(x\in \X\), let \(j_x\in \{1,\ldots,K\}\)  be an index such that \(\fhat(x)\in B_{j_x}\).
        Then, we define \(G(\Prob_y,x)\) to be the sample average of the responses with index in \(D_{j_x}\).
        According to Proposition \ref{prop: equivalence_in_sample_binning}, this procedure is in-sample calibrated.
        Therefore, \(\C_{n+1,\alpha}\) inherits the validity properties given by Theorem \ref{thm:I}.
    \end{rem}
    
    Algorithm \ref{alg: simple_binning} provides total freedom in the choice of the partition \(B_1,\ldots,B_K\).
    One possibility is to fit isotonic regression to \(\{(\fhat(X_{-i}),Y_{-i})\}_{i=0}^r\), and use the associated bins, or to partition the covariate space \(\X\) using clustering methods or regression trees.
    All these options suffer from various weaknesses.
    Apart from the fact that the partition \(B_1,\ldots,B_K\) does not adapt to the calibration set, clustering and tree-based models rely additionally on hyperparameters, such as the number of bins, the minimum number of leaves per node, and the maximum tree depth.
    
    The width of \(\C_{n+1,\alpha}\) is equal to \(w(\C_{n+1,\alpha})=(h_{n+1}-\ell_{n+1})/\# D_{j^\star}\).
    This depends both on the width of the prediction set \(I_{\alpha}(X_{n+1})=[\ell_{n+1},h_{n+1}]\) and the size of the bin \(D_{j^\star}\).
    This observation recovers a problem already noted by earlier works, such as \citet{Gupta_2020}, namely that the width of the calibrated prediction interval inflates if the bin \(D_{j^\star}\) contains only a few points.
    As we explain in Section \ref{sec: predictive_performance}, this has an important effect on the convergence properties of the associated calibrated confidence intervals.

    \citet{Gupta_2020} proposed the construction of bins with large numbers of points.
    Although this solves some of the above problems and provides upper bounds for the bin-related calibration error, it still relies on hyperparameters.
    It is also unclear how much the prediction error of the base model \(\fhat\) is compromised under a recalibration based on simple binning.

    \subsection{Calibration via isotonic regression}\label{subsec: isotonic_calibration_methodology}

    We now show how we can construct calibrated confidence intervals using ideas based on isotonic calibration.
    This method is adaptive, hyperparameter-free, and it has reasonable computational complexity.
    It is described in Algorithm \ref{alg: isotonic_recalibration}.
    As we show in Sections \ref{sec: predictive_performance} and \ref{sec: interval_width}, it is also consistent, and it produces calibrated intervals with asymptotically vanishing width.

    \begin{algorithm}
        \caption{Calibrated confidence intervals (isotonic regression)}
        \label{alg: isotonic_recalibration}
        \begin{algorithmic}[1]
            \Require Training set \(\S_1\), calibration set \(\S_2\), test point \(X_{n+1}\).
            \State Estimate \(f^\star(x)\vcentcolon = \Exp[Y|X=x]\) by fitting a model \(\fhat\) on \(\S_1\).
            \State Given a compact prediction interval \(I_{\alpha}(X_{n+1})=[\ell_{n+1},h_{n+1}]\), denote by \(T_{\ell},T_h\) the isotonic regression fits to the augmented sets
            \begin{equation*}
                (\fhat(X_1),Y_1),\ldots,(\fhat(X_{n+1}),\ell_{n+1})\quad \text{and}\quad (\fhat(X_1),Y_1),\ldots,(\fhat(X_{n+1}),h_{n+1}).
            \end{equation*}
            \Ensure The interval
            \begin{equation}
                \label{eq: calibrated_interval_isotonic}
                \C_{n+1,\alpha} = \left[T_{\ell}\left(\fhat(X_{n+1})\right),T_h\left(\fhat(X_{n+1})\right)\right].
            \end{equation}
        \end{algorithmic}
    \end{algorithm}

    To see why the output \(\C_{n+1,\alpha}\) of Algorithm \ref{alg: isotonic_recalibration} is a special case of the calibrated interval of Theorem \ref{thm:I}, notice that the in-sample calibrated procedure used in Algorithm \ref{alg: isotonic_recalibration}, evaluated at \(\fhat(X_{n+1})\), is given by \(G(\Prob_y, X_{n+1}) =\vcentcolon \iso_{n+1}\left[(\fhat(X_1),Y_1),\ldots,(\fhat(X_{n+1}),y)\right]\), where \(\iso_j(\{(w_i,z_i)\}_{i=1}^{k})\) denotes the isotonic regression fit to the points \(\{(w_i,z_i)\}_{i=1}^k\), evaluated at \(w_j\).
    Isotonic regression has the property that, if \(z_1\leq z_1^\prime, \ldots, z_k\leq z_k^\prime\), then
    \begin{equation}
        \label{eq: isotonic_regression_ordering}
        \iso_j((w_1,z_1),\ldots,(w_k,z_k))\leq \iso_j((w_1,z_1^\prime),\ldots,(w_k,z_k^\prime)).
    \end{equation}
    By \eqref{eq: C_alpha}, \eqref{eq: isotonic_regression_ordering} and continuity, it follows that the output of Algorithm \ref{alg: isotonic_recalibration} is equal to
    \begin{equation*}
        \left[T_{\ell}\left(\fhat(X_{n+1})\right),T_h\left(\fhat(X_{n+1})\right)\right] = \left\{\iso_{n+1}\left((\fhat(X_1),Y_1),\ldots,(\fhat(X_{n+1}),y)\right) \, \middle| \, \ell_{n+1}\leq y\leq h_{n+1}\right\},
    \end{equation*}
    exactly as in Theorem \ref{thm:I}.
    The calibrated prediction contained in \(\C_{n+1,\alpha}\) with high probability is  \(G(\Prob_{n+1},X_{n+1})=\iso_{n+1}\left((\fhat(X_1),Y_1),\ldots,(\fhat(X_{n+1}),Y_{n+1})\right)\), which is unknown at prediction time.
    If there are ties among \(\fhat(X_1),\ldots,\fhat(X_{n+1})\), we merge the corresponding points, we replace their labels by their average, and we run Algorithm \ref{alg: isotonic_recalibration} using weighted isotonic regression.
    For simplicity, we assume that there are no ties.

    Expression \eqref{eq: calibrated_interval_isotonic} offers a substantial reduction in computational cost and enables exact computation of the calibrated confidence intervals \(\C_{n+1,\alpha}\).
    Instead of having to fit isotonic regression to \((\fhat(X_1),Y_1),\ldots,(\fhat(X_{n+1}),y)\) for every point \(y\in I_\alpha(X_{n+1})\), it is only necessary to do this twice, once for every endpoint of the prediction set \(I_\alpha(X_{n+1})\).
    This becomes possible due to the natural order induced by isotonic regression, as expressed by \eqref{eq: isotonic_regression_ordering}.
    Without this property, we would have to resort to approximations of \(\C_{n+1,\alpha}\) via a discretization of \(I_\alpha(X_{n+1})\), which would also come with a higher computational complexity.

    \subsection{Importance of the prediction set \(I_\alpha(X_{n+1})\)} \label{subsec: motivating_prediction_set}
    
    In Theorem \ref{thm:I}, we can replace the prediction set \(I_\alpha(X_{n+1})\) by the entire support \(I\) of \(Y\).
    In this case, the theorem would yield the calibrated confidence interval \(\C_{n+1}=\Big\{G(\Prob_y,X_{n+1}):y\in I\Big\},\) which contains a calibrated prediction almost surely.
    For binary classification, this is the main idea behind Venn-Abers predictors \citep{Vovk_2014}.
    More generally, if \(I\) is bounded, \(\C_{n+1}\) provides a good alternative to \(\C_{n+1,\alpha}\), particularly when almost-sure coverage is necessary.
    
    However, if \(I\) is unbounded, the set \(\C_{n+1}\) is also unbounded.
    Indeed, in both algorithms, \(G(\Prob_y, X_{n+1})=(1/\# D)\left(y+\sum_{i\in D\setminus\{n+1\}} Y_i\right)\),
    where \(D\) denotes the bin containing \((\fhat(X_{n+1}),y)\).
    If \(y\) ranges over an unbounded set, \(G(\Prob_y,X_{n+1})\) also takes arbitrarily small and large values.
    The use of the prediction set \(I_\alpha(X_{n+1})\) instead of \(I\) is one of the key elements that differentiates our method from other related works and it offers higher stability and informativeness. 
    
    \subsection{Best candidates} \label{subsec: best_candidates}

    Even though the main goal of this paper is to provide uncertainty quantification around calibrated predictions, decision-making often also relies on point predictions.
    In classical statistics, confidence intervals are centered around existing point predictions.
    Our setting is the opposite: confidence intervals contain a calibrated prediction, but this prediction is inaccessible.
    A natural question that arises is how to gauge this calibrated prediction, using only the available data.
    In this subsection, we answer this question by making some methodologically motivated proposals about how to choose a \emph{best candidate} from the calibrated confidence interval.
    
    For binary classification, \citet{Gupta_2020} show that the midpoint of the confidence interval \(\C_{n+1,\alpha}\), denoted by \(\text{mid}(\C_{n+1,\alpha})\), is approximately calibrated.
    This proposal can naturally be extended beyond binary responses.
    Especially if \(\C_{n+1,\alpha}\) has a small width, the midpoint is a reasonable choice.
    However, it may not be reliable in cases of high uncertainty.

    The recent paper by \citet{Vander_Laan_2024} also searched for a reliable best-candidate prediction.
    Borrowing ideas from \citet{Vovk_2014}, they proposed an adjusted-midpoint solution.
    For our method, this solution would be expressed as
    \begin{equation}
        \label{eq: best_candidate_laan}
        \ftilde_{n+1}(X_{n+1})\vcentcolon = \text{mid}(\C_{n+1,\alpha}) + \frac{T_h\left(\fhat(X_{n+1})\right)-T_{\ell}\left(\fhat(X_{n+1})\right)}{\sup I_\alpha(X_{n+1})-\inf I_\alpha(X_{n+1})}\left(\frac{1}{n}\sum_{i=1}^n Y_i - \text{mid}(\C_{n+1,\alpha})\right).
    \end{equation}
    The terms \(T_\ell, T_h\) are defined in Algorithm \ref{alg: isotonic_recalibration}.
    The idea behind this proposal is to look at the sensitivity of \(\iso_{n+1}\left((\fhat(X_1),Y_1),\ldots,(\fhat(X_{n+1}),y)\right)\) in \(y\).
    If this isotonic regression fit is robust against variations in \(y\), then the set \(\C_{n+1,\alpha}= \left[T_\ell\left(\fhat(X_{n+1})\right), T_h\left(\fhat(X_{n+1})\right)\right],\)
    will have a small width compared to \(I_\alpha(X_{n+1})\).
    Therefore, the midpoint prediction is reliable.
    Accordingly, the multiplicative factor
    \begin{equation*}
        \frac{T_h\left(\fhat(X_{n+1})\right)-T_{\ell}\left(\fhat(X_{n+1})\right)}{\sup I_\alpha(X_{n+1})-\inf I_\alpha(X_{n+1})}
    \end{equation*}
    vanishes, and \(\ftilde_{n+1}(X_{n+1})\approx \text{mid}(\C_{n+1,\alpha})\).
    On the contrary, high sensitivity in \(y\) would indicate that the confidence interval is too wide, and any possible \emph{guess} has a low chance of being close to the calibrated prediction.
    In that case, the value of the multiplicative factor will approximate \(1\), resulting in \(\ftilde_{n+1}(X_{n+1})\) being approximately equal to the marginal average \(\frac{1}{n}\sum_{i=1}^n Y_i\).

    Both midpoint-based candidates suffer from certain weaknesses.
    First, the only property that motivates the use of \(\text{mid}(\C_{n+1,\alpha})\) is that it is \emph{worst-case optimal}, minimizing the maximum distance from other points of \(\C_{n+1,\alpha}\).
    However, this is a generic property of the midpoint, without any case-specific insight.
    Apart from that property, there is no methodological motivation for using the midpoint or midpoint-based best-candidate predictions.

    Another weakness of \eqref{eq: best_candidate_laan} is that, under moderate or high uncertainty, it reduces to the average \(\overline{Y}_n\vcentcolon = \frac{1}{n}\sum_{i=1}^n Y_i\), which may lie far from \(\C_{n+1,\alpha}\).
    \citet{Vander_Laan_2024} touch upon that issue by explaining that \(\overline{Y}_n\) can be replaced by any other reference predictor.

    Our proposed best candidate relies on the in-sample calibrated prediction introduced by \citet{wuetr_ziegel_23}.
    This prediction is the isotonic regression fit \(G(\Prob_n,\cdot)\) to the points \(\{(\fhat(X_i),Y_i)\}_{i=1}^n\).
    By Theorem \ref{thm:I}, the out-of-sample calibrated prediction is given by the isotonic regression fit \(G(\Prob_{n+1},X_{n+1})\) to the full sample \(\{(\fhat(X_i),Y_i)\}_{i=1}^{n+1}\), evaluated at \(X_{n+1}\).
    Adding only one point will likely not change the isotonic regression fit dramatically, so \(G(\Prob_n,\cdot)\) should be approximately equal to \(G(\Prob_{n+1},\cdot)\).
    
    In case we only want to report predictions in \(\C_{n+1,\alpha}\), we can clip \(G(\Prob_n,\cdot)\) at the lower and upper endpoints of that interval.
    This would yield the point prediction
    \begin{equation}
        \label{eq: best_candidate}
        \ftilde_{\text{clip}}(X_{n+1})\vcentcolon = \min\Big\{\sup (\C_{n+1,\alpha}), \max\left\{G(\Prob_n, X_{n+1}), \inf (\C_{n+1,\alpha})\right\} \Big\}.
    \end{equation}
    This proposal is motivated by Proposition \ref{prop: over_under_calibration}, which shows that with high probability, the predictions \(T_\ell(\fhat(X_{n+1})), T_h(\fhat(X_{n+1}))\) are \emph{under-calibrated} and \emph{over-calibrated} respectively.
    
    All candidates proposed in this subsection are motivated by the statistical convention of coupling uncertainty quantification with point predictions and are largely based on heuristic arguments.
    As shown in Section \ref{sec: case_study}, these candidates are not always reliable in practice.

    \section{Consistency of isotonic recalibration} \label{sec: predictive_performance}

    In binary classification, the concept of calibration draws much of its credibility from the fact that the true success probability \(p^\star(x) = \Prob(Y=1|X=x)\) is calibrated.
    This is one of the reasons why calibration is considered important when evaluating forecasts using proper scoring rules \citep{Dawid_1984}.
    This property continues to hold in the continuous setting, for the true conditional mean \(\mu(x)=\Exp[Y|X=x]\).
    Indeed, the \(\sigma\)-algebra generated by \(\mu(X)\) is smaller than the one generated by \(X\), so \(
        \Exp[Y\, |\, \mu(X)]
        = \Exp\Big[\Exp[Y\, |\, X]\, \Big | \, \mu(X)\Big]
        = \Exp[\mu(X)\, |\, \mu(X)]
        = \mu(X)\).

    In this section, we provide insight on the predictive power of calibrated predictions.
    First, we derive non-asymptotic bounds for the prediction error of predictions obtained by Algorithms \ref{alg: simple_binning} and \ref{alg: isotonic_recalibration} under the assumption that \(\fhat\) is equal to the true conditional mean of \(Y\) given \(X\).
    In the second part, we show that, under structural assumptions on the base model \(\fhat\), calibrated predictions based on isotonic regression are consistent for the true conditional mean.
    
    \begin{lem}
        \label{lem: predictive_power_variance}
        Let \((X,Y)\in \X\times \R\) be a random pair such that \(Y\in L^2\).
        For all \(x\in \X\), denote the conditional expectation \(\Exp[Y\, | \, X=x]\) by \(\mu(x)\).
        If \(\ftilde(X)\) is a calibrated prediction for \(Y\), then
        \begin{equation}
            \label{eq: predictive_power_variance}
            \Exp\Big[(Y-\mu(X))^2\Big] + \mathrm{Var}(\mu(X)) = \Exp\Big[(Y-\ftilde(X))^2\Big] + \mathrm{Var}(\ftilde(X)).
        \end{equation}
    \end{lem}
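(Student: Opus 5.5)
Both sides of \eqref{eq: predictive_power_variance} equal \(\Exp[Y^2]-(\Exp[Y])^2=\mathrm{Var}(Y)\), since both \(\mu(X)\) and \(\ftilde(X)\) are calibrated predictions for \(Y\). So the plan is to prove a single identity: for any calibrated prediction \(g(X)\) of \(Y\) with \(Y\in L^2\),
\begin{equation*}
    \Exp\big[(Y-g(X))^2\big]+\mathrm{Var}(g(X))=\mathrm{Var}(Y).
\end{equation*}
Then I apply it once with \(g=\mu\) and once with \(g=\ftilde\). The first application is legitimate because \(\mu(X)\) is calibrated, as shown at the beginning of Section \ref{sec: predictive_performance}.

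\textbf{Step 1 (integrability).} I first check that \(g(X)\in L^2\). Calibration means \(g(X)=\Exp[Y\,|\,g(X)]\) almost surely, and conditional expectation is a contraction on \(L^2\). Hence \(\Exp[g(X)^2]\le\Exp[Y^2]<\infty\), and all the terms in the identity are finite.

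\textbf{Step 2 (the two key facts).} By the tower property, \(\Exp[g(X)]=\Exp[\Exp[Y\,|\,g(X)]]=\Exp[Y]\). Conditioning on \(g(X)\) also gives the orthogonality relation
\begin{equation*}
    \Exp\big[(Y-g(X))\,g(X)\big]=\Exp\big[g(X)\,\Exp[Y-g(X)\,|\,g(X)]\big]=0.
\end{equation*}
Pulling \(g(X)\) out of the inner conditional expectation is allowed because it is \(\sigma(g(X))\)-measurable and square-integrable by Step 1.

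\textbf{Step 3 (expansion).} From orthogonality, \(\Exp[Y g(X)]=\Exp[g(X)^2]\). Expanding the square then gives
\begin{equation*}
    \Exp\big[(Y-g(X))^2\big]=\Exp[Y^2]-2\Exp[g(X)^2]+\Exp[g(X)^2]=\Exp[Y^2]-\Exp[g(X)^2].
\end{equation*}
Using \(\Exp[g(X)]=\Exp[Y]\), we have \(\mathrm{Var}(g(X))=\Exp[g(X)^2]-(\Exp[Y])^2\). Adding the two expressions yields \(\Exp[Y^2]-(\Exp[Y])^2=\mathrm{Var}(Y)\), which is independent of \(g\). Applying this to \(g=\mu\) and \(g=\ftilde\) proves \eqref{eq: predictive_power_variance}.

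There is no real obstacle. The only point needing care is Step 1: square-integrability of \(\ftilde(X)\) has to be derived from calibration rather than assumed, so that the cross term and the variance are well defined.
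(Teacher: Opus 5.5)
Your proof is correct. It uses the same two consequences of calibration as the paper, namely $\Exp[g(X)]=\Exp[Y]$ and $\Exp[Y g(X)]=\Exp[g(X)^2]$, but organizes them differently.

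The paper inserts $\mu(X)$, writing $Y-\ftilde(X)=(Y-\mu(X))+(\mu(X)-\ftilde(X))$, and expands the square. It removes $\Exp[(Y-\mu(X))\mu(X)]$ using calibration of $\mu(X)$ and absorbs the remaining cross term into $\Exp[Y\ftilde(X)]=\Exp[\ftilde(X)^2]$. This gives $\Exp[(Y-\ftilde(X))^2]+\Exp[\ftilde(X)^2]=\Exp[(Y-\mu(X))^2]+\Exp[\mu(X)^2]$, after which it subtracts $(\Exp[Y])^2$ from both sides.

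You instead prove the one-variable identity $\Exp[(Y-g(X))^2]+\Exp[g(X)^2]=\Exp[Y^2]$ for an arbitrary calibrated $g(X)$ and apply it twice, so no mixed term between $\mu$ and $\ftilde$ ever arises. This is shorter and symmetric. It identifies the common value of both sides as $\mathrm{Var}(Y)$, so $\Exp[(Y-\ftilde(X))^2]=\mathrm{Var}(Y)-\mathrm{Var}(\ftilde(X))$. The paper's remark that good calibrated predictions need high variance then becomes immediate. It also makes clear that $\mu$ enters only as one particular calibrated prediction. The paper's version keeps $\mu$ as a reference point and is equivalent, but slightly more roundabout.

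Your Step 1 makes explicit a point the paper leaves implicit. Only $Y\in L^2$ is assumed, and square-integrability of $\ftilde(X)$ is needed for the cross terms and for $\mathrm{Var}(\ftilde(X))$; it does follow from calibration by conditional Jensen.

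Finally, your argument, like the paper's, uses only $\Exp[Y\mid \ftilde(X)]=\ftilde(X)$ and never that $\ftilde(X)$ is $\sigma(X)$-measurable. This is exactly what is needed later, when the lemma is applied to $\mathsf{c}(f^\star(X_{n+1}))$, which depends on $Y_{n+1}$.
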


    Lemma \ref{lem: predictive_power_variance} shows that the expected squared error of calibrated predictions depends on their complexity (variance). 
    For example, the unconditional mean $\ftilde(X) = \Exp[Y]$ is trivially a calibrated prediction, but since its variance is zero, its prediction error on the right-hand side of \eqref{eq: predictive_power_variance} may be much larger than our target MSE on the left-hand side.
    In other words, good calibrated predictions should have higher complexity, matching that of \(\mu(X)\).
    The proof of Lemma \ref{lem: predictive_power_variance} can be found in Appendix \ref{sec: app_Proofs_4}.

    \subsection{Predictive power of calibrated predictions}
    \label{subsec: predictive_power}

    Algorithms \ref{alg: simple_binning}, \ref{alg: isotonic_recalibration} are applied on top of a prediction model \(\fhat\) and yield (unknown) calibrated predictions, denoted by \(G(\Prob_{n+1},X_{n+1})\).
    A natural question that could be posed is how the prediction error of those calibrated predictions compares to that of the original model \(\fhat\).
    Ideally, recalibration should not cause a decline in predictive performance.
    
    In this subsection we compare the two recalibration algorithms and we show that Algorithm \ref{alg: isotonic_recalibration} does not compromise predictive performance asymptotically.
    On the contrary, the performance of Algorithm \ref{alg: simple_binning} can vary, depending on the binning scheme.
    The proofs of all results can be found in Appendix \ref{sec: app_Proofs_4}.
    We make the following assumption:
    
    \begin{assumption}
        \label{assum: pred_performance_iid}
        For all \(i\in [n+1]\), it holds that \(Y_i = f^\star(X_i)+\varepsilon_i\).
        The noise terms \(\varepsilon_1,\ldots,\varepsilon_{n+1}\) are iid, independent from \(\S_1, \{X_i\}_{i=1}^{n+1}\), and they satisfy \(\Exp[\varepsilon_i]=0\), \(\mathrm{Var}(\varepsilon_i)=\vcentcolon \sigma_\varepsilon^2<\infty\) for all \(i\in [n+1]\).
    \end{assumption}
    
    The intention of this assumption is to present here a worst-case analysis of isotonic recalibration.
    We will show that, when \(\fhat=f^\star\), in which case it would be unnecessary to recalibrate \(\fhat\), Algorithm \ref{alg: isotonic_recalibration} outputs predictions with an asymptotically vanishing excess risk.
    For Algorithm \ref{alg: simple_binning} (simple binning), we can only infer a nonvanishing upper bound for the excess risk of \(\mathsf{c}(\fhat(X_{n+1}))\).
    In Appendix \ref{sec: app_Proofs_4}, we provide an example showing that this bound is tight, so the excess risk does not necessarily vanish asymptotically.
    
    \begin{prop}
        \label{prop: pred_performance_binning}
        Consider the setting of Algorithm \ref{alg: simple_binning} and let \(k\) denote the number of occupied bins.
        Assume that \(\fhat = f^\star\) and that \(X_1, \ldots,X_{n+1}\) are iid.
        If \(\mathrm{Var}(f^\star(X)) = \vcentcolon\sigma_{f^\star}^2<\infty\) and \(\mathsf{c}(f^\star(X_{n+1})) \vcentcolon = G(\Prob_{n+1},X_{n+1})=(1/\#D_{j^\star})\sum_{i \in D_{j^\star}} Y_i\), then under Assumption \ref{assum: pred_performance_iid},  it holds that
        \begin{equation}\label{eq: pred_loss_binning}
            \Exp\Big[\left(Y_{n+1}-\mathsf{c}(f^\star(X_{n+1}))\right)^2\Big]-\Exp\Big[(Y_{n+1}-f^\star(X_{n+1}))^2\Big]\leq \frac{n}{n+1}\sigma_{f^\star}^2-\frac{\Exp[k]}{n+1}\sigma_\varepsilon^2.
        \end{equation}
    \end{prop}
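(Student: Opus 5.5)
We condition on everything that determines the bins, namely \(\S_1\) (hence the partition \(B_1,\ldots,B_K\)) and \(X_1,\ldots,X_{n+1}\). Given this information the bins \(D_1,\ldots,D_K\) are fixed, as are \(j^\star\) and \(N\vcentcolon=\#D_{j^\star}\). The only remaining randomness is in the noise terms, which by Assumption \ref{assum: pred_performance_iid} are iid and independent of the conditioning variables. We write \(\bar f\vcentcolon=(1/N)\sum_{i\in D_{j^\star}} f^\star(X_i)\) and \(\bar\varepsilon\vcentcolon=(1/N)\sum_{i\in D_{j^\star}}\varepsilon_i\). Then \(Y_{n+1}-\mathsf{c}(f^\star(X_{n+1})) = (f^\star(X_{n+1})-\bar f) + (\varepsilon_{n+1}-\bar\varepsilon)\). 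The first bracket is measurable with respect to the conditioning, and the second has conditional mean zero. The cross term therefore vanishes, and
\[
\Exp\big[(\varepsilon_{n+1}-\bar\varepsilon)^2\big] = \sigma_\varepsilon^2\Big(1-\tfrac{2}{N}+\tfrac{1}{N}\Big)=\sigma_\varepsilon^2\Big(1-\tfrac1N\Big).
\]
Since \(\Exp[(Y_{n+1}-f^\star(X_{n+1}))^2]=\sigma_\varepsilon^2\), the conditional excess risk equals \((f^\star(X_{n+1})-\bar f)^2-\sigma_\varepsilon^2/N\).

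Next we use exchangeability of the indices to turn both terms into bin-level sums. Because the \(X_i\) are iid and \(\fhat=f^\star\) is fixed given \(\S_1\), the points \(1,\ldots,n+1\) are exchangeable given \(\S_1\), even jointly with the bin structure. The test index is therefore uniformly distributed over \(\{1,\ldots,n+1\}\) in distribution. For a fixed configuration this gives
\[
\Exp\Big[\tfrac{1}{N}\Big] = \Exp\Big[\tfrac{1}{n+1}\sum_{i=1}^{n+1}\tfrac{1}{\#D_{j(i)}}\Big] = \Exp\Big[\tfrac{k}{n+1}\Big],
\]
since \(\sum_i 1/\#D_{j(i)}\) counts each occupied bin exactly once. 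Similarly,
\[
\Exp\big[(f^\star(X_{n+1})-\bar f)^2\big] = \frac{1}{n+1}\,\Exp\Big[\sum_{j}\sum_{i\in D_j}\big(f^\star(X_i)-\bar f_j\big)^2\Big],
\]
the average within-bin sum of squares.

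Finally, the within-bin sum of squares is at most the total sum of squares around the overall mean, \(\sum_{i=1}^{n+1}(f^\star(X_i)-\bar f_{\mathrm{all}})^2\), because replacing bin means by a common constant can only increase the squared deviations. For iid \(X_i\) the expected total sum of squares equals \(n\,\sigma_{f^\star}^2\). Dividing by \(n+1\) gives the first term of \eqref{eq: pred_loss_binning}, and combining with the noise term gives the claimed bound.

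The main obstacle is the exchangeability step. We must justify that, conditional on \(\S_1\), the bin sizes and the identity of the bin containing \(n+1\) can be symmetrized. This works because the partition \(B_1,\ldots,B_K\) depends only on \(\S_1\) and is therefore fixed, and the \(X_i\) are iid given \(\S_1\). We also need to handle conditioning on the noise versus on the covariates with some care; doing the noise calculation first, conditionally on all covariates, keeps this clean.
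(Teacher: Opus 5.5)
Your proof is correct, but it takes a different route from the paper's.

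\textbf{The paper's route.} The paper goes through Lemma \ref{lem: predictive_power_variance}. Since $\mathsf{c}(f^\star(X_{n+1}))$ is calibrated (Theorem \ref{thm:I}) and has the same mean as $f^\star(X_{n+1})$, the excess risk equals $\Exp[f^\star(X_{n+1})^2]-\Exp[\mathsf{c}(f^\star(X_{n+1}))^2]$. The paper then computes the second moment of $\mathsf{c}$ by exchangeability and by conditioning on $(\S_1,X_1,\ldots,X_{n+1})$. This yields the noise contribution $\Exp[k]\sigma_\varepsilon^2/(n+1)$, and the signal part is bounded below by $\sum_j \#D_j\bar f_j^2\ge (n+1)\bar f_{\mathrm{all}}^2$ via Cauchy--Schwarz.

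\textbf{Your route.} You decompose $Y_{n+1}-\mathsf{c}(f^\star(X_{n+1}))$ directly into a covariate part and a noise part. You never invoke calibration or the variance identity.

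\textbf{Same core inequality.} Write $\mathrm{TSS}=\mathrm{WSS}+\mathrm{BSS}$ for the total, within-bin and between-bin sums of squares of the $f^\star(X_i)$. The paper's Cauchy--Schwarz step is exactly $\mathrm{BSS}\ge 0$, and your step is $\mathrm{WSS}\le \mathrm{TSS}$, which is the same fact.

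\textbf{What each approach buys.} Your version is self-contained and more transparent. It gives the exact identity that the excess risk equals $\bigl(\Exp[\mathrm{WSS}]-\Exp[k]\sigma_\varepsilon^2\bigr)/(n+1)$. The only slack is then the discarded $\Exp[\mathrm{BSS}]$, so the bound is attained precisely when the bin means of $f^\star$ coincide almost surely. This makes the paper's single-bin tightness example immediate. The paper's version instead ties the result to the general principle that the excess risk of a calibrated prediction is the variance it loses relative to $\mu(X)$. That is the viewpoint reused in the proof of Proposition \ref{prop: pred_performance_isotonic}, where no explicit bin-level formula is available.
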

    
    The number of occupied bins \(k\) depends on the training set and on \(X_1,\ldots,X_{n+1}\).
    In the above proposition we omit this dependence for the sake of a lighter notation.
    
    Equation \eqref{eq: pred_loss_binning} provides an upper bound on the excess risk of \(\mathsf{c}(f^\star(X_{n+1}))\).
    When \(\Exp[k]\) is large, the excess risk can become negative.
    In particular, this happens if
    \begin{equation*}
        \frac{\Exp[k]}{n+1}\sigma_\varepsilon^2-\frac{n}{n+1}\sigma_{f^\star}^2>0\Longleftrightarrow \frac{\Exp[k]}{n}>\frac{\sigma_{f^\star}^2}{\sigma_{\varepsilon}^2}.
    \end{equation*}
    This may seem counterintuitive, but it is explained by the fact that \(\mathsf{c}(f^\star(X_{n+1}))\) is unobserved and it has access to the true response \(Y_{n+1}\), while \(f^\star(X_{n+1})\) does not.
    Especially if there are many small bins, in which case \(D_{j^\star}\) contains very few points, the true, unobserved value \(Y_{n+1}\) has a large influence on \(\mathsf{c}(f^\star(X_{n+1}))\).
    In the extreme case where \(D_{j^\star}=\{n+1\}\), it holds that \(\mathsf{c}(f^\star(X_{n+1}))=Y_{n+1}\).
    However, if \(\#D_{j^\star}\) is too small, \eqref{eq: calibrated_interval_binning} suggests that \(\C_{n+1,\alpha}\) will be very wide and uninformative, so it is generally not advised to enforce small bin sizes.

    The necessity to tune \(k\) and find the right tradeoff between predictive performance and informative calibrated confidence intervals is one of the main drawbacks of Algorithm \ref{alg: simple_binning}.
    On the contrary, Algorithm \ref{alg: isotonic_recalibration} provides automatic bin selection.
    As we show in the following proposition, it also provides asymptotic guarantees about the excess risk of \(\mathsf{c}(\fhat(X_{n+1}))\).

    \begin{prop}
        \label{prop: pred_performance_isotonic}
        Consider the setting of Algorithm \ref{alg: isotonic_recalibration} and assume that \(\fhat = f^\star\).
        Suppose that \(\mathrm{Var}(f^\star(X))=\vcentcolon \sigma_{f^\star}^2<\infty\) and set
        \begin{equation*}
            \mathsf{c}(f^\star(X_{n+1}))\vcentcolon = G(\Prob_{n+1},X_{n+1}) = \iso_{n+1}\left((f^\star(X_1),Y_1),\ldots,(f^\star(X_{n+1}),Y_{n+1})\right).
        \end{equation*}
        Under Assumption \ref{assum: pred_performance_iid}, there exists a constant \(C_{f^\star,\varepsilon}\) depending on \(\sigma_{f^\star},\sigma_{\varepsilon}\) such that
        \begin{align}
            \label{eq: pred_loss_isotonic}
             \Big|\Exp\Big[(Y_{n+1}-\mathsf{c}
             &(f^\star(X_{n+1})))^2\Big] - \Exp\Big[(Y_{n+1}-f^\star(X_{n+1}))^2\Big]\Big|\nonumber \\
             & \leq C_{f^\star,\varepsilon}\left[\Exp\left(\frac{\sigma_{\varepsilon}^2 \Big[\max_i f^\star(X_i)-\min_i f^\star(X_i)\Big]}{n+1}\right)^{2/3}+\frac{\sigma_{\varepsilon}^2\log (n+1)}{n+1}\right]^{1/2}.
        \end{align}
    \end{prop}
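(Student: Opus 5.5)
Write \(f_i\vcentcolon= f^\star(X_i)\), \(\mathsf{c}_i\) for the isotonic fit at \(f_i\) based on all \(n+1\) points, and set \(\Delta\vcentcolon=\mathsf{c}(f^\star(X_{n+1}))-f^\star(X_{n+1})=\mathsf{c}_{n+1}-f_{n+1}\). Expanding the square gives
\begin{equation*}
    (Y_{n+1}-\mathsf{c}_{n+1})^2-(Y_{n+1}-f_{n+1})^2 = \Delta^2 - 2\varepsilon_{n+1}\Delta .
\end{equation*}
By the Cauchy--Schwarz inequality,
\begin{equation*}
    \bigl|\Exp[\Delta^2-2\varepsilon_{n+1}\Delta]\bigr|\leq \Exp[\Delta^2]+2\sigma_\varepsilon\,\Exp[\Delta^2]^{1/2}.
\end{equation*}
So the whole proof reduces to bounding the risk \(\Exp[\Delta^2]\) of isotonic regression at the test point. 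The square root in \eqref{eq: pred_loss_isotonic} comes from the cross term; the \(\Exp[\Delta^2]\) term is of smaller order once this risk is bounded, and can be absorbed into the constant using \(\sigma_{f^\star}\) and \(\sigma_\varepsilon\).

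First I would use exchangeability to move from the single point to the average. The map \((f_i,Y_i)_{i\le n+1}\mapsto(\mathsf{c}_i)\) is permutation equivariant, and under Assumption \ref{assum: pred_performance_iid} the pairs \((X_i,\varepsilon_i)\) are exchangeable (they are iid given \(\S_1\)). Hence
\begin{equation*}
    \Exp[\Delta^2]=\Exp\Bigl[\tfrac{1}{n+1}\textstyle\sum_{i=1}^{n+1}(\mathsf{c}_i-f_i)^2\Bigr].
\end{equation*}
This is the in-sample risk of isotonic regression with design points \(f_1,\dots,f_{n+1}\). The regression function is the identity restricted to these points, which is increasing. Conditionally on \(\S_1\) and \(X_1,\dots,X_{n+1}\), the noise is iid with mean zero and variance \(\sigma_\varepsilon^2\), because it is independent of the covariates.

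Next I would invoke the standard oracle risk bound for isotonic regression with fixed design and homoscedastic noise (Zhang 2002; Chatterjee, Guntuboyina and Sen 2015). For an increasing signal \(\theta\in\R^{m}\) with total variation \(V=\theta_{(m)}-\theta_{(1)}\), it states
\begin{equation*}
    \frac1m\Exp\|\widehat\theta-\theta\|^2\lesssim \Bigl(\frac{\sigma^2 V}{m}\Bigr)^{2/3}+\frac{\sigma^2\log m}{m}.
\end{equation*}
I would apply this conditionally with \(m=n+1\), \(\theta_i=f_i\) sorted, and \(V=\max_i f^\star(X_i)-\min_i f^\star(X_i)\). Taking expectations over the covariates gives \(\Exp[\Delta^2]\lesssim \Exp[(\sigma_\varepsilon^2V/(n+1))^{2/3}]+\sigma_\varepsilon^2\log(n+1)/(n+1)\), which is the bracket in \eqref{eq: pred_loss_isotonic}. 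Ties among the \(f_i\) are handled by weighted isotonic regression and only help.

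\textbf{Main obstacle.} The Zhang-type bound is usually stated for Gaussian noise or noise with finite variance, and with the \(\sigma^2 V\) scaling. I need the version requiring only a finite second moment, since Assumption \ref{assum: pred_performance_iid} gives nothing more. Zhang's original result does hold under finite variance via a chaining argument on the min--max representation of the fit, and I would cite it in that form.

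Two smaller points remain. I must check that the equivariance-based averaging is legitimate although \(\mathsf{c}_{n+1}\) depends on \(Y_{n+1}\); it is, because the averaging is over the joint law of all \(n+1\) points. I must also check that \(\Exp[\Delta^2]\) is finite so the Cauchy--Schwarz step is valid. This follows from \(\frac1m\sum_i(\mathsf{c}_i-f_i)^2\le \frac{4}{m}\sum_i (Y_i-f_i)^2+\ldots\), or more simply from \(\|\widehat\theta-\theta\|\le\|Y-\theta\|\) by projection onto a convex cone containing \(\theta\), using \(\sigma_{f^\star},\sigma_\varepsilon<\infty\).
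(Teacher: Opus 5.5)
Your proof is correct, but it takes a different route from the paper's. The paper first uses Lemma \ref{lem: predictive_power_variance}, which needs the calibration of \(\mathsf{c}(f^\star(X_{n+1}))\) from Theorem \ref{thm:I}. This rewrites the excess risk exactly as \(\mathrm{Var}(f^\star(X_{n+1}))-\mathrm{Var}(\mathsf{c}(f^\star(X_{n+1})))\). The paper then writes this as \(\Exp[(f_{n+1}-\mathsf{c}_{n+1})(f_{n+1}+\mathsf{c}_{n+1})]\), centres both factors and applies Cauchy--Schwarz. It needs the separate Lemma \ref{lem: variance_isotonic_regression} to bound \(\mathrm{Var}(\mathsf{c}_{n+1})\le\sigma_{f^\star}^2+\sigma_\varepsilon^2\), which gives the factor \(\sigma_{f^\star}+(\sigma_{f^\star}^2+\sigma_\varepsilon^2)^{1/2}\) in front of \(\Exp[\Delta^2]^{1/2}\). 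From there on it does what you do: exchangeability to pass to the in-sample average, then Zhang's Theorem 2.2 conditionally on the design.

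Your identity \((Y_{n+1}-\mathsf{c}_{n+1})^2-(Y_{n+1}-f_{n+1})^2=\Delta^2-2\varepsilon_{n+1}\Delta\) bypasses calibration and both lemmas. It also yields a constant depending on \(\sigma_\varepsilon\) alone; the hypothesis \(\mathrm{Var}(f^\star(X))<\infty\) is never used. What the paper's route buys is an exact representation of the excess risk as a difference of variances. That connects to the complexity discussion after Lemma \ref{lem: predictive_power_variance} and mirrors the proof of Proposition \ref{prop: pred_performance_binning}.

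One step needs to be made precise. Calling \(\Exp[\Delta^2]\) ``of smaller order'' is an asymptotic remark, but the proposition asserts a single constant valid for every \(n\). Let \(R\) denote the bracket on the right-hand side. \(R\) can exceed \(1\) (small \(n\), large range of \(f^\star\)), and then \(\Exp[\Delta^2]\lesssim R\) does not give \(\Exp[\Delta^2]\lesssim R^{1/2}\).

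The fix is the projection argument from your last paragraph. The vector \((f_i)_i\) lies in the isotonic cone, so non-expansiveness of the projection gives \(\sum_i(\mathsf{c}_i-f_i)^2\le\sum_i\varepsilon_i^2\). By exchangeability, \(\Exp[\Delta^2]\le\sigma_\varepsilon^2\), hence \(\Exp[\Delta^2]\le\sigma_\varepsilon\Exp[\Delta^2]^{1/2}\). The excess risk is therefore at most \(3\sigma_\varepsilon\Exp[\Delta^2]^{1/2}\). Use that bound for this purpose, not only to show finiteness.

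Two minor remarks. In the paper's setting, the pairs \((X_i,\varepsilon_i)\) are exchangeable given \(\S_1\), not necessarily iid; exchangeability is all your averaging step requires. The finite-variance concern you flag adds nothing beyond the paper's own argument, since the paper applies the same theorem of Zhang under the same Assumption \ref{assum: pred_performance_iid}.
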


    \begin{rem}
        If \(\Exp[\max_i f^\star(X_i) - \min_i f^\star(X_i)]=o(n)\), the above proposition implies that the excess risk vanishes asymptotically.
        In particular, if the distribution of \(\{f^\star(X_i)\}_{i=1}^{n+1}\) has a bounded support, we obtain a convergence rate equivalent to \(n^{-1/3}\).
        The condition \(\Exp[\max_i f^\star(X_i) - \min_i f^\star(X_i)]=o(n)\) actually holds for all distributions with a finite second moment.
        \citet{Arnold_1979} showed that the \(k\)-th order statistic \(X_{(k)}\) of \(n\) (not necessarily independent) identically distributed random variables with mean \(\mu\) and variance \(\sigma^2\) is such that
        \begin{equation*}
            \mu-\sigma\left(\frac{n-k}{k}\right)^{1/2}\leq \Exp\left[X_{(k)}\right]\leq \mu+\sigma\left(\frac{k-1}{n-k+1}\right)^{1/2}.
        \end{equation*}
        This inequality yields \(\Exp\Big[\max_i f^\star(X_i) - \min_i f^\star(X_i)\Big] \leq 2\sigma_{f^\star}\sqrt{n+1}\),
        so, for any distribution with finite second moment, the loss in predictive performance vanishes asymptotically.
        This bound yields a rate of convergence equivalent to \(n^{-1/6}\).
    \end{rem}

    \subsection{Consistency of isotonic recalibration}
    \label{subsec: consistency}

    The results in the previous subsection provide non-asymptotic upper bounds for the prediction error of the calibrated prediction \(\mathsf{c}(\fhat(X_{n+1}))\) obtained by isotonic recalibration (Algorithm \ref{alg: isotonic_recalibration}).
    In this section, we show that \(\mathsf{c}(\fhat(X_{n+1}))\) is asymptotically consistent for the true conditional mean \(\Exp[Y_{n+1}|X_{n+1}]\), under structural assumptions on the base model \(\fhat\).
    
    Due to the asymptotic nature of the results of this section, we denote by \(\fhat_r\) the base model trained on a sample of size \(r\), and by \(\mathsf{c}_n(\fhat_r(X_{n+1}))\) the calibrated prediction, based on a calibration set of size \(n\).
    We make the following assumptions:

    \begin{assumption}\label{assum: consistency_iid}
    The random pairs \(\{(X_i,Y_i)\}_{i=1}^{\infty}\) are independent and identically distributed.
    \end{assumption}

    \begin{assumption}
    \label{assum: consistency_bounded_mu} The random variables \(\mu(X_{i})=\Exp[Y_i\, |\, X_i]\) are supported on a bounded interval \(I\) and they admit a density with respect to the Lebesgue measure, which is bounded from below by \(C_1>0\).
    \end{assumption}

    \begin{assumption}
    \label{assum: consistency_fhat} There exists a constant \(C_0>0\) and a strictly increasing function \(T:\R\to \R\) such that
    \begin{equation*}
        \lim_{r\to \infty}\Prob\left(\sup_{x\in \X}\left|T\Big(\fhat_r(x)\Big)-\mu(x)\right|\geq C_0\left(\frac{\log r}{r}\right)^{1/3}\right)=0.
    \end{equation*}
    \end{assumption}

    \begin{assumption}
    \label{assum: consistency_subgaussian}
    For all \(i\geq 1\), the variable \(Y_i-\mu(X_i)\) is sub-Gaussian conditionally on \(X_i\), with scale \(\sigma>0\).
    \end{assumption}

    Some comments on the interpretation of these assumptions can be found in Appendix \ref{sec: app_proof_consistency}.
    We now state the main consistency result, whose proof is also deferred to Appendix \ref{sec: app_proof_consistency}.

    \begin{thm}
        \label{thm: consistency}
        Let \(\mathsf{c}_n(\fhat_r(x))=G(\Prob_{n},x)\) be the in-sample calibrated prediction of \citet{wuetr_ziegel_23} and suppose that \(r\leq n\).
        Set \(\delta_r = (\log r/r)^{1/3}\) and \(\X_r\vcentcolon = \{x\in \X: [\mu(x)\pm 5C_0\delta_r]\subset I\}\) for \(r\geq 1\).
        Then, under Assumptions \ref{assum: consistency_iid}-\ref{assum: consistency_subgaussian}, there exists a constant \(C>0\) such that
        \begin{equation*}
            \lim_{n,r\to \infty}\Prob\left(\sup_{x\in \X_r}\left|\mathsf{c}_n(\fhat_r(x))-\mu(x)\right|\geq C\delta_r\right)=0,
        \end{equation*}
    \end{thm}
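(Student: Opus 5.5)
Since isotonic regression is invariant under strictly increasing transformations of the covariate, $\mathsf{c}_n(\fhat_r(x))$ equals the isotonic fit of $Y_1,\ldots,Y_n$ against $Z_i:=T(\fhat_r(X_i))$, evaluated at $z=T(\fhat_r(x))$. I will therefore work with the min--max (block average) representation
\begin{equation*}
\mathsf{c}_n(\fhat_r(x))=\min_{v\geq z}\max_{u\leq z}\frac{\sum_{i:\,u\leq Z_i\leq v}Y_i}{\#\{i:\,u\leq Z_i\leq v\}}
\end{equation*}
(and the symmetric max--min form), which reduces the problem to a uniform control of local averages.

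Throughout I condition on $\S_1$ and work on the event $E_r=\{\sup_x|T(\fhat_r(x))-\mu(x)|<C_0\delta_r\}$, whose probability tends to one by Assumption \ref{assum: consistency_fhat}. On $E_r$ the ordering by $Z_i$ agrees with the ordering by $\mu(X_i)$ up to perturbations of size $C_0\delta_r$. By Assumption \ref{assum: consistency_iid}, conditionally on $\S_1$ the pairs $(Z_i,Y_i)$ are iid. I split $Y_i=\mu(X_i)+\varepsilon_i$, where the $\varepsilon_i$ are conditionally sub-Gaussian by Assumption \ref{assum: consistency_subgaussian}.

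For the upper bound at a point $x\in\X_r$, take $u=z$ and $v=z+h$ with $h=4C_0\delta_r$; this choice is admissible in the max over $u$ and gives an upper bound for the min over $v$. The block $\{i: z\leq Z_i\leq z+h\}$ consists of points with $\mu(X_i)\in[\mu(x)-C_0\delta_r,\ \mu(x)+h+2C_0\delta_r]$. This interval lies in $I$ by the definition of $\X_r$, which explains the margin $5C_0\delta_r$. Hence the average of the $\mu(X_i)$ over the block is at most $\mu(x)+7C_0\delta_r$.

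It then remains to show two uniform facts. First, each such block contains at least $c\,n\delta_r$ points. This follows from the density lower bound $C_1$ in Assumption \ref{assum: consistency_bounded_mu}: the set $\{z\le T(\fhat_r)\le z+h\}$ contains $\{\mu\in[z+C_0\delta_r,z+h-C_0\delta_r]\}$, of probability at least $2C_1C_0\delta_r$. A VC/DKW-type bound over intervals gives uniformity in $z$. Second, the noise average satisfies
\begin{equation*}
\sup_{\text{blocks}}\Bigl|\frac{1}{\#\text{block}}\sum_{i\in\text{block}}\varepsilon_i\Bigr|\lesssim\sqrt{\frac{\log n}{n\delta_r}}.
\end{equation*}
This follows from sub-Gaussian concentration and a union bound over the $O(n^2)$ intervals determined by the order statistics of the $Z_i$. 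With $r\le n$ and $\delta_r=(\log r/r)^{1/3}$, we get $\log n/(n\delta_r)\lesssim\delta_r^2$ up to the $\log n$ versus $\log r$ ratio. That ratio needs care; I would handle it by letting the block width depend on $n$, or by noting that the bound only needs to be $\le C\delta_r$.

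The lower bound is symmetric, using $v=z$ and $u=z-h$ in the max--min form. Both bounds hold uniformly in $x\in\X_r$, and combined they give $|\mathsf{c}_n-\mu(x)|\le C\delta_r$ with probability tending to one.

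\textbf{Main obstacle.} The hardest step is the uniform noise bound together with the $\log n$ versus $\log r$ mismatch when $n\gg r$. Since larger $n$ only shrinks the noise term, I expect to bound it by $\sqrt{\log n/(n\delta_r)}\le\delta_r$ once $n\delta_r^3\ge\log n$. This holds because $n\delta_r^3\geq n\log r/r$, and $\log n\le (n/r)\log r$ for $n\ge r\ge3$.
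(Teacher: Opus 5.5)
There is a genuine gap in the central step of your upper bound. In $\min_{v\ge z}\max_{u\le z}A(u,v)$, where $A(u,v)$ is the average of the $Y_i$ with $u\le Z_i\le v$, fixing $v=z+h$ does give an upper bound. But fixing $u=z$ inside the maximum only gives a \emph{lower} bound on $\max_{u\le z}A(u,z+h)$. So you cannot conclude $\mathsf{c}_n(\fhat_r(x))\le A(z,z+h)$, and this inequality is false in general. Take three points $Z_1<Z_2<Z_3$ with responses $(10,0,0)$. The isotonic fit equals $10/3$ everywhere. At $z=Z_2$ (with $Z_3\le z+h$), however, the block $\{i:z\le Z_i\le z+h\}=\{2,3\}$ has average $0$. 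Large responses to the left of $z$ pull the fit up, and a bound on the single block $[z,z+h]$ does not see them. Your symmetric lower bound has the same problem: there, fixing $v=z$ inside the inner minimum points the wrong way.

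The repair is the paper's chain of inequalities, and you already have the ingredients. You must bound $\max_{u\le z}A(u,z+h)$ over all left endpoints $u\le z$.
\begin{itemize}
\item Each block $\{i:u\le Z_i\le z+h\}$ with $u\le z$ contains $\{i:z\le Z_i\le z+h\}$. It therefore has at least $c\,n\delta_r$ points, and your uniform noise bound applies at the same rate. This is the paper's $w_{sj(x)}\ge w_{i(x)j(x)}$.
\item Every point in such a block has $Z_i\le z+h$. So on $E_r$ its $\mu$-average is at most $z+h+C_0\delta_r\le\mu(x)+6C_0\delta_r$, whatever $u$ is. This is the paper's $\widehat{F}_{sj(x)}\le T(\fhat_r(X_{j(x)}))$.
\end{itemize}

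Only this upper bound on the $\mu(X_i)$ is needed. Your claim that $[\mu(x)-C_0\delta_r,\mu(x)+6C_0\delta_r]\subset I$ does not follow from the definition of $\X_r$, and it is also unnecessary. The $5C_0\delta_r$ margin serves only to place the counting window $[z+C_0\delta_r,z+h-C_0\delta_r]\subset[\mu(x),\mu(x)+4C_0\delta_r]$ inside $I$, where the density bound $C_1$ applies.

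With this fix your route coincides with the paper's. The one real difference is that you count points with an additive DKW bound on the $\mu(X_i)$. The paper instead uses the ratio bound of Proposition~\ref{prop: Moesching_2020} via Lemma~\ref{lem: g_hat_dense}. Your choice is legitimate and slightly more elementary: the DKW error is negligible relative to $\delta_r$, because $n\delta_r^2\ge n^{1/3}(\log n)^{2/3}\to\infty$ for $r\le n$. Your resolution of the $\log n$ versus $\log r$ issue via $\log n/n\le\log r/r$ matches the paper's final step.
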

    This result motivates the use of this prediction, or the clipped version \eqref{eq: best_candidate} as a best candidate instead of the other options listed in Subsection \ref{subsec: best_candidates}.
    As it becomes clear from the proof of the theorem, the same conclusion holds for the out-of-sample calibrated prediction \(G(\Prob_{n+1},X_{n+1})=\iso_{n+1}\Big((\fhat_r(X_1),Y_1),\ldots,(\fhat_r(X_{n+1}),Y_{n+1})\Big)\).



    \section{Width of calibrated confidence intervals}\label{sec: interval_width}

    Theorem \ref{thm:I} provides validity guarantees for \(\C_{n+1,\alpha}\), but it does not address their statistical power.
    There are two standard ways to do this.
    
    One is through the distance (\emph{coverage gap}) between the nominal coverage \(1-\alpha\) and the true coverage \(\Prob(\theta\in \C_{\alpha})\).
    When \(\C_{\alpha}\) is a valid confidence interval, it always holds that \(\Prob(\theta\in \C_{\alpha})-(1-\alpha)\geq 0\).
    Ideally, the value of this quantity is equal to zero.
    Larger values might indicate that we have used an unnecessarily large, overconservative confidence interval.
    In the literature, this approach was used by \citet{Vander_Laan_2024} to show that, under some assumptions, the coverage gap of their prediction sets vanishes asymptotically.
    The coverage gap of conformal prediction sets has also been studied.
    More specifically, conformal prediction sets have a coverage gap of at most \(1/(n+1)\), where \(n\) is the size of the calibration set \citep{Angelopoulos_2023, Vovk_2005}.

    The second way to determine the informativeness of a confidence interval is through its width.
    Narrower confidence intervals are preferable, because they provide a better estimate of the unknown parameter.
    Moreover, in classical statistics, shortest-width intervals are  a standard alternative to equal-tailed ones.

    The width of calibrated confidence intervals derived by simple binning (Algorithm \ref{alg: simple_binning}) is equal to \((h_{n+1}-\ell_{n+1})/(\# D_{j^\star})=\vcentcolon w(\C_{n+1,\alpha})\).
    Using a heuristic argument, we can show that \(w(\C_{n+1,\alpha})\) typically converges to zero (in probability).
    Indeed, if \(I_\alpha(X_{n+1})\) is a conformal prediction set, then, under different types of assumptions \citep{Lei_2018, Yao_2025}, its width converges to \(q_{1-\alpha/2}(Y|X_{n+1})-q_{\alpha/2}(Y|X_{n+1})\) in \(L^1\), where \(q_{\alpha/2}(Y|x),q_{1-\alpha/2}(Y|x)\) are the \(\alpha/2\)- and \((1-\alpha/2)\)-quantiles of \(Y\, | \, X=x\).
    Provided that \(q_{1-\alpha/2}(Y|X_{n+1})-q_{\alpha/2}(Y|X_{n+1})\) is bounded and that \(\# D_{j^\star}\overset{\prob}{\rightarrow} \infty\) as \(r,n\to \infty\), it follows that \(w(\C_{n+1,\alpha})\overset{\prob}{\rightarrow} 0\).
    Therefore, ensuring that bin size goes to infinity yields calibrated confidence intervals with vanishing width.

    In this section, we show that isotonic recalibration (Algorithm \ref{alg: isotonic_recalibration}) has a similar property.
    Along with Proposition \ref{prop: pred_performance_isotonic} and Theorem \ref{thm: consistency}, this property complements the theoretical support for isotonic recalibration and provides further insight on the statistical power of calibrated confidence intervals.
    A relevant result has been shown by \citet{Allen_2025}, who proved that the expected thickness of their calibrated prediction bands is bounded by \(n^{-1/6}\), where \(n\) is the size of the calibration set.
    Their proof is based on a bound on the expected number of distinct levels of the isotonic regression fit \citep[Lemma~1]{Dimitriadis_2023}.
    However, this bound is only valid for isotonic regression with a binary response \(Y\).

    We make the following assumptions:
    \begin{assumption}
        \label{assum: jumps}
        Let \(J_{n+1}\) be the number of distinct values of the isotonic regression fit of \((Y_1,\ldots,Y_{n+1})\) onto \((\fhat(X_1),\ldots,\fhat(X_{n+1}))\).
        Then, there exists a constant \(q>0\) such that \(\Exp[J_{n+1}]=O(n^{1/3}\log^q n)\).
    \end{assumption}

    \begin{assumption}
        \label{assum: width_CP_set}
        The prediction set \(I_\alpha(X_{n+1})=[\ell_{n+1},h_{n+1}]\) in Algorithm \ref{alg: isotonic_recalibration} is such that \(\Exp\big[\big(Y_{n+1}-h_{n+1}\big)^2\big]=O(1)\) and \(\Exp\big[\big(Y_{n+1}-\ell_{n+1}\big)^2\big]=O(1)\)
        as \(n\to \infty\).
    \end{assumption}

    Assumption \ref{assum: jumps} is a technical assumption.
    It is supported theoretically by the results of \citet{Meyer_2000}, who show that this assumption is valid under a Gaussian-noise model.
    The observations of \citet{Groeneboom_2011} provide further evidence that it could also be valid under more general settings.
    Furthermore, \citet[Remark~5.1]{Durot_2012} provide a matching lower bound (up to \(\log\) factors) and conjecture the existence of a sharp upper bound of the same order.
    More recently, \citet{Chatterjee_2021} showed that the same property holds for isotonic quantile regression under an additive-noise model assumption.
    
    Assumption \ref{assum: width_CP_set} is supported by the results of \citet{Lei_2018} and \citet{Yao_2025} that were discussed earlier.
    Based on those results, and as long as \(\Exp[(Y_{n+1}-q_{\alpha/2}(Y\, | \, X_{n+1}))^2]=O(1)\) and \(\Exp[(Y_{n+1}-q_{1-\alpha/2}(Y\, | \, X_{n+1}))^2]=O(1)\), this assumption is realistic.
    
    We now state the main result of this subsection, whose proof can be found in Appendix \ref{sec: app_proof_width}.

    \begin{thm}
        \label{thm: width_isotonic}
        Fix \(\alpha\in (0,1)\).
        Under Assumptions \ref{assum: jumps}, \ref{assum: width_CP_set}, it holds that \(\Exp[w(\C_{n+1,\alpha})]\to 0\) as $n\to \infty$,
        where \(\C_{n+1,\alpha}\) is the calibrated prediction set obtained by Algorithm \ref{alg: isotonic_recalibration}. 
    \end{thm}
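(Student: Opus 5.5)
The plan is to bound the width $w(\C_{n+1,\alpha})=T_h(\fhat(X_{n+1}))-T_\ell(\fhat(X_{n+1}))$ by comparing both endpoints with the fit that uses the true label. That fit is the unknown calibrated value $G(\Prob_{n+1},X_{n+1})=\iso_{n+1}\big((\fhat(X_1),Y_1),\ldots,(\fhat(X_{n+1}),Y_{n+1})\big)$.

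\textbf{Step 1.} Order the points by $\fhat(X_i)$ (no ties). Write $\iso_{n+1}(y)$ for the fit at the test point when its label is $y$, and let $B$ be the block of the partition of Example \ref{ex: isotonic_regression} that contains $n+1$ in the fit with label $Y_{n+1}$. By the monotonicity \eqref{eq: isotonic_regression_ordering}, whatever the order of $\ell_{n+1},h_{n+1},Y_{n+1}$,
\begin{equation*}
w(\C_{n+1,\alpha}) \le |\iso_{n+1}(h_{n+1})-\iso_{n+1}(Y_{n+1})| + |\iso_{n+1}(Y_{n+1})-\iso_{n+1}(\ell_{n+1})|.
\end{equation*}
The aim is a deterministic Lipschitz-type estimate for changing one label:
\begin{equation*}
|\iso_{n+1}(y')-\iso_{n+1}(y)|\le \frac{|y'-y|}{\#B(y)}\quad\text{or}\quad \le\frac{|y'-y|}{\#B(y')},
\end{equation*}
where $B(y)$ denotes the block containing $n+1$ in the fit with label $y$. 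We would use whichever version the proof delivers; the natural form is with $B$ the block of the fit with the true label $Y_{n+1}$.

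\textbf{Step 2 (main obstacle).} Prove the estimate of Step 1 via the max–min and min–max representations
\begin{equation*}
\iso_j=\max_{s\le j}\min_{t\ge j}\mathrm{avg}(s,t)=\min_{t\ge j}\max_{s\le j}\mathrm{avg}(s,t).
\end{equation*}
Changing $y_{n+1}$ by $\Delta$ shifts each average over $[s,t]\ni n+1$ by exactly $\Delta/(t-s+1)$. For $y'>y$, I would first write $\iso(y')\le \max_{s}\mathrm{avg}_{y'}(s,t^\ast)$ and $\iso(y)\ge \min_t \mathrm{avg}_y(s^\ast,t)$, where $[s^\ast,t^\ast]=B(y)$. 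I would then use the defining property of PAV blocks: sub-block prefix averages dominate, and suffix averages are dominated by, the block average. This should show that the extremal index pair can be taken to be the block itself, up to the $\Delta/(t-s+1)$ shift with $t-s+1\ge\#B$.

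If this clean bound fails, the fallback is the weaker bound $\Delta/\#B(y')$, obtained from the pool-adjacent-violators characterization of the fit at $y'$. Since $\{y'\}$ will be $\{h_{n+1}\}$ or $\{\ell_{n+1}\}$, we would then also need a version of Assumption \ref{assum: jumps} for the augmented fits. So the first attempt is the block of the true fit.

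\textbf{Step 3.} Take expectations and apply Cauchy–Schwarz:
\begin{equation*}
\Exp\left[\frac{|h_{n+1}-Y_{n+1}|}{\#B}\right]\le \Exp\big[(h_{n+1}-Y_{n+1})^2\big]^{1/2}\,\Exp\big[(\#B)^{-2}\big]^{1/2}\le O(1)\,\Exp\big[(\#B)^{-1}\big]^{1/2}.
\end{equation*}
Here the first factor is $O(1)$ by Assumption \ref{assum: width_CP_set}, and $(\#B)^{-2}\le(\#B)^{-1}$ since $\#B\ge1$. The same bound holds for the $\ell_{n+1}$ term.

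\textbf{Step 4.} Compute $\Exp[(\#B)^{-1}]$ by exchangeability. The isotonic fit to the full augmented sample is a symmetric function of the $n+1$ points. Therefore the block sizes $\#B_i$ of the points $i=1,\ldots,n+1$ are exchangeable, and
\begin{equation*}
\Exp\big[(\#B_{n+1})^{-1}\big]=\frac{1}{n+1}\Exp\Big[\sum_{i=1}^{n+1}(\#B_i)^{-1}\Big]=\frac{\Exp[J_{n+1}]}{n+1},
\end{equation*}
because summing $1/\#\text{block}$ over the points of a block gives $1$. By Assumption \ref{assum: jumps} this is $O(n^{-2/3}\log^q n)$, so $\Exp[w(\C_{n+1,\alpha})]=O(n^{-1/3}\log^{q/2}n)\to0$, which proves the theorem with an explicit rate.
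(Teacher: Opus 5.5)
\textbf{The gap is in Steps~1--2.} Your decomposition through the true fit, and the Cauchy--Schwarz step, match the paper. The problem is the deterministic one-label estimate $|\iso_{n+1}(y')-\iso_{n+1}(y)|\le |y'-y|/\#B$, which is false whichever block you put in the denominator.

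\emph{The true-fit version fails.} Put the test point last in the $\fhat$-order, give the other $k-1$ points label $0$, and take $Y_{n+1}=-1$, $h_{n+1}=L>0$. The true fit pools everything, so $\#B(Y_{n+1})=k$ and the fitted value is $-1/k$. With label $h_{n+1}$ the test point is a singleton block with value $L$. The change is $L+1/k$, which is not at most $(L+1)/k$.

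\emph{The fallback fails too.} Put the test point first instead, with $Y_{n+1}=-L$ and $h_{n+1}=1$. Now $\#B(h_{n+1})=k$, and the change is again $L+1/k$ against a claimed bound of $(L+1)/k$.

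\emph{Where your Step~2 breaks.} Write $i_0$ for the rank of the test point. It breaks exactly where you hoped the block property would pin the extremal pair. In $\max_{s\le i_0}\bigl[\mathrm{avg}_y(s,t^\ast)+\Delta/(t^\ast-s+1)\bigr]$, the block property controls the first term, but the second term is largest for $s$ near $i_0$. So what the argument honestly yields is $\Delta/(t^\ast-i_0+1)$. This bound depends on where $i_0$ sits inside its block, not on $\#B$.

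\textbf{What is missing.} The correct estimate is Lemma~\ref{lem: isox_isoy} combined with Lemma~\ref{lem: change_one_coordinate}. Its denominator is $\max I^{\mathrm{low}}-\min I^{\mathrm{high}}+1$, the size of the overlap of the two blocks containing $i_0$. Let $B$ be the block of the true fit. The overlap is at least $i_0-\min B+1$ when $Y_{n+1}$ is the larger of the two labels, and at least $\max B-i_0+1$ otherwise.

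The missing ingredient is then to use the randomness of $i_0$'s position inside $B$. The partition of the true fit is determined by $\S_1$ and $\Qrob_{n+1}$, while $i_0$ is conditionally uniform on $\{1,\ldots,n+1\}$. The paper exploits this by splitting each block into its first $\lceil |I_j|n^{-\delta}\rceil$ indices and the rest.

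You can also exploit it directly. Set $d=\min(i_0-\min B+1,\ \max B-i_0+1)$. Then
$\Exp[d^{-1}\mid \S_1,\Qrob_{n+1}]\le \frac{2}{n+1}\sum_{j}\sum_{p=1}^{|I_j|}p^{-1}\le 2J_{n+1}(1+\log(n+1))/(n+1)$.
Your Steps~3--4 then go through with $d$ in place of $\#B$ and give $O(n^{-1/3}\log^{(q+1)/2}n)$. Your identity $\Exp[(\#B)^{-1}]=\Exp[J_{n+1}]/(n+1)$ is correct, but it bounds the wrong quantity.
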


    \section{Simulations} \label{sec: simulation}

    In the previous two sections, we demonstrated the favorable properties of isotonic recalibration regarding predictive performance, consistency, and statistical power.
    We also showed that isotonic recalibration is generally superior to simple binning recalibration, even though it is hyperparameter-free and does not rely on user-specific implementation decisions.
    
    In this section, we implement isotonic recalibration on a synthetic multivariate dataset.
    The goal of this simulation study is to evaluate:
    \begin{itemize}
        \item The impact of different base models \(\fhat\);
        \item the width of the calibrated confidence interval for different sample sizes;
        \item the calibration properties of the base model \(\fhat\) and the best-candidate predictions.
    \end{itemize}

    One of the main components of Theorem \ref{thm:I} is the prediction set \(I_{\alpha}(X_{n+1})\).
    More details on how we construct this set are given in Appendix \ref{sec: app_CP}.
    Theorem \ref{thm:I} and Algorithm \ref{alg: isotonic_recalibration} do not require the prediction set \(I_\alpha(X_{n+1})\) to be independent from the training set \(\S_1\) and the calibration set \(\S_2\).
    Therefore, \(I_\alpha(X_{n+1})\) can be constructed using \(\S_1,\S_2\), and no further data-splitting is necessary.

    \subsection{Simulation design} \label{subsec: simulation_design}
    
    The data points \(\{(X_i,Y_i)\}_{i=1}^N\) consist of a three-dimensional covariate \(X_i=(X_{i1},X_{i2},X_{i3})\), where \(X_{i1},X_{i2},X_{i3}\) are independent and \(X_{i1}\sim \cN(0,1),\quad X_{i2}\sim \text{Unif}(0,1),\quad X_{i3}\sim \text{Be}(0.5)\).
    The response variable \(Y_i\) is given by
    \begin{equation}
        \label{eq: simulation_response}
        Y_i=X_{i1}^2-X_{i1}\cdot X_{i3}-\frac{4}{\frac{1}{4}+X_{i1}^2+X_{i2}^2}+\varepsilon_i,\quad i=1,\ldots,N,
    \end{equation}
    where \(\varepsilon_i\sim \text{Gamma}\left(\text{shape } = 1 + \sqrt{|X_{i1}|}, \text{ scale } = 1 + \sqrt{X_{i2}}\right)\).
    We use two different base models: a GAM, and a neural network.
    The GAM has the form \(f(x_1,x_2,x_3)=f_1(x_1)+f_2(x_2)+f_3(x_3)\).
    In particular, it contains no interaction terms, which are necessary to capture the interplay between \(X_{i1},X_{i2},X_{i3}\) in the definition of \(Y_i\).
    
    This intentional model misspecification allows for a comparison between different regimes: a GAM model that does not have the necessary capacity to address the problem complexity, and a neural network, which automatically detects interactions and is expected to fit the data better.
    Due to space constraints, we only include the results related to the GAM in the main body of the paper.
    The second part of the simulation, related to the neural network, is deferred to Appendix \ref{sec: app_simulation}.
    In general, the neural network leads to slightly narrower calibrated confidence intervals.
    Additionally, it appears to be superior in terms of calibration.
    This is in line with the empirical observations of \citet{Niculescu_2005}, who concluded that shallow neural networks have remarkably good calibration properties.

    \subsection{Interval width} \label{subsec: interval_width}

    Figure \ref{fig: GAM_general_plot} provides an illustration of the calibrated confidence intervals derived by Algorithm \ref{alg: isotonic_recalibration} for the data-generating model in \eqref{eq: simulation_response}.
    The full sample of size \(N=3000\) is split into training, calibration, and test sets, with a ratio of \(3:2:1\).
    The error rate is set to \(\alpha = 0.05\).
    
    The points \(\{(\fhat(X_i),Y_i)\}_{i=1}^{1000}\) used for isotonic recalibration are also shown in the figure.
    The red line represents the in-sample calibrated prediction of \citet{wuetr_ziegel_23}.
    At each step, a new test point \(X_{n+1}\) is considered, and the calibrated prediction set \(\C_{n+1,\alpha}(X_{n+1})\) is constructed, following Algorithm \ref{alg: isotonic_recalibration}.
    The corresponding lower and upper bounds are joined by a piecewise constant line and shown as blue curves in the graph.

    \begin{figure}[ht]
        \centering
        \includegraphics[width=0.6\linewidth]{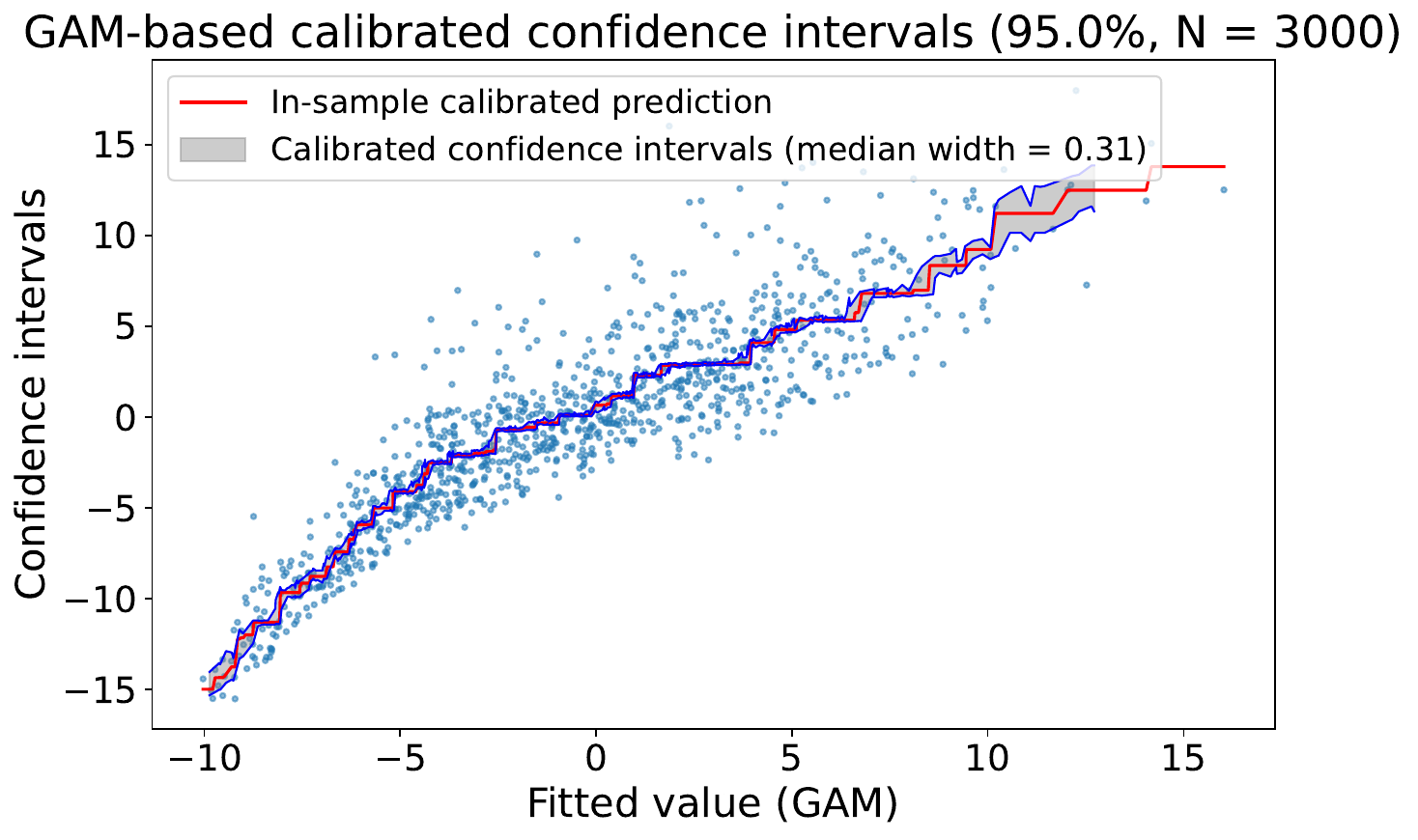}
        \caption{Calibrated confidence intervals based on a GAM model \(\widehat{f}\) and on conformalized quantile regression prediction sets for \(Y_{n+1}\); see Appendix \ref{sec: app_CP}.
        The isotonic regression fit in red corresponds to the in-sample calibrated prediction of \citet{wuetr_ziegel_23}.}
        \label{fig: GAM_general_plot}
    \end{figure}

    Figure \ref{fig: GAM_general_plot} can be compared with Figure \ref{fig: GAM100_intervals}, which corresponds to a smaller sample size.
    The effect of sample size on the width of the confidence intervals can already be seen in these two figures.
    Figure \ref{fig: interval_width} makes this effect clearer by showing the average interval width for different values of \(n\), conditionally on the training and calibration sets.

    \begin{figure}[ht]
        \centering
        \includegraphics[width=0.6\linewidth]{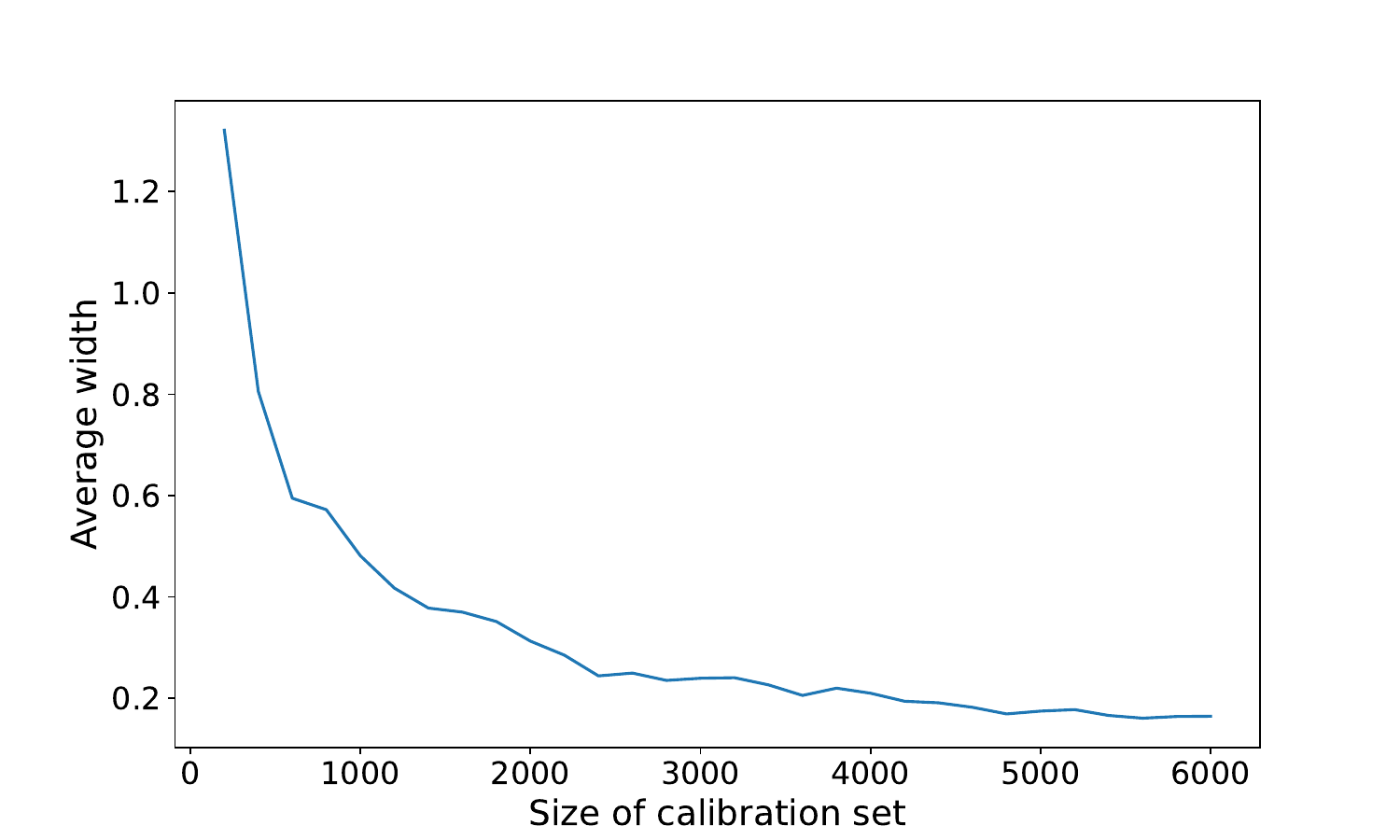}
        \caption{The average width decreases as \(n\to \infty\).
        This provides further empirical evidence for the result of Theorem \ref{thm: width_isotonic}.
        The (conditional) average is taken over the test set, with the training and calibration sets kept unchanged.}
        \label{fig: interval_width}
    \end{figure}

    \subsection{Calibration plots for best candidates} \label{subsec: calibration_plots}

    Calibration plots have been used for the graphical assessment of the calibration properties of a predictive model \citep{Dawid_1986, Zadrozny_2002}.
    Recently, they were studied further by \citet{Gneiting_2023}, who generalized them beyond binary response variables.
    The idea behind calibration plots is to fit isotonic regression to the points \(\{(\fhat(X_i),Y_i)\}_{i=1}^m\), where \(\fhat\) is a predictive model, and assess how close the isotonic regression fit is to the identity function.
    This idea is based on a heuristic argument presented in Appendix \ref{subsec: calibration_plots_motivation}.
    
    Figure \ref{fig: calibration_plots} shows the calibration plots of the GAM (top left), the true conditional mean (bottom right), used as a benchmark, and two of the best candidates discussed in Subsection \ref{subsec: best_candidates}.
    One is the clipped calibrated prediction proposed in \eqref{eq: best_candidate} (bottom left) and the other is the adjusted-midpoint prediction in \eqref{eq: best_candidate_laan}.
    
    \begin{figure}[ht]
    \centering
    \begin{minipage}{0.49\textwidth}
        \centering
        \includegraphics[width=\linewidth]{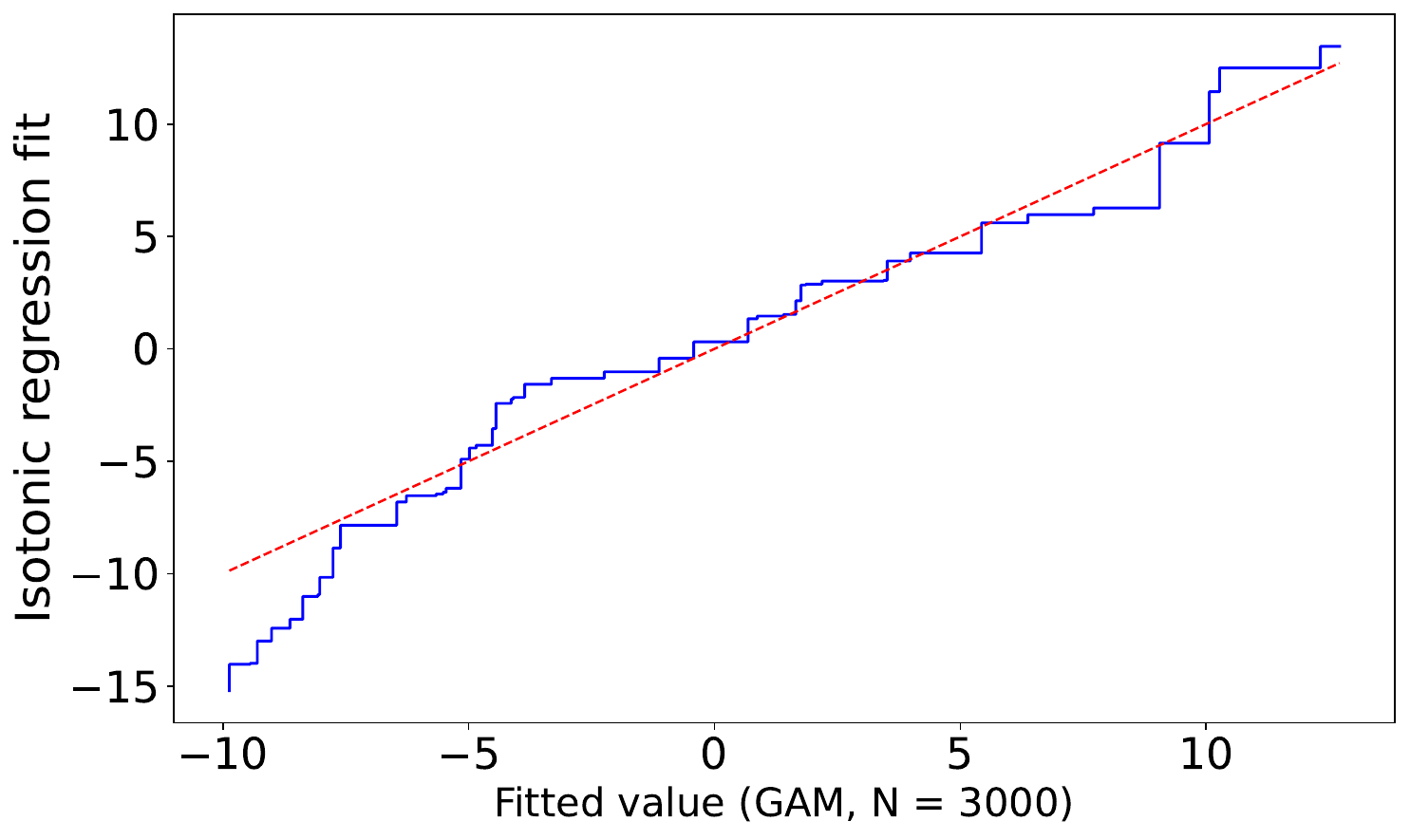}
    \end{minipage}
    \hfill
    \begin{minipage}{0.49\textwidth}
        \centering
        \includegraphics[width=\linewidth]{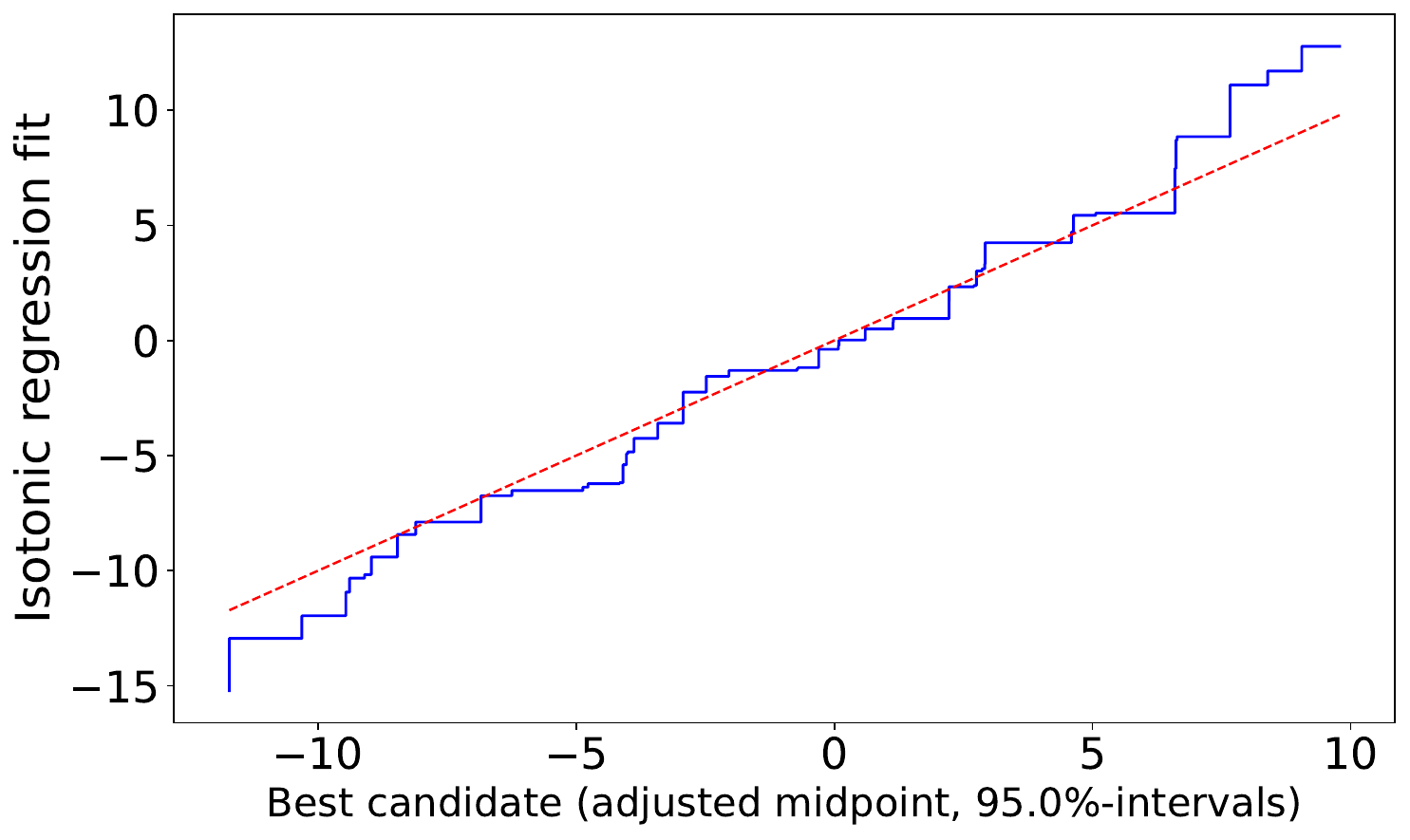}
    \end{minipage}

    \vspace{0.5cm}

    \begin{minipage}{0.49\textwidth}
        \centering
        \includegraphics[width=\linewidth]{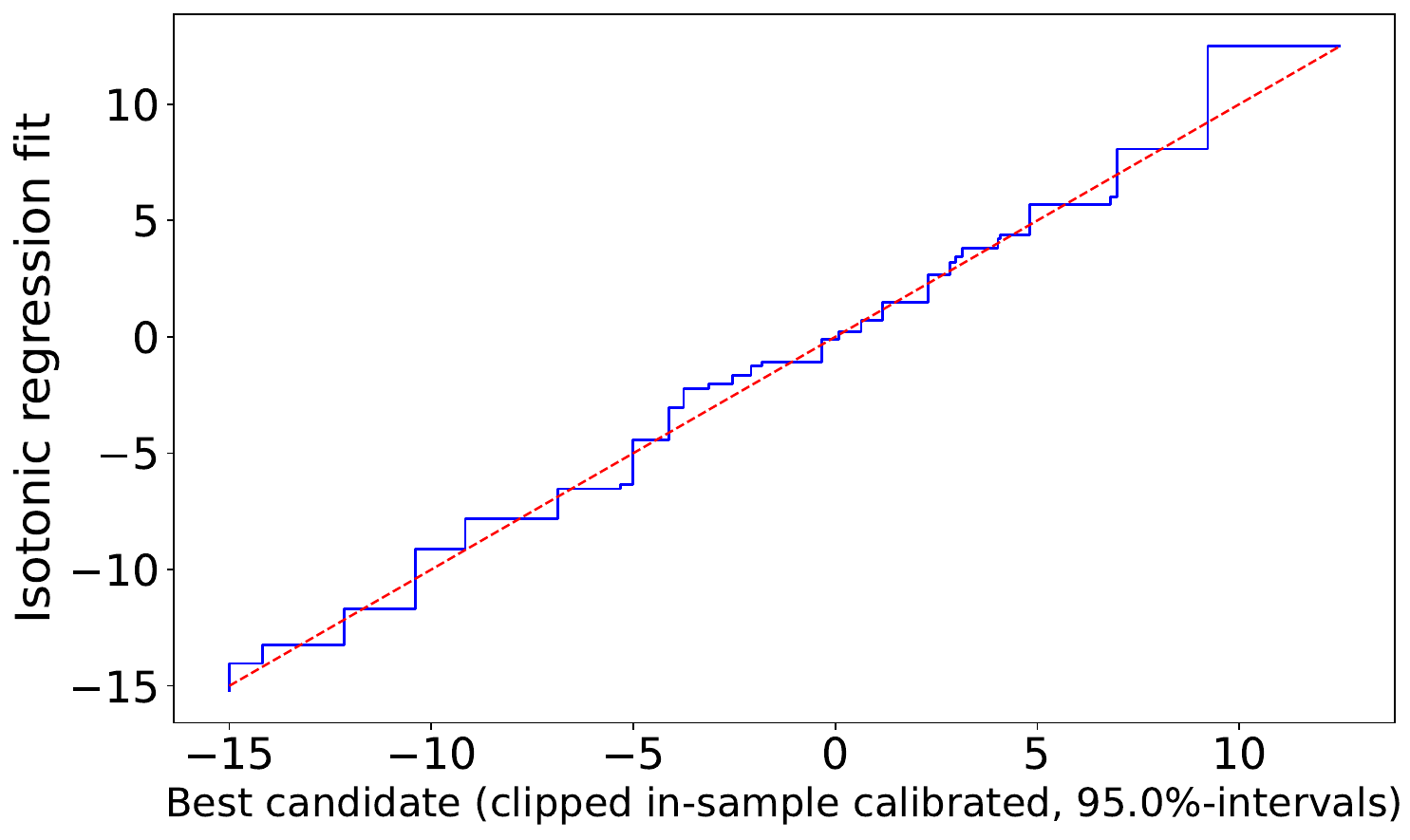}
    \end{minipage}
    \hfill
    \begin{minipage}{0.50\textwidth}
        \centering
        \includegraphics[width=\linewidth]{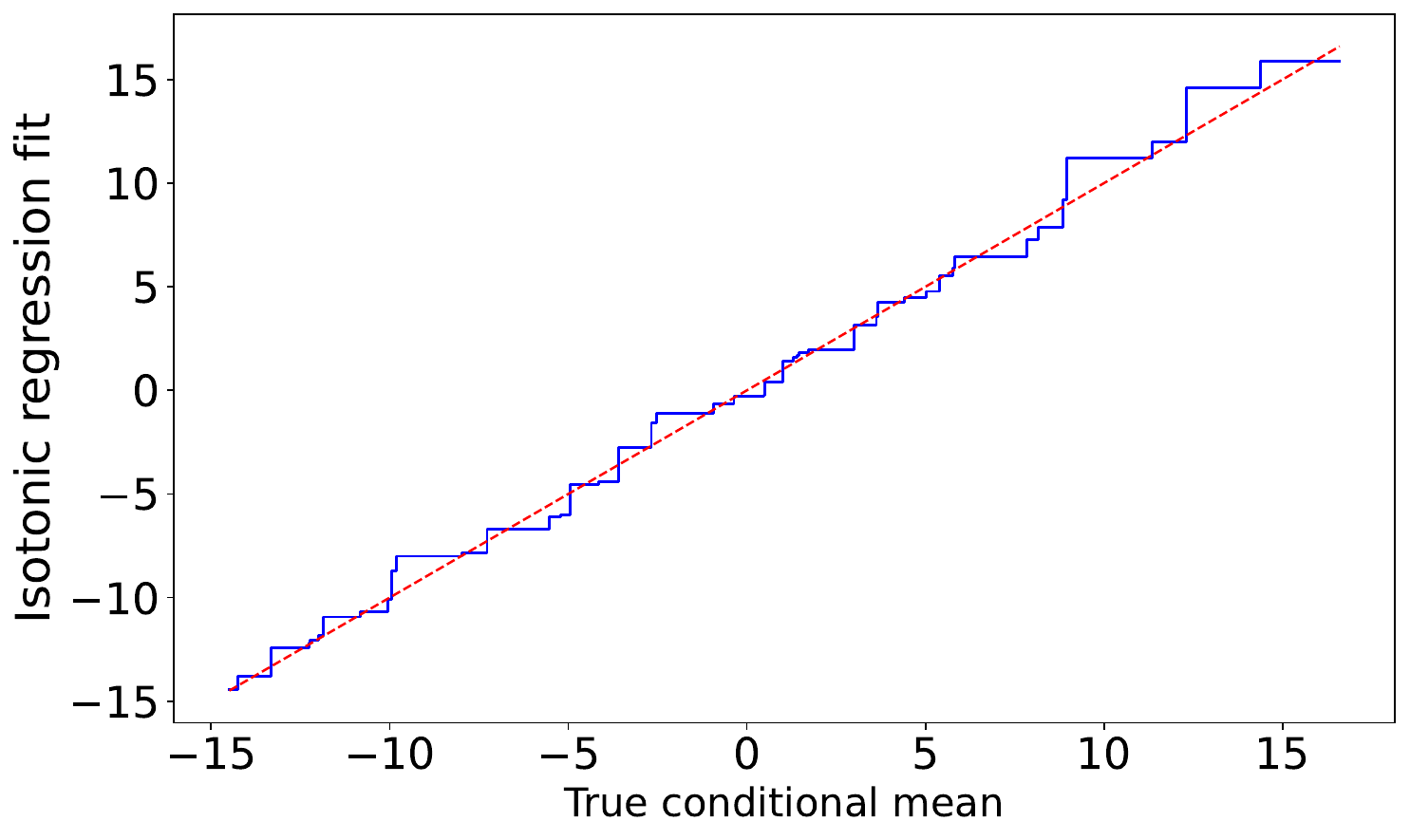}
    \end{minipage}
    \caption{Calibration plots for four different predictions.
    The (clipped) in-sample calibrated prediction seems to deal effectively with the miscalibration of the GAM.}
    \label{fig: calibration_plots}
    \end{figure}

    The main conclusion from Figure \ref{fig: calibration_plots} is that the GAM is adequately calibrated, except for the left tail of the distribution of \(\fhat(X)\).
    The miscalibration in that area is corrected by the two best candidates, particularly by the in-sample calibrated one.
    The adjusted midpoint also performs relatively well, but its performance diminishes at the right end of the graph.
    This could be because of the wider confidence intervals \(\C_{n+1,\alpha}\) observed in that area in Figure \ref{fig: GAM_general_plot}.

    This observation highlights one of the advantages of the in-sample calibrated best candidate, namely, the fact that it does not depend on the error rate \(\alpha\).
    This property makes it immune to uninformative confidence intervals.
    As for the clipped version \eqref{eq: best_candidate}, even though it does depend on \(\alpha\), this dependence only affects cases where the in-sample calibrated prediction takes extreme values that should possibly be corrected.

    \section{Case study: Third-party motor liability claims} \label{sec: case_study}

    In this section, we apply our methods to the dataset \texttt{freMTPL2freq} from the \texttt{R} package \texttt{CASdatasets}.
    This is a widely used benchmarking dataset in actuarial science.
    It is one of the main datasets used in \citet{Wuthrich_2023} and one of the datasets suggested by the Swiss Association of Actuaries for \emph{applying and learning data science methods}\footnote{https://www.actuarialdatascience.org/ADS-Tutorials/}.
    It has previously also been analyzed by \citet{Noll_2020} and \citet{Floser_2025}, and it was employed by \citet{wuetr_ziegel_23} to demonstrate in-sample recalibration.
    
    Given the strong interest in actuarial science in calibrated predictions \citep{wuetr_ziegel_23, Kruger_2021, Denuit_2021}, \texttt{freMTPL2freq} is suitable for an empirical study of the (out-of-sample) calibrated confidence intervals developed in this paper.
    It consists of \(n=678\, 013\) rows, which represent different insurance policies.
    The policies were observed mostly over one year, over an overall period between 2011 and 2013.
    Each row reports the number of claims of the corresponding policy within that year.
    It also contains covariate information, such as the driver's age, the vehicle type, and the duration of the insurance contract.
    A detailed list of all covariates is given in Table \ref{tab: freMTPL2freq} in Appendix \ref{sec: app_case_study}.

    Given the covariate information and the \emph{exposure} of a policy (the fraction of the year for which the policy was active), the goal is to develop models that predict the claim numbers \(N_i=\texttt{ClaimNb}_i\geq 0\), \(i=1,\ldots,678\,013\).
    Since the response takes only non-negative integer values, a Poisson model is a standard approach.
    The Poisson model is expressed via a regression function \(\lambda: \X\to \R_+\), where \(\X\) is the covariate space, such that the claim numbers are independent and satisfy \(N_i\sim \text{Poi}(\lambda(x_i)v_i)\),
    where \(v_i\) is the exposure of policy \(i\in \{1,\ldots,678\, 013\}\).
    In Appendix \ref{sec: app_case_study} we comment on the use of \(v\) as a scaling factor and not as a covariate.

    Many of the covariates are categorical, so some feature pre-processing is necessary before fitting the models.
    We use the same pre-processing steps as \citet{Wuthrich_2023}.

    We split the dataset randomly into a training set \(\S_1\) (60\%), a calibration set \(\S_2\) (20\%), and a test set \(\T\) (20\%).
    The construction of calibrated confidence intervals first requires fitting a regression model on the training set \(\S_1\).
    The target variable is the regression function \(\lambda:\X\to \R_+\) which determines the expectation of the claim number \(N\).
    Motivated by \citet{Wuthrich_2023,Noll_2020}, we compare a Poisson GLM, a boosted regression tree (CatBoost) and a neural network.
    As in \citet{wuetr_ziegel_23}, we also include the intercept-only model (null model) as a benchmark.
    More details about the model parameters can be found in Appendix \ref{sec: app_case_study}.
    Table \ref{tab: pred_performance} shows the training and test error of the four models, measured by means of the deviance loss.
    \begin{table}[h]
        \centering
        \caption{Deviance loss of the four different models.}
        \begin{tabular}{|c|c|c|}
            \hline
            Model &  Training error (\(\S_1\)) & Test error (\(\T\)) \\ \hline
            Null model & 0.477310 & 0.486497 \\ \hline
            Poisson GLM & 0.455478 & 0.466227 \\ \hline
            CatBoost &  0.440605 &  0.457245 \\ \hline
            Neural network & 0.448992 & 0.461538 \\ \hline
        \end{tabular}
        \label{tab: pred_performance}
    \end{table}
    
    The implementation of Algorithm \ref{alg: isotonic_recalibration} also requires a prediction set \(I_\alpha(x_{n+1})\) that contains \(N_{n+1}\) with high probability.
    We obtain such a set using conformal prediction with the Poisson deviance as a nonconformity score. More details can be found in Section \ref{sec: app_case_study}.
    Since the response variables \(N_i\) are bounded between \(0\) and \(4\), setting \(\alpha=0\) and \(I_0(x_{n+1})=\{0,\ldots,4\}\) yields calibrated confidence intervals with almost sure coverage.

    \subsection{Conformal prediction sets and calibrated confidence intervals} \label{subsec: CP_CCI}

    We compare three error-rate levels \(\alpha\), namely \(0, 0.01\), and \(0.05\).
    It turns out that the conformal prediction sets \(I_{0.05}(x_{n+1})\) are all empty or singletons.
    This seems surprising, but it follows naturally from the unbalanced design.
    With more than \(95\%\) of the policies having no reported claims, quantifying uncertainty at level \(\alpha = 0.05\) is not informative.
    For comparison, this is similar to quantifying uncertainty at level \(\alpha = 0.5\) in a balanced dataset with two classes.
    The unbalanced design is also reflected in the fitted values, most of which lie in the interval \([0, 0.4]\).
    



    Different prediction sets give rise to different calibrated confidence intervals \(\mathcal{C}_{n+1,\alpha}\).
    The ones corresponding to the Poisson GLM are illustrated in Figure \ref{fig: GLM_CatBoost_NN}.
    Like before, the \(x\)-axis corresponds to the fitted value \(\widehat{\lambda}(x_j)v_j\) for a test point \((x_j,v_j)\).
    As expected, the intervals corresponding to \(\alpha = 0.05\) are either empty or singletons, just like the associated prediction sets.
    Between \(\alpha = 0.01\) and \(\alpha = 0\) there are only minor differences, which are more pronounced towards the right part of the graph, where there are only a few data points.

    As a result of the scarcity of data in those areas, the width of the intervals increases drastically.
    Indeed, in blocks with few data points, the fitted value of isotonic regression is affected significantly by varying one of them, as Algorithm \ref{alg: isotonic_recalibration} does.

    Figure \ref{fig: GLM_CatBoost_NN} also shows the intervals associated with the CatBoost model and the neural network.
    The \(x\)-axis reports the fitted values of those models, so the scales are slightly different.
    The same is true for the \(y\)-axis.
    Therefore, the three panels are not visually comparable.

    \begin{figure}[ht]
        \centering
        \begin{minipage}{\textwidth}
            \centering
            \includegraphics[width=0.5\linewidth]{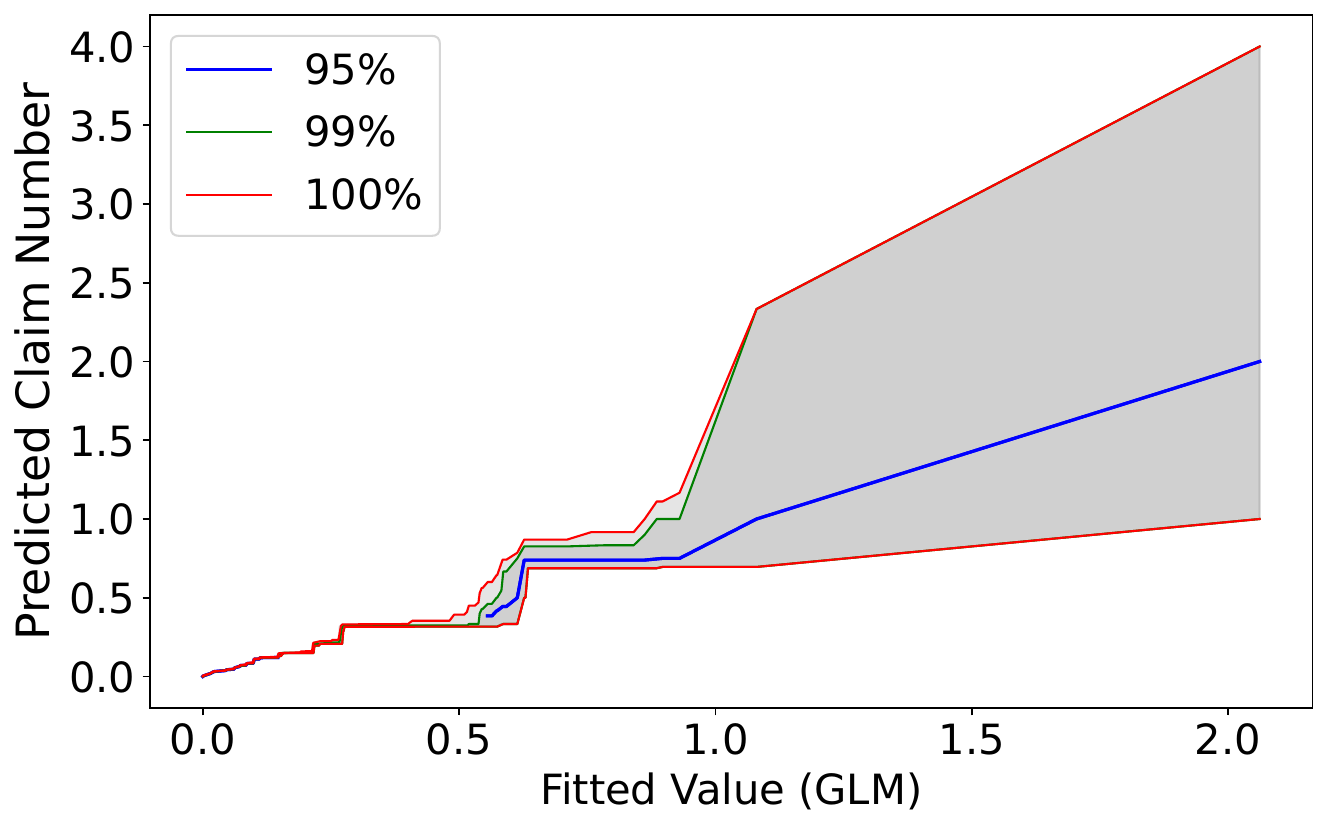}
        \end{minipage}
        \begin{minipage}{0.49\textwidth}
            \centering
            \includegraphics[width=\linewidth]{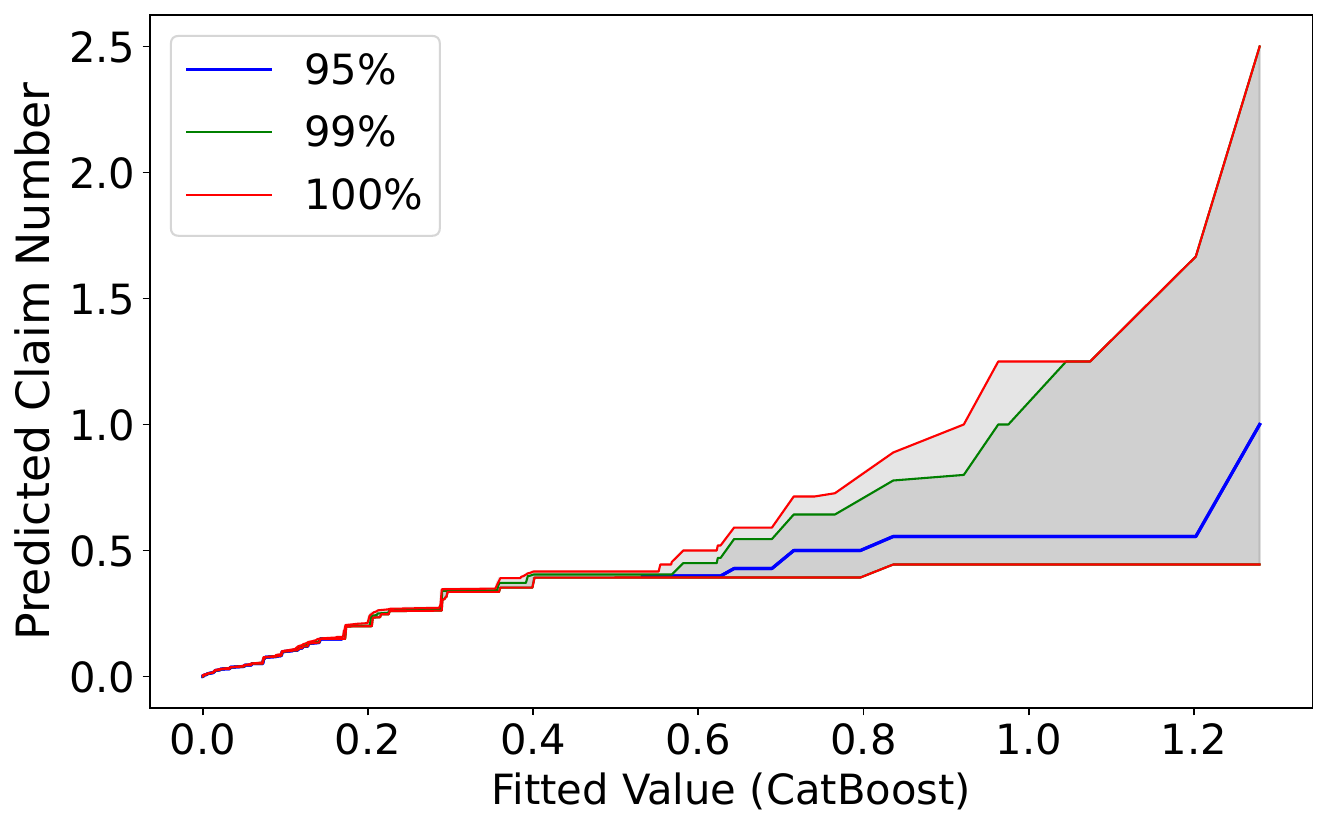}
        \end{minipage}
        \hfill
        \begin{minipage}{0.49\textwidth}
            \centering
            \includegraphics[width=\linewidth]{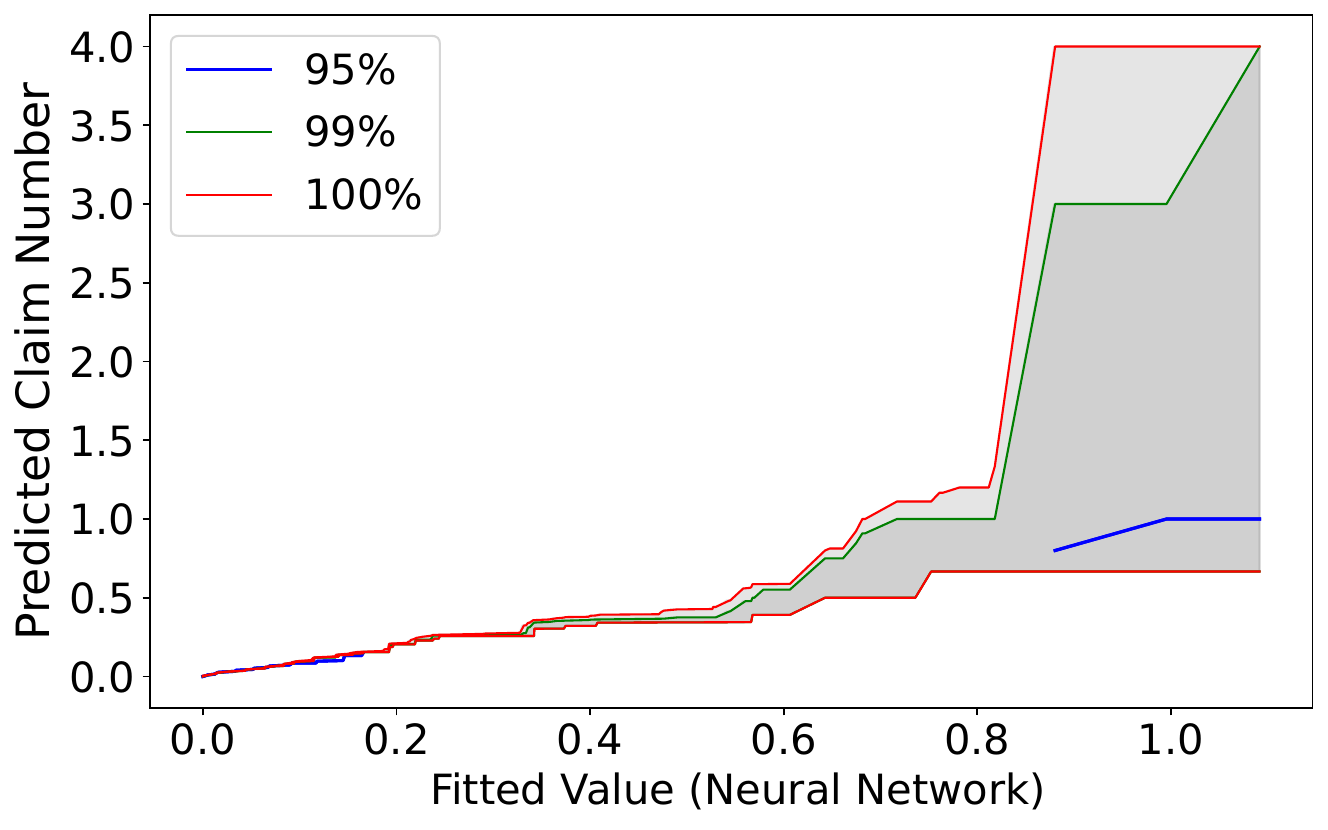}
        \end{minipage}
        \caption{Calibrated confidence intervals for different values of \(\alpha\).}
        \label{fig: GLM_CatBoost_NN}
    \end{figure}
    
    The combination of different axis scales and intervals of tiny width makes it hard to detect which model outputs the most informative confidence intervals.
    In fact, as we demonstrate in Appendix \ref{sec: app_case_study}, none of them is uniformly superior.


    A closer look at Figure \ref{fig: GLM_CatBoost_NN} reveals that, even in areas with a large or moderate number of data points, such as the area where \(\texttt{fitted value}\approx 0.3\), there is often a significant relative increase in width when we move from \(\alpha = 0.01\) to \(\alpha = 0\).
    For the CatBoost model, this is verified by Figure \ref{fig: CatBoost_alphas}.
    Therefore, even when the response variable is bounded, moving from a prediction set \(I_0(x_{n+1})\), equal to the entire domain, to a conformal prediction set \(I_{\alpha}(X_{n+1})\) can lead to a large increase in power.
    This aligns with our discussion in Subsection \ref{subsec: motivating_prediction_set} for the necessity of the conformal prediction set.

    \begin{figure}[ht]
        \centering
        \includegraphics[width=0.55\linewidth]{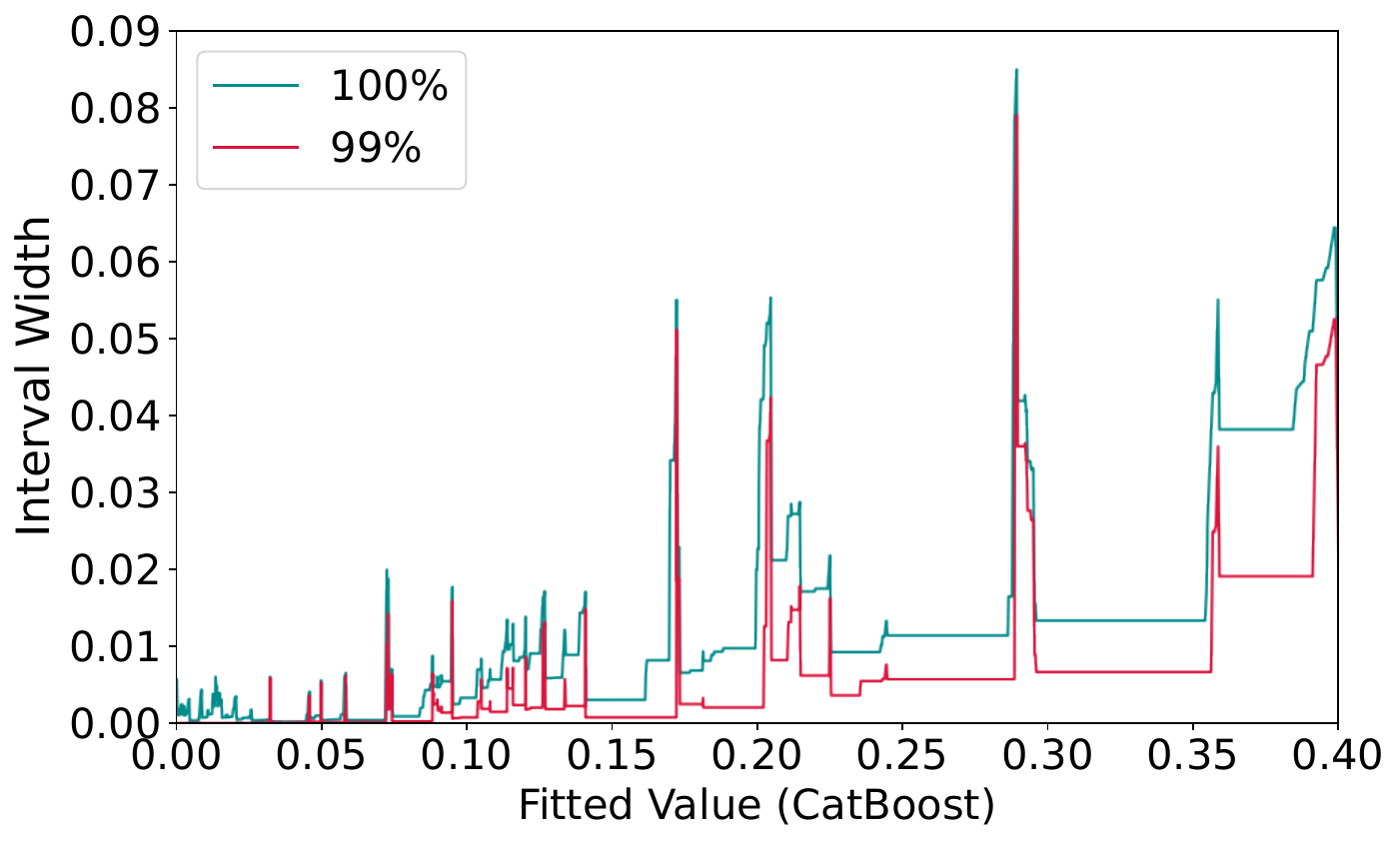}
        \caption{Width of \(\C_{n+1,\alpha}\) for different values of \(\alpha\).
        Even in areas with a large number of data points, the width drops significantly when we change the error rate from \(\alpha = 0\) to \(\alpha = 0.01\).}
        \label{fig: CatBoost_alphas}
    \end{figure}

    \subsection{Model comparison: Calibration} \label{subsec: comparison_calibration}

    We conclude the case study with an evaluation of the calibration of the three selected prediction models.
    In-sample calibration translates to \(\Exp_{(x,N,v)\sim \Prob_n}[N|\widehat{\lambda}(x)v]=\widehat{\lambda}(x)v\), where \(\Prob_n\) is the empirical distribution of the calibration set.
    The law of iterated expectation yields \(\Exp_{(x,N,v)\sim \Prob_n}[N]=\Exp_{(x,N,v)\sim \Prob_n}[\widehat{\lambda}(x)v]\), which is a weaker property, known as \emph{marginal calibration} or \emph{unbiasedness in the large} \citep{Murphy_1967}.
    If \(\{(x_i,N_i,v_i)\}_{i=1}^n\) denotes the calibration set, then marginal calibration is equivalent to
    \begin{equation}
        \label{eq: marginal_calibration}
        \frac{1}{n}\sum_{i=1}^n N_i = \frac{1}{n}\sum_{i=1}^n \widehat{\lambda}(x_i)v_i \Leftrightarrow \frac{\sum_{i=1}^n N_i}{\sum_{i=1}^n v_i}=\frac{\sum_{i=1}^n \widehat{\lambda}(x_i)v_i}{\sum_{i=1}^n v_i},
    \end{equation}
    which essentially requires that the average observed frequency is equal to the average predicted frequency.
    Marginal calibration is considered the minimal calibration property that a predictive model should satisfy.
    It is straightforward to test it, so prior works have used it to assess the calibration properties of prediction models \citep{wuetr_ziegel_23, Noll_2020}.
    Table \ref{tab: insample_frequency} shows that none of the three models is marginally calibrated.
    However, on the training set, the Poisson GLM and the null model fit the true frequency perfectly, see Appendix \ref{sec: app_case_study}.
    \begin{table}[h]
    \caption{True and predicted claim frequency (offset by exposure).}
        \label{tab: insample_frequency}
        \centering
        \begin{tabular}{|l|c|c|c|}
            \hline
            \textbf{Model}  & \textbf{Frequency (\(\S_1\))} & \textbf{Frequency} (\(\S_2\)) & \textbf{Frequency (\(\T\))}\\ \hline
            Null model & 7.3535\% & 7.3535\% & 7.3535\% \\ \hline
            Poisson GLM & 7.3535\% & 7.3477\% & 7.3843\% \\ \hline
            CatBoost & 7.3547\% & 7.3492\% & 7.3713\% \\ \hline
            Neural network & 7.3556\% & 7.3394\% & 7.3730\% \\ \hline\hline
            True frequency & 7.3535\% & 7.2792\% & 7.5018\% \\ \hline
        \end{tabular}
        
    \end{table}

    The best candidates proposed in Subsection \ref{subsec: best_candidates} face similar miscalibration issues, as shown in Table \ref{tab: calibration_candidates}.
    This table corresponds to the Poisson GLM, but the results are similar for the other two models.
    These results support the comment we made in Subsection \ref{subsec: best_candidates}, namely that these candidates rely on heuristic arguments and lack rigorous guarantees.
    \begin{table}[h]
        \centering
        \caption{The candidate predictions proposed in Subsection \ref{subsec: best_candidates} are miscalibrated on \(\T\).}
        \label{tab: calibration_candidates}
        \begin{tabular}{|l|c|}
            \hline
            \textbf{Candidate} & \textbf{Predicted frequency (\(\T\))}\\ \hline
            In-sample calibrated prediction & 7.3128\% \\ \hline \hline
            Lower endpoint \hfill \(\alpha = 0.01\) & 7.3045\%\\ \hline
            Midpoint \hfill \(\alpha = 0.01\) & 7.3346\%\\ \hline
            Upper endpoint \hfill \(\alpha = 0.01\) & 7.3648\% \\ \hline \hline 
            Lower endpoint \hfill \(\alpha = 0\) & 7.3045\% \\ \hline
            Midpoint \hfill \(\alpha = 0\) & 7.4445\% \\ \hline
            Upper endpoint \hfill \(\alpha = 0\) & 7.5845\% \\ \hline \hline
            True frequency  & 7.5018\% \\ \hline
        \end{tabular}
    \end{table}
    For \(\alpha=0\), the observed true frequency lies between the predicted frequencies obtained by the lower and upper endpoints of the associated calibrated confidence interval, whereas for \(\alpha \neq 0\), this is not the case.
    This is not surprising, since the average predicted frequencies are computed conditionally on the training and calibration sets, whereas Theorem \ref{thm:I} yields only marginally valid guarantees.

    \section{Discussion}

    We provide a new methodology for constructing non-parametric, distribution-free confidence intervals for calibrated predictions.
    The main ingredient of this methodology is a coupling of isotonic calibration with conformal prediction.
    We have provided a theoretical framework for these confidence intervals that encompasses both validity and statistical power.

    Directions for further theoretical work include establishing optimal (minimax) convergence rates for the asymptotic expected width of calibrated confidence intervals, and generalizing the assumptions of Theorem \ref{thm: width_isotonic}. 

    
    The flexibility of the main theoretical backbone of the methodology (Theorem \ref{thm:I}) regarding the choice of the prediction set \(I_{\alpha}(X_{n+1})\) provides numerous possibilities for tailoring calibrated confidence intervals to specific applications at hand.
    In addition to standard conformal prediction sets that offer marginal coverage, other options such as conformal prediction sets with sample-conditional coverage \citep{Duchi_2025} or parametric prediction sets could provide interesting choices.

    \section{Acknowledgements}

    The authors would like to thank Kai Schärer for valuable discussions and feedback.

    \bibliographystyle{abbrvnat}
    \bibliography{refs}
    \newpage

    \appendix

    \section{Proof of Theorem \ref{thm:I}}
    \label{sec: appA}

    \begin{proof}[Proof of Theorem \ref{thm:I}]
        Under Assumption \ref{assum: A1}, it holds that \(\Exp|G(\Prob_{n+1},X_{n+1})|<\infty\).
        Indeed, it holds that
        \begin{equation*}
            G(\Prob_{n+1},X_{n+1})=\frac{\sum_{i\in D} Y_i}{\# D},
        \end{equation*}
        for some (random) set \(D\subseteq \{1,\ldots,n+1\}\).
        Since \(Y_1,\ldots,Y_{n+1}\) have a finite first moment, it follows directly that \(G(\Prob_{n+1},X_{n+1})\) also has a finite first moment.
        
        For all indices \(i\in \{1,\ldots,n+1\}\), we denote \(G(\Prob_{n+1},X_i)\) by \(G_i\).
        First, we will show that \(G(\Prob_{n+1},X_{n+1})\) is calibrated, i.e. 
        \begin{equation}
            \label{eq: true_calibrated_prediction}
            \Exp[Y_{n+1}\, | \, G(\Prob_{n+1},X_{n+1})] = G(\Prob_{n+1},X_{n+1}), \quad \text{almost surely}.
        \end{equation}
        To show this, we use the definition of the conditional expectation, according to which it suffices to show that
        \begin{equation}
            \label{eq: conditional_exp_definition}
            \mathbb{E}\Big[Y_{n+1}\one\{G_{n+1}\in B\}\Big]=\mathbb{E}\Big[G_{n+1}\one\{G_{n+1}\in B\}\Big]
        \end{equation}
        for any Borel set \(B\in \BR\).
        The tower property yields
        \begin{equation}
            \label{eq: tower_property}
            \Exp\Big[(Y_{n+1}-G_{n+1})\one\{G_{n+1}\in B\}\Big]
            =  \Exp\Big[\Exp\Big[(Y_{n+1}-G_{n+1})\one\{G_{n+1}\in B\} \Big | \  \Prob_{n+1}\Big]\Big].
        \end{equation}
        We now make use of exchangeability.
        Denote \((\X\times \R)^{r+1}\) by \(\S^{r+1}\).
        The expression \((Y_{n+1}-G_{n+1})\one\{G_{n+1} \in B\}\) is a deterministic function of \(\S_1, \Prob_{n+1}\) and \((X_{n+1},Y_{n+1})\), so we can denote it by \(H(\Prob_{n+1},(X_{n+1},Y_{n+1}),\S_1)\).
        From the disintegration theorem \citep[Theorem~8.5]{Kallenberg_2021}, it follows that
        \begin{equation*}
            \Exp\Big[H(\Prob_{n+1},(X_{n+1},Y_{n+1}),\S_1) \Big | \  \Prob_{n+1}\Big]
             = \int_{\S^{r+1}\times (\X\times \R)} H(\Prob_{n+1},z,s)\, \mu(\Prob_{n+1},d(z,s)),
        \end{equation*}
        where \(\mu(\Prob_{n+1},\cdot)\) is the regular conditional distribution of \(((X_{n+1},Y_{n+1}), \S_1)\) given \(\Prob_{n+1}\).
        Since \(\S_1\) is independent from \(\Prob_{n+1}\) and \((X_{n+1},Y_{n+1})\), its conditional distribution given \(\Prob_{n+1}\) is simply its marginal distribution, which we denote by \(\pi\).
        Therefore, \(\mu\) decomposes as
        \begin{equation*}
            \mu(\Prob_{n+1},d(z,s))=\pi(ds)\mu^\prime(\Prob_{n+1},dz),
        \end{equation*}
        where \(\mu^\prime(\Prob_{n+1},\cdot)\) is now the regular conditional distribution of \(Z_{n+1}\vcentcolon = (X_{n+1},Y_{n+1})\) given \(\Prob_{n+1}\).
        Due to exchangeability, we obtain from \eqref{eq: exchangeability_implication} that \(\mu^\prime(\Prob_{n+1},\cdot)=\Prob_{n+1}\), so
        \begingroup
        \allowdisplaybreaks
        \begin{align*}
            \Exp\Big[H(\Prob_{n+1},(X_{n+1},Y_{n+1}),\S_1)\Big |\  \Prob_{n+1}\Big]
            & = \int_{\S^{r+1}\times(\X\times \R)} H(\Prob_{n+1},z,s)\, \pi(ds)\,  \Prob_{n+1}(dz)\\
            & = \int_{\S^{r+1}}\left(\int_{\X\times \R}H(\Prob_{n+1},z,s)\, \Prob_{n+1}(dz)\right)\, \pi(ds)\\
            & = \int_{\S^{r+1}} \Exp_{(X,Y)\sim \Prob_{n+1}}\Big[H(\Prob_{n+1},(X,Y),s)\Big]\, \pi(ds).
        \end{align*}
        \endgroup
        where the use of Fubini's theorem in the second equality is justified by the fact that \(H\) is integrable by definition.
        For each fixed training set \(s\in \S^{r+1}\), the integrand is equal to
        \begingroup
        \allowdisplaybreaks
        \begin{align*}
            \Exp_{(X,Y)\sim \Prob_{n+1}}
            &\Big[(Y-G(\Prob_{n+1},X))\one\{G(\Prob_{n+1},X) \in B\}\Big]\\
            & = \Exp_{(X,Y)\sim \Prob_{n+1}}\Big[Y\one\{G(\Prob_{n+1},X) \in B\}\Big]-\Exp_{(X,Y)\sim \Prob_{n+1}}\Big[G(\Prob_{n+1},X)\one\{G(\Prob_{n+1},X) \in B\}\Big]\\
            & = 0,
        \end{align*}
        \endgroup
        where the dependence on \(s\) is absorbed by \(G\) and is therefore omitted.
        The last equality follows from the fact that \(G\) is in-sample calibrated, and from the definition of the conditional expectation.
        It now follows that \(\Exp[H(\Prob_{n+1},(X_{n+1},Y_{n+1}),\S_1)|\Prob_{n+1}]=0\), so \eqref{eq: conditional_exp_definition}, \eqref{eq: tower_property} yield that \(G(\Prob_{n+1},X_{n+1})\) is calibrated.
        
        To conclude the proof, we only need to show that $G(\Prob_{n+1},X_{n+1})$ belongs to \(\C_{n+1,\alpha}\) with probability at least \(1-\alpha\).
        This follows from the definition of \(\C_{n+1,\alpha}\) and the assumption on $I_\alpha(X_{n+1})$: 
        \begin{align*}
            \Prob\big(G(\Prob_{n+1},X_{n+1})\in \C_{n+1,\alpha}\, | \, \A \big)
            &=\Prob\left(\exists y\in I_\alpha\left(X_{n+1}\right)\vcentcolon G(\Prob_{n+1},X_{n+1})=G(\Prob_{y},X_{n+1})\, | \, \A \right)\nonumber\\
            &\geq \Prob\left(\exists y\in I_\alpha\left(X_{n+1}\right): Y_{n+1}=y\, | \, \A \right)\\
            &\geq 1-\alpha.
        \end{align*}
    \end{proof}

    \begin{rem}
        \label{rem: CP_Gibbs_2025}
        The prediction set \(I_\alpha(X_{n+1})\) shows up only in the last part of the proof.
        This last part illustrates how the coverage guarantees of \(I_\alpha(X_{n+1})\) carry over to \(\C_{n+1,\alpha}\).
        Those coverage guarantees are expressed via the \(\sigma\)-algebra \(\A\).
        Some examples include:
        \begin{itemize}
            \item Standard split-conformal prediction sets.
            In that case, \(\A\) is the trivial \(\sigma\)-algebra and the coverage offered by \(\C_{n+1,\alpha}\) is only marginally valid.
            \item Conformalized quantile regression \citep{Romano_2019}.
            In that case, \(\A\) is the \(\sigma\)-algebra generated by the training set \(\S_1\).
            More details about this type of conformal prediction set are given in Appendix \ref{sec: app_CP}.
            \item The conformal prediction sets developed by \citet{Gibbs_2025}.
            If \(\{(X_i,Y_i)\}_{i=1}^{n+1}\overset{\text{iid}}{\sim} \prob_{X,Y}\), then these sets satisfy a coverage guarantee of the form
            \begin{equation*}
                1-\alpha
                \leq \Prob\Big(Y_{n+1}\in I_\alpha(X_{n+1})\, \Big|\,  X_{n+1}\in G\Big)
                \leq 1-\alpha + \frac{|\mathcal{G}|}{(n+1)\Prob(X_{n+1}\in G)},\quad \forall G\in \mathcal{G},
            \end{equation*}
            where \(\mathcal{G}\) is a finite collection of groups.
            These sets are more general than Mondrian conformal prediction sets, as they allow for overlapping groups.
            As in the above proof, it can be shown that this property carries over to \(\C_{n+1,\alpha}\).
            Furthermore, \citet{Gibbs_2025} develop randomized conformal prediction sets such that
            \begin{equation}
                \label{eq: stronger_calibration}
                \Exp\Big[f(X_{n+1})\Big(\one\{Y_{n+1}\in I_\alpha(X_{n+1})\}-(1-\alpha)\Big)\Big]=0\quad \text{for all } f\in \mathcal{B},
            \end{equation}
             where \(\mathcal{B} = \left\{\Phi^\top(\cdot)\beta:\, \beta\in \R^d\right\}\) is any finite-dimensional function class generated by a map \(\Phi = (\phi_1,\ldots,\phi_d):\X\to \R^d\).
             
             For recalibration based on isotonic regression, it holds that \(\one\{G(\Prob_{n+1},X_{n+1})\in \C_{n+1,\alpha}\} = \one\{Y_{n+1}\in I_\alpha(X_{n+1})\}\).
             Indeed, by definition it follows that \(\one\{G(\Prob_{n+1},X_{n+1})\in \mathcal{C}_{n+1,\alpha}\}\geq \one\{Y_{n+1}\in I_\alpha(X_{n+1})\}\).
             For the converse, note that if \(I_\alpha(X_{n+1})\) is an interval and if \(Y_{n+1}\notin I_\alpha(X_{n+1})\) then either \(Y_{n+1}<\min I_\alpha(X_{n+1})\) or \(Y_{n+1}>\max I_\alpha(X_{n+1})\).
             Since the isotonic regression fit is pointwise strictly increasing in \(Y_{n+1}\), this would yield that \(G(\mathbb{P}_{n+1},X_{n+1})\notin \mathcal{C}_{n+1,\alpha}\).
             Due to the above argument, \eqref{eq: stronger_calibration} also carries over to \(\C_{n+1,\alpha}\), that is,
        \begin{align*}
            \Exp\Big[f(X_{n+1})\Big(\one\{G(\Prob_{n+1},X_{n+1}) \in \C_{n+1,\alpha}\}-(1-\alpha)\Big)\Big] = 0 \quad \text{for all } f\in \mathcal{B}.
        \end{align*}
        \end{itemize}
        
    \end{rem}

    \section{Over/under-calibration} \label{sec: over_under_calibration}
    The following proposition shows that, with high probability, the upper and lower endpoints \(T_h(\fhat(X_{n+1})), T_{\ell}(\fhat(X_{n+1}))\) of \(\C_{n+1,\alpha}\) are over- and under-calibrated respectively.
    This discourages using best-candidate predictions that lie outside of \(\C_{n+1,\alpha}\) and motivates our proposal of \eqref{eq: best_candidate} as a best candidate. 

    \begin{prop}
        \label{prop: over_under_calibration}
        Suppose that the prediction set \(I_\alpha(X_{n+1})=[\ell_{n+1},h_{n+1}]\) of Theorem \ref{thm:I} satisfies \eqref{eq:Ialpha} for some \(\sigma\)-algebra \(\A\) on the underlying probability space \(\Omega\). 
        Then, for the functions \(T_\ell\) and \(T_h\) of Algorithm \ref{alg: isotonic_recalibration}, it holds with probability at least \(1-\alpha\), conditionally on \(\A\) that
        \begin{equation*}
            \Exp_{(X,Y)\sim \Prob_{n+1}}\Big[Y-T_\ell(\fhat(X))\, \Big|\, T_{\ell}(\fhat(X))\Big]\geq 0\, \text{ and }\, \Exp_{(X,Y)\sim \Prob_{n+1}}\Big[Y-T_h(\fhat(X))\, \Big|\, T_h(\fhat(X))\Big]\leq 0.
        \end{equation*}
    \end{prop}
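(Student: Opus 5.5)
On the event \(E=\{\ell_{n+1}\le Y_{n+1}\le h_{n+1}\}\), which by \eqref{eq:Ialpha} has conditional probability at least \(1-\alpha\) given \(\A\), we show that both inequalities hold deterministically. The claim then follows at once. By symmetry it is enough to treat \(T_\ell\). The case of \(T_h\) is the same argument with all inequalities reversed, or equivalently the \(T_\ell\) argument applied to \(-Y\).

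\textbf{Setup.} Recall that \(T_\ell\) is the isotonic fit to \((\fhat(X_1),Y_1),\ldots,(\fhat(X_n),Y_n),(\fhat(X_{n+1}),\ell_{n+1})\). By Example \ref{ex: isotonic_regression} and Proposition \ref{prop: equivalence_in_sample_binning}, \(T_\ell\) is constant on its blocks \(D_1,\ldots,D_s\subseteq\{1,\ldots,n+1\}\), and its value on block \(D_k\) is the block average \(\beta_k\). Let \(Y^\ell_i=Y_i\) for \(i\le n\) and \(Y^\ell_{n+1}=\ell_{n+1}\). Then
\[
\beta_k=\frac{1}{\#D_k}\sum_{i\in D_k}Y^\ell_i .
\]
Under \((X,Y)\sim\Prob_{n+1}\), the conditional expectation \(\Exp[Y-T_\ell(\fhat(X))\mid T_\ell(\fhat(X))=\beta_k]\) equals the average of \(Y_i-\beta_k\) over the indices \(i\) with \(T_\ell(\fhat(X_i))=\beta_k\). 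Because the \(\beta_k\) are distinct, this index set is exactly \(D_k\). Here we use the no-ties convention; with ties, pooled points are merged and weighted, and the argument is unchanged.

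\textbf{Main step.} For every \(k\),
\[
\frac{1}{\#D_k}\sum_{i\in D_k}(Y_i-\beta_k)=\frac{1}{\#D_k}\sum_{i\in D_k}(Y_i-Y^\ell_i).
\]
This quantity is \(0\) if \(n+1\notin D_k\). If \(n+1\in D_k\), it equals \((Y_{n+1}-\ell_{n+1})/\#D_k\), which is \(\ge 0\) on \(E\). So the conditional expectation is nonnegative on every level set, that is, \(\Prob_{n+1}\)-almost surely, which is the desired inequality for \(T_\ell\). For \(T_h\), the only nonzero term is \((Y_{n+1}-h_{n+1})/\#D_k\le 0\) on \(E\).

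The argument never uses the monotonicity property \eqref{eq: isotonic_regression_ordering}. It relies only on in-sample calibration of \(T_\ell\) and \(T_h\) with respect to the modified samples, combined with the fact that those samples differ from the true augmented sample in a single label.

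\textbf{Expected obstacle.} The only delicate point is the bookkeeping that identifies the conditioning event \(\{T_\ell(\fhat(X))=\beta_k\}\) under \(\Prob_{n+1}\) with the block \(D_k\). This uses Remark \ref{rem: finite_values}, so that \(T_\ell\) takes only the values \(\beta_k\) at the design points, together with the distinctness of block values. The final step, passing to conditional probability given \(\A\), is immediate from \(E\) and \eqref{eq:Ialpha}.
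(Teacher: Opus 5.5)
Your proof is correct and rests on the same observation as the paper's. $T_\ell$ (resp.\ $T_h$) is in-sample calibrated for a modified sample that differs from the true augmented sample only in the label of point $n+1$, so every block average of $Y_i-T_\ell(\fhat(X_i))$ vanishes except in the block containing $n+1$, where it equals $(Y_{n+1}-\ell_{n+1})/\#D_{j^\star}$. The paper bounds the two failure probabilities separately by $\alpha_2=\Prob(Y_{n+1}<\ell_{n+1}\mid\A)$ and $\alpha_1=\Prob(Y_{n+1}>h_{n+1}\mid\A)$, using a union bound over blocks, and then combines them. Your direct inclusion of $\{\ell_{n+1}\le Y_{n+1}\le h_{n+1}\}$ in the success event is equivalent and slightly cleaner.
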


\begin{proof}

        By assumption, \(\Prob(Y_{n+1}\in [\ell_{n+1},h_{n+1}]\, | \, \A)\geq 1-\alpha\).
        Denote \(\Prob(Y_{n+1}> h_{n+1}\, | \, \A)=\alpha_1\) and \(\Prob(Y_{n+1}< \ell_{n+1}\, | \, \A)=\alpha_2\).
        It holds that \(\alpha_1+\alpha_2\leq \alpha\).
        The function \(T_\ell\) is defined as the isotonic regression fit to the points
        \begin{equation*}
            (\fhat(X_1),Y_1),\ldots, (\fhat(X_n),Y_n),(\fhat(X_{n+1}),\ell_{n+1}),
        \end{equation*}
        so it induces a partition \(D_1,\ldots,D_k\) of the index set \(\{1,\ldots,n+1\}\), such that, for all indices \(j\in D_i\),
        \begin{equation*}
            T_\ell(\fhat(X_j))=\frac{1}{\# D_i}\sum_{s\in D_i} Y_s^\prime=\vcentcolon t_i
        \end{equation*}
        where \(Y_s^\prime=Y_s\) for \(s\neq n+1\) and \(Y_{n+1}^\prime = \ell_{n+1}\).
        For all \(s\in \{1,\ldots,n\}\), it holds that \(Y_s\geq Y_s^\prime\).
        It also holds that \(Y_{n+1}\geq Y_{n+1}^\prime\) with probability \(1-\alpha_2\) conditionally on $\A$.
        Denote by \(j^\star\in \{1,\ldots,k\}\) the index such that \(n+1\in D_{j^\star}\).
        It holds that
        \begingroup
        \allowdisplaybreaks
        \begin{align*}
            \Prob\Big(\Exp_{(X,Y)\sim \Prob_{n+1}}
            &\Big[Y-T_\ell(\fhat(X))\, \Big|\, T_\ell(\fhat(X))\Big]<0\, \Big | \, \A\Big)\\
            &=\Prob\left(\bigcup_{j=1}^k \left\{\Exp_{(X,Y)\sim \Prob_{n+1}}\Big[Y\, \Big|\, T_\ell(\fhat(X))=t_j\Big]<t_j\right\} \, \middle| \, \A\right)\\
            & \leq \sum_{j=1}^k \Prob\left(\Exp_{(X,Y)\sim \Prob_{n+1}}\Big[Y\, \Big|\, T_\ell(\fhat(X))=t_j\Big]<t_j \, \middle| \, \A\right)\\
            &=\sum_{j=1}^k \Prob\left(\frac{1}{\# D_j}\sum_{i\in D_j} Y_i<t_j \, \middle| \, \A\right)\\
            & = \sum_{j=1}^k \Prob\left(\frac{1}{\# D_j}\sum_{i\in D_j} Y_i<\frac{1}{\# D_j}\sum_{i\in D_j} Y_i^\prime \, \middle| \, \A\right)\\
            & = \Prob\left(\frac{1}{\# D_{j^\star}}\sum_{i\in D_{j^\star}} Y_i<\frac{1}{\# D_{j^\star}}\sum_{i\in D_{j^\star}} Y_i^\prime \, \middle| \, \A\right)\\
            & = \Prob\left(Y_{n+1}<Y_{n+1}^\prime\, \middle| \, \A\right)\\
            & = \alpha_2.
        \end{align*}
        \endgroup
        In a similar way, we can show that
        \begin{equation*}
            \Prob\Big(\Exp_{(X,Y)\sim \Prob_{n+1}}\Big[Y-T_h(\fhat(X))\, \Big|\, T_h(\fhat(X))\Big]>0\, \Big | \, \A\Big)\leq \alpha_1.
        \end{equation*}
        It follows that, with probability at least \(1-\alpha\) conditionally on \(\A\),
        \begin{equation*}
            \Exp_{(X,Y)\sim \Prob_{n+1}}\Big[Y-T_\ell(\fhat(X))\, \Big|\, T_{\ell}(\fhat(X))\Big]\geq 0\, \text{ and }\, \Exp_{(X,Y)\sim \Prob_{n+1}}\Big[Y-T_h(\fhat(X))\, \Big|\, T_h(\fhat(X))\Big]\leq 0.
        \end{equation*}
\end{proof}

An interesting direction for future work would be the extension of the above proposition to the population level.
The validity of this extension would give further insight on whether \(T_\ell\) and \(T_h\) could really be seen as \emph{calibration boundaries} or whether calibrated predictions can actually also lie outside of the interval \(\C_{n+1,\alpha}\).

    \section{Proofs of results in Section \ref{sec: predictive_performance}}
    \label{sec: app_Proofs_4}

    \begin{proof}[Proof of Lemma \ref{lem: predictive_power_variance}]
        It holds that
        \begin{align*}
            \Exp\Big[(Y-\ftilde(X))^2\Big]
            & = \Exp\Big[(Y-\mu(X)+\mu(X)-\ftilde(X))^2\Big]\\
            & = \Exp\Big[(Y-\mu(X))^2\Big] + \Exp\Big[(\mu(X)-\ftilde(X))^2\Big] + 2\Exp\Big[(Y-\mu(X))(\mu(X)-\ftilde(X))\Big].
        \end{align*}
        Furthermore,
        \begin{align*}
            \Exp\Big[(Y-\mu(X))\mu(X)\Big]
            & = \Exp\Big[\Exp\Big[(Y-\mu(X))\mu(X)\, \Big|\, \mu(X)\Big]\Big]\\
            & = \Exp\Big[\mu(X)\cdot \Exp[Y-\mu(X)|\mu(X)]\Big] = 0,
        \end{align*}
        where the last step follows from the fact that \(\mu(X)\) is calibrated.
        Thus, it follows from the previous equation that
        \begingroup
        \allowdisplaybreaks
        \begin{align*}
            \Exp\Big[(Y-\ftilde(X))^2\Big]
            & = \Exp\Big[(Y-\mu(X))^2\Big] + \Exp\Big[(\mu(X)-\ftilde(X))^2\Big] - 2\Exp\Big[(Y-\mu(X))\ftilde(X)\Big]\\
            & = \Exp\Big[(Y-\mu(X))^2\Big] + \Exp[(\mu(X))^2] + \Exp[(\ftilde(X))^2] - 2\Exp[Y\ftilde(X)].
        \end{align*}
        \endgroup
        It holds that \(\Exp[Y\ftilde(X)]
            = \Exp\Big[\Exp\Big[Y\ftilde(X)\, \Big |\, \ftilde(X)\Big]\Big]
            = \Exp\Big[\ftilde(X)\cdot \Exp[Y|\ftilde(X)]\Big]
            = \Exp[(\ftilde(X))^2]\).
        Plugging that into the previous equation yields that
        \begin{align*}
            \Exp\Big[(Y-\ftilde(X))^2\Big] + \Exp\Big[(\ftilde(X))^2\Big] = \Exp\Big[(Y-\mu(X))^2\Big]+\Exp\Big[(\mu(X))^2\Big].
        \end{align*}
        Equation \eqref{eq: predictive_power_variance} follows from the fact that \(\Exp[\ftilde(X)]=\Exp[\mu(X)]=\Exp[Y]\).
    \end{proof}

    We now move on to the proof of Proposition \ref{prop: pred_performance_binning}.
    We first show that the provided excess risk bound can be tight in extreme cases.
    Indeed, suppose that the partition consists of only one block.
    In that case, \(\mathsf{c}(f^\star(X_{n+1}))\) is equal to the sample average of \(Y_1,\ldots,Y_{n+1}\).
    As a result, Lemma \ref{lem: predictive_power_variance} yields that the excess risk is equal to
    \begin{align*}
        \Exp\left[\left(Y_{n+1}-\frac{1}{n+1}\sum_{i=1}^{n+1}Y_i\right)^2\right]-\Exp\Big[(Y_{n+1}-f^\star(X_{n+1}))^2\Big]
        & = \mathrm{Var}(f^\star(X_{n+1}))-\mathrm{Var}\left(\frac{1}{n+1}\sum_{i=1}^{n+1} Y_i\right)\\
        & = \sigma_{f^\star}^2-\frac{1}{(n+1)^2}\sum_{i=1}^{n+1}\mathrm{Var}(Y_i)\\
        & = \frac{n}{n+1}\sigma_{f^\star}^2-\frac{1}{n+1}\sigma_{\varepsilon}^2,
    \end{align*}
    which matches the upper bound in Proposition \ref{prop: pred_performance_binning}.
    As \(n\to \infty\), this converges to \(\sigma_{f^\star}^2\).
    \begin{proof}[Proof of Proposition \ref{prop: pred_performance_binning}]
        Let $\Prob_{n+1}$ be the empirical distribution of $(X_1,Y_1),\dots,(X_{n+1},Y_{n+1})$.
        We will compute the first and second moments of \(\mathsf{c}(f^\star(X_{n+1}))\) and then we will use the result of Lemma \ref{lem: predictive_power_variance}.
        It follows by exchangeability that
        \begin{align}
            \label{eq: 2nd_moment_exchangeability}
            \Exp\Big[\mathsf{c}(f^\star(X_{n+1}))^2\Big]
            = \Exp\Big[\Exp\Big[\mathsf{c}(f^\star(X_{n+1}))^2\Big]\, \Big | \, \Prob_{n+1}\Big]\nonumber
            & = \Exp\left[\frac{1}{n+1}\sum_{i=1}^{n+1} \left(\mathsf{c}(f^\star(X_i))\right)^2\right]\nonumber \\
            & = \Exp\left[\frac{1}{n+1}\sum_{j=1}^k \#D_j\left(\frac{1}{\#D_j}\sum_{s \in D_j}Y_s\right)^2\right]\nonumber \\
            & = \Exp\left[\frac{1}{n+1}\sum_{j=1}^k \frac{1}{\#D_j} \left(\sum_{s,t \in D_j}Y_sY_t\right)\right].
        \end{align}
        Also,
        \begin{align}
            \label{eq: second_moment}
            \Exp
            &\left[\frac{1}{n+1}\sum_{j=1}^k \frac{1}{\#D_j}\sum_{s,t \in D_j}Y_s Y_t\, \middle| \, \S_1,X_1,\ldots,X_{n+1}\right]\nonumber \\
            & = \frac{1}{n+1}\sum_{j=1}^k \left(\Exp\left[\frac{1}{\#D_j} \sum_{s,t\in D_j} f^\star(X_s)f^\star(X_t) + \frac{1}{\#D_j}\sum_{s,t\in D_j} \varepsilon_s\varepsilon_t\, \middle | \, \S_1,X_1,\ldots,X_{n+1}\right]\right),
        \end{align}
        where the cross terms have been omitted because they are equal to zero.
        Indeed, since \(\varepsilon_1,\ldots,\varepsilon_{n+1}\) are independent from \(D_1,\ldots,D_k\), and since \(\#D_1,\ldots,\#D_k\) are \((\S_1,X_1,\ldots,X_{n+1})\)-measurable, this term can be written as
        \begin{align*}
            \sum_{j=1}^k\Exp\left[\frac{1}{\#D_j}\sum_{s,t\in D_j} f^\star(X_s)\varepsilon_t\, \middle | \, \S_1,X_1,\ldots,X_{n+1}\right]
            & = \sum_{j=1}^k \frac{1}{\# D_j}\sum_{s,t\in D_j}f^\star(X_s)\Exp[\varepsilon_t\, | \, \S_1, X_1,\ldots,X_{n+1}]\\
            & = \sum_{j=1}^k \frac{1}{\# D_j}\sum_{s,t\in D_j}f^\star(X_s)\Exp[\varepsilon_t]\\
            & = 0,
        \end{align*}
        where we also used the fact that \(\# D_j\) is a measurable function of \(X_1,\ldots,X_{n+1}\).

        \begingroup
        \allowdisplaybreaks
        The second term in \eqref{eq: second_moment} is equal to
        \begin{align*}
            \sum_{j=1}^k\Exp\left[\frac{1}{\#D_j}\sum_{s,t\in D_j} \varepsilon_s\varepsilon_t\, \middle | \, \S_1,X_1,\ldots,X_{n+1}\right]
            & = \sum_{j=1}^k\frac{1}{\# D_j} \sum_{s,t\in D_j}\Exp\left[ \varepsilon_s\varepsilon_t\, \middle | \, \S_1, X_1,\ldots,X_{n+1}\right]\\
            & = \sum_{j=1}^k \frac{1}{\# D_j}\sum_{s,t\in D_j}\Exp[\varepsilon_s\varepsilon_t]\\
            & = \sum_{j=1}^k \frac{1}{\# D_j}\sum_{s\in D_j} \Exp[\varepsilon_s^2]\\
            & = \sum_{j=1}^k \sigma_\varepsilon^2\\
            & = k\sigma_\varepsilon^2.
        \end{align*}
        \endgroup
        The first term in \eqref{eq: second_moment} is equal to
        \begin{equation*}
            \sum_{j=1}^k \Exp\left[(\# D_j)\left(\frac{\sum_{s\in D_j} f^\star(X_s)}{\# D_j}\right)^2\, \middle | \, \S_1,X_1,\ldots,X_{n+1}\right]
        \end{equation*}
        By Cauchy-Schwarz,
        \begin{align*}
            \sum_{j=1}^k (\#D_j)\cdot \sum_{j=1}^k\left[(\# D_j)\left(\frac{\sum_{s\in D_j} f^\star(X_s)}{\# D_j}\right)^2\right]
            & \geq \left[\sum_{j=1}^k (\# D_j)\frac{\sum_{s\in D_j} f^\star(X_s)}{\# D_j}\right]^2\\
            & = \left(\sum_{s=1}^{n+1} f^\star(X_s)\right)^2.
        \end{align*}
        Since \(\sum_{j=1}^k (\# D_j)=n+1\), it follows by \eqref{eq: 2nd_moment_exchangeability}, \eqref{eq: second_moment} that
        \begin{align*}
            \Exp\Big[\mathsf{c}(f^\star(X_{n+1}))^2\Big]
            & = \Exp\Big[\Exp\Big[\mathsf{c}(f^\star(X_{n+1}))^2\, \Big| \, \S_1,X_1,\ldots,X_{n+1}\Big]\Big]\\
            & \geq \Exp\left[\frac{1}{(n+1)^2}\left(\sum_{s=1}^{n+1} f^\star(X_s)\right)^2+ \frac{k}{n+1}\sigma_\varepsilon^2\right]\\
            & = \frac{1}{(n+1)^2}\Exp\left[\left(\sum_{s=1}^{n+1} f^\star(X_s)\right)^2\right]+ \frac{\Exp[k]}{n+1}\sigma_\varepsilon^2,
        \end{align*}
        The computation of \(\Exp[\mathsf{c}(f^\star(X_{n+1}))]\) is simpler.
        Indeed, by exchangeability, it follows that
        \begingroup
        \allowdisplaybreaks
        \begin{align*}
            \Exp\Big[\mathsf{c}(f^\star(X_{n+1}))\Big]
            =\Exp\Big[\Exp\Big[\mathsf{c}(f^\star(X_{n+1}))\, \Big | \, \Prob_{n+1}\Big]\Big]
            & = \Exp\left[\frac{1}{n+1}\sum_{i=1}^{n+1} \mathsf{c}(f^\star(X_i))\right] \\
            & = \Exp\left[\frac{1}{n+1}\sum_{j=1}^k \#D_j\left(\frac{1}{\#D_j}\sum_{s \in D_j}Y_s\right)\right] \\
            & = \Exp\left[\frac{1}{n+1}\sum_{j=1}^k \sum_{\ell\in D_j} Y_\ell\right]\\
            & = \Exp\left[\frac{1}{n+1}\sum_{s=1}^{n+1} Y_s\right]\\
            & = \Exp\Big[f^\star(X_{n+1})\Big].
        \end{align*}
        \endgroup
        Finally, it follows from Lemma \ref{lem: predictive_power_variance} that
        \begingroup
        \allowdisplaybreaks
        \begin{align*}
            \Exp\Big[(Y_{n+1}-
            & \mathsf{c}(f^\star(X_{n+1})))^2\Big]-\Exp\Big[(Y_{n+1}-f^\star (X_{n+1}))^2\Big]\\
            & = \mathrm{Var}(f^\star(X_{n+1}))-\mathrm{Var}(\mathsf{c}(f^\star(X_{n+1})))\\
            & = \Exp\Big[f^\star(X_{n+1})^2\Big]-\left(\Exp\Big[f^\star(X_{n+1})\Big]\right)^2-\Exp\Big[\mathsf{c}(f^\star(X_{n+1}))^2\Big]+\left(\Exp\Big[\mathsf{c}(f^\star(X_{n+1}))\Big]\right)^2\\
            & = \Exp\Big[f^\star(X_{n+1})^2\Big]-\Exp\Big[\mathsf{c}(f^\star(X_{n+1}))^2\Big]\\
            & \leq \Exp\Big[f^\star(X_{n+1})^2\Big]-\frac{1}{(n+1)^2}\Exp\left[\left(\sum_{s=1}^{n+1} f^\star(X_s)\right)^2\right]-\frac{\Exp[k]}{n+1}\sigma_\varepsilon^2\\
            & = \Exp\left[f^\star(X_{n+1})^2-\left(\frac{\sum_{s=1}^{n+1} f^\star(X_s)}{n+1}\right)^2\right]-\frac{\Exp[k]}{n+1}\sigma_\varepsilon^2\\
            & = \mathrm{Var}(f^\star(X_{n+1}))-\mathrm{Var}\left(\frac{\sum_{s=1}^{n+1} f^\star(X_s)}{n+1}\right)-\frac{\Exp[k]}{n+1}\sigma_\varepsilon^2\\
            & = \frac{n}{n+1}\sigma_{f^\star}^2-\frac{\Exp[k]}{n+1}\sigma_\varepsilon^2,
        \end{align*}
        \endgroup
        which finishes the proof.
    \end{proof}

    To prove Proposition \ref{prop: pred_performance_isotonic}, we first need to show the following lemma, which shows that isotonic regression reduces the variance in-sample.

    \begin{lem}
        \label{lem: variance_isotonic_regression}
        In the setting of Proposition \ref{prop: pred_performance_isotonic}, it holds that
        \begin{equation*}
            \mathrm{Var}\left(\mathsf{c}(f^\star(X_{n+1}))\right)\leq \mathrm{Var}(Y_{n+1})=\sigma_{f^\star}^2+\sigma_{\varepsilon}^2.
        \end{equation*}
    \end{lem}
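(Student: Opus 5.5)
We prove that the variance of \(\mathsf{c}(f^\star(X_{n+1}))\) is at most that of \(Y_{n+1}\) by splitting it into a mean and a second moment, computing each through exchangeability as in the proof of Proposition \ref{prop: pred_performance_binning}.

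\emph{Mean.} The isotonic fit \(G(\Prob_{n+1},\cdot)\) is a binning procedure whose values are block averages over a partition \(D_1,\ldots,D_k\) of \(\{1,\ldots,n+1\}\); see Example \ref{ex: isotonic_regression} and Proposition \ref{prop: equivalence_in_sample_binning}. Conditioning on \(\Prob_{n+1}\) and using \eqref{eq: exchangeability_implication} gives
\begin{equation*}
\Exp[\mathsf{c}(f^\star(X_{n+1}))]=\Exp\Big[\tfrac{1}{n+1}\textstyle\sum_{i=1}^{n+1}\mathsf{c}(f^\star(X_i))\Big].
\end{equation*}
Block averaging preserves the total sum, so this equals \(\Exp[\tfrac{1}{n+1}\sum_i Y_i]=\Exp[Y_{n+1}]\). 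This step does not need independence of the blocks from the noise.

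\emph{Second moment.} The same exchangeability step gives
\begin{equation*}
\Exp[\mathsf{c}(f^\star(X_{n+1}))^2]=\Exp\Big[\tfrac{1}{n+1}\textstyle\sum_{j}\#D_j\,\bar Y_{D_j}^2\Big],
\end{equation*}
where \(\bar Y_{D_j}\) is the average of the responses in block \(D_j\). The key inequality is the deterministic Jensen (or Cauchy--Schwarz) bound \(\#D_j\,\bar Y_{D_j}^2\le\sum_{s\in D_j}Y_s^2\) within each block. Summing over blocks yields \(\sum_i \mathsf{c}(f^\star(X_i))^2\le \sum_i Y_i^2\), which is the statement that isotonic regression, viewed as an \(L^2(\Prob_{n+1})\) projection onto a cone containing \(0\), is a contraction. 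Taking expectations gives \(\Exp[\mathsf{c}(f^\star(X_{n+1}))^2]\le \Exp[Y_{n+1}^2]\).

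\emph{Conclusion.} Subtracting the equal squared means gives \(\mathrm{Var}(\mathsf{c}(f^\star(X_{n+1})))\le \mathrm{Var}(Y_{n+1})\). The equality \(\mathrm{Var}(Y_{n+1})=\sigma_{f^\star}^2+\sigma_\varepsilon^2\) follows from Assumption \ref{assum: pred_performance_iid}: the cross term \(\mathrm{Cov}(f^\star(X),\varepsilon)\) vanishes because \(\varepsilon\) is independent of \(X\) and has mean zero.

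The main obstacle is that, unlike in Proposition \ref{prop: pred_performance_binning}, the isotonic blocks depend on the responses \(Y_i\) themselves. The conditional-on-covariates computation used there is therefore unavailable, since the cross terms in the noise no longer vanish. The plan avoids this by working pathwise: the per-block Jensen bound and the sum-preservation identity hold for any partition, data-dependent or not, so the only probabilistic input is exchangeability. This requires the \(\mathsf{c}(f^\star(X_i))\) to be \(G(\Prob_{n+1},X_i)\) with \(G\) symmetric in the sample, which holds because isotonic regression is permutation-invariant when there are no ties. Integrability follows from \(Y\in L^2\).
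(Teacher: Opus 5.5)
Your proof is correct and follows the paper's argument: exchangeability reduces $\Exp[\mathsf{c}(f^\star(X_{n+1}))^2]$ to an empirical average, the deterministic bound $\sum_i \mathsf{c}(f^\star(X_i))^2 \le \sum_i Y_i^2$ then gives $\Exp[\mathsf{c}(f^\star(X_{n+1}))^2]\le \Exp[Y_{n+1}^2]$, and equality of the means finishes the proof. The differences are cosmetic: you get the norm bound by blockwise Jensen instead of the Pythagorean identity for projection onto the monotone cone, and you get equality of the means from sum-preservation of block averages instead of from calibration and the tower property.
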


    \begin{proof}
        Without loss of generality, we may assume that \(f^\star(X_1)\leq \ldots\leq f^\star(X_{n+1})\).
        Then, the vector \(\boldsymbol{C}\vcentcolon = (\mathsf{c}(f^\star(X_1)),\ldots,\mathsf{c}(f^\star(X_{n+1})))\) is the projection of \(\boldsymbol{Y}\vcentcolon = (Y_1,\ldots,Y_{n+1})\) onto the convex cone \(\C\vcentcolon = \left\{\boldsymbol{\theta}\in \R^{n+1}:\theta_1\leq \ldots \leq \theta_{n+1}\right\}\).
        Therefore, from the Pythagorean theorem, it follows that \(\lVert \boldsymbol{Y}\rVert_2^2=\lVert \boldsymbol{Y}-\boldsymbol{C}\rVert_2^2+\lVert \boldsymbol{C}\rVert_2^2\geq \lVert \boldsymbol{C}\rVert_2^2\), which is equivalent to
        \begin{equation*}
            \sum_{i=1}^{n+1} \mathsf{c}(f^\star(X_i))^2\leq \sum_{i=1}^{n+1} Y_i^2.
        \end{equation*}
        This yields that
        \begin{equation}
            \label{eq: projection_norm}
            \Exp\left[\mathsf{c}(f^\star(X_{n+1}))^2\right]\overset{\text{exch.}}{=}\frac{1}{n+1}\Exp\left[\sum_{i=1}^{n+1} \mathsf{c}(f^\star(X_i))^2\right]\leq \frac{1}{n+1}\Exp\left[\sum_{i=1}^{n+1} Y_i^2\right]\overset{\text{exch.}}{=}\Exp[Y_{n+1}^2].
        \end{equation}
        The tower property and the fact that \(\mathsf{c}(f^\star(X_{n+1}))\) is calibrated yield that
        \begin{equation*}
            \Exp[Y_{n+1}]=\Exp\Big[\Exp[Y_{n+1}|\mathsf{c}(f^\star(X_{n+1}))]\Big]=\Exp\left[\mathsf{c}(f^\star(X_{n+1}))\right].
        \end{equation*}
        Thus, \eqref{eq: projection_norm} yields that \(\mathrm{Var}\left(\mathsf{c}(f^\star(X_{n+1}))\right)\leq \mathrm{Var}(Y_{n+1})=\sigma_{f^\star}^2+\sigma_{\varepsilon}^2\), which finishes the proof.
    \end{proof}

    \begin{proof}[Proof of Proposition \ref{prop: pred_performance_isotonic}]
        Since \(\Exp[f^\star(X_{n+1})]=\Exp[\mathsf{c}(f^\star(X_{n+1}))]\), we have
        \begin{align}
            \label{eq: pred_performance_intermediate}
            \Big|\mathrm{Var}(f^\star(X_{n+1}))
            &- \mathrm{Var}(\mathsf{c}(f^\star(X_{n+1})))\Big|\nonumber \\
            & = \Big|\Exp\Big[f^\star(X_{n+1})^2 - \mathsf{c}(f^\star(X_{n+1}))^2\Big]\Big| \nonumber \\
            & = \Big|\Exp\Big[\Big(f^\star(X_{n+1}) - \mathsf{c}(f^\star(X_{n+1}))\Big)\Big(f^\star(X_{n+1}) + \mathsf{c}(f^\star(X_{n+1}))\Big)\Big]\Big|.
        \end{align}
        It holds that
        \begin{equation*}
            \Exp\Big[\left(f^\star(X_{n+1})-\mathsf{c}(f^\star(X_{n+1}))\right)\Exp[f^\star(X_{n+1})]\Big]=0,
        \end{equation*}
        so, by the triangle inequality, the last term in \eqref{eq: pred_performance_intermediate} can be bounded by the sum of
        \begin{equation*}
            \Big|\Exp\Big[\Big(f^\star(X_{n+1}) - \mathsf{c}(f^\star(X_{n+1}))\Big)\Big(f^\star(X_{n+1}) -\Exp[f^\star(X_{n+1})]\Big)\Big]\Big|
        \end{equation*}
        and 
        \begin{equation*}
            \Big|\Exp\Big[\Big(f^\star(X_{n+1}) - \mathsf{c}(f^\star(X_{n+1}))\Big)\Big(\mathsf{c}(f^\star(X_{n+1})) -\Exp[\mathsf{c}(f^\star(X_{n+1}))]\Big)\Big]\Big|
        \end{equation*}
        From Cauchy-Schwarz, it follows that the sum of these two terms is upper bounded by
        \begin{align}
            \label{eq: pred_performance_CS}
            \Exp\Big[(f^\star(X_{n+1})-\mathsf{c}(f^\star(X_{n+1})))^2
            &\Big]^{1/2}\Big(\sqrt{\mathrm{Var}\left(f^\star(X_{n+1})\right)}+\sqrt{\mathrm{Var}\left(\mathsf{c}(f^\star(X_{n+1}))\right)}\Big)\nonumber \\
            & \leq \Big(\sigma_{f^\star}+\sqrt{\sigma_{\varepsilon}^2+\sigma_{f^\star}^2}\Big)\Exp\Big[(f^\star(X_{n+1})-\mathsf{c}(f^\star(X_{n+1})))^2\Big]^{1/2}
        \end{align}
        where the last inequality follows from Lemma \ref{lem: variance_isotonic_regression}.
        The quantity
        \begin{equation}
        \label{eq: risk_cond_expectation}
            \frac{1}{n+1}\sum_{i=1}^{n+1}\Exp\Big[\left(f^\star(X_i) - \mathsf{c}(f^\star(X_i))\right)^2 \, \Big|\, f^\star(X_1),\dots,f^\star(X_{n+1})\Big]
        \end{equation}
        can be written as $\Exp\left[\ell^2\left(\bm{f^\star}, \bm{C}\right)\, \middle| \,f^\star(X_1),\dots,f^\star(X_{n+1})\right]$, where
        \begin{equation*}
        \ell^2\left(\bm{a},\bm{b}\right)\vcentcolon =\frac{1}{n+1}\sum_{i=1}^{n+1} \left(a_i-b_i\right)^2.
        \end{equation*}
        and \(\bm{f^\star}=(f^\star(X_1),\ldots,f^\star(X_{n+1}))\),  \(\bm{C}\vcentcolon = (\mathsf{c}(f^\star(X_1)),\ldots,\mathsf{c}(f^\star(X_{n+1})))\),
        Therefore, by \citet[Theorem 2.2]{Zhang_2002}, it follows that
        \begin{multline}
        \label{eq: risk_cond_bound}
        \frac{1}{n+1}\sum_{i=1}^{n+1}\Exp\left[(f^\star(X_i) - \mathsf{c}(f^\star(X_{i})))^2 \, \middle| \,   f^\star(X_1),\dots,f^\star(X_{n+1})\right] \\
        \lesssim \left(\frac{\sigma_{\varepsilon}^2 \left[\max_i f^\star(X_i)-\min_i f^\star(X_i)\right]}{n+1}\right)^{2/3}+\frac{\sigma_{\varepsilon}^2\log n}{n+1}.
        \end{multline}
        The expectation of the left-hand side is equal to $\Exp[(f^\star(X_{n+1}) - \mathsf{c}(f^\star(X_{n+1})))^2]$ due to exchangeability.
        Therefore, \eqref{eq: risk_cond_bound} yields that
        \begin{equation*}
        \Exp\Big[(f^\star(X_{n+1}) - \mathsf{c}(f^\star(X_{n+1})))^2\Big]\leq \Exp\left(\frac{\sigma_{\varepsilon}^2 \left[\max_i f^\star(X_i)-\min_i f^\star(X_i)\right]}{n+1}\right)^{2/3}+\frac{\sigma_{\varepsilon}^2\log (n+1)}{n+1}.
        \end{equation*}
        The claim now follows from \eqref{eq: pred_performance_CS} and Lemma \ref{lem: predictive_power_variance}.
    \end{proof}

    \section{Proof of Theorem \ref{thm: consistency}}\label{sec: app_proof_consistency}

    Before we start the proof of the theorem, we provide some interpretation of Assumptions \ref{assum: consistency_iid}-\ref{assum: consistency_subgaussian}.
    Assumption \ref{assum: consistency_iid} is a standard assumption that we also made in previous sections and Assumption \ref{assum: consistency_bounded_mu} ensures that there are enough points \(\mu(X_i)\) in every interval \(I\subseteq \R\) \citep{Henzi_2023, Moesching_2020}.
    Assumption \ref{assum: consistency_fhat} states that the base model \(\fhat_r\) consistently estimates a monotone transformation of the conditional mean \(\mu\).
    This assumption is convenient for isotonic recalibration as it ensures that \(\fhat_r\) approximates a monotone transformation of the conditional mean \(\mu(x)\).
    This implies that, asymptotically, \(\fhat_r\) can correctly order \(\Exp[Y_1|X_1],\ldots,\Exp[Y_{n+1}|X_{n+1}]\), which justifies fitting isotonic regression to \((\fhat_r(X_1),Y_1),\ldots,(\fhat_r(X_{n+1}),Y_{n+1})\).
    Finally, Assumption \ref{assum: consistency_subgaussian} is necessary for concentration arguments.
    This assumption was not necessary in similar results in \citet{henzi2021isotonic, Henzi_2023, Moesching_2020}, which only treated isotonic regression with binary labels.
    
    We first introduce some notation that will be used in the proof.
    First of all, we define the event
    \begin{equation}
    \label{eq: B_r}
    B_r=\left\{\sup_{x\in \X}\left|T(\fhat_r(x))-\mu(x)\right|\leq C_0\delta_r\right\},
    \end{equation}
    where \(\delta_r = (\log r/r)^{1/3}\).
    Assumption \ref{assum: consistency_fhat} yields that \(\Prob(B_r)\overset{r\to \infty}{\longrightarrow} 1\).
    We define
    \begin{align*}
    w_{uv}=v-u+1,
    \quad
    &F_{uv}=\frac{1}{w_{uv}}\sum_{i=u}^v Y_{i}\\
    \overline{F}_{uv}=\frac{1}{w_{uv}}\sum_{i=u}^v \mu\left(X_{i}\right),
    \quad
    &\widehat{F}_{uv} = \frac{1}{w_{uv}}\sum_{i=u}^v T\left(\fhat_r\left(X_{i}\right)\right).
    \end{align*}
    Set
    \begin{equation}
        \label{eq: M_n_pi}
        M_{n}=\max_{1\leq u\leq v\leq n}w_{uv}^{1/2}\left|F_{uv}-\overline{F}_{uv}\right|.
    \end{equation}
    The proof of Theorem \ref{thm: consistency} is largely based on the following lemma, which is analogous to \citet[Lemma~A.2]{Henzi_2023}.
    
    \begin{lem}
        \label{lem: M_n_pi_bound}
        Under Assumptions \ref{assum: consistency_iid}-\ref{assum: consistency_subgaussian}, there exists a constant \(s=s(\sigma)\), such that
        \begin{equation*}
            \lim_{n\to \infty} \Prob\left(M_{n} \geq s(\log n)^{1/2}\right)=0.
        \end{equation*}
    \end{lem}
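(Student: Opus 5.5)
Here the indices \(1,\ldots,n\) are understood to be ordered by increasing \(\fhat_r(X_i)\), as is implicit in the definition of \(F_{uv}\). The plan is to condition on the training set \(\S_1\) and on the covariates \(X_1,\ldots,X_n\), so that the ordering of the indices is fixed. Then \(w_{uv}^{1/2}(F_{uv}-\overline{F}_{uv}) = w_{uv}^{-1/2}\sum_{i=u}^v \varepsilon_i\), where \(\varepsilon_i = Y_i-\mu(X_i)\).

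By Assumption \ref{assum: consistency_iid}, the \((X_i,Y_i)\) are iid and independent of \(\S_1\). Hence, conditionally on \(\S_1\) and \(X_1,\ldots,X_n\), the \(\varepsilon_i\) are independent and mean zero. By Assumption \ref{assum: consistency_subgaussian}, each \(\varepsilon_i\) is sub-Gaussian with scale \(\sigma\). Consequently, for each fixed pair \(u\le v\), the normalized sum \(w_{uv}^{-1/2}\sum_{i=u}^v \varepsilon_i\) is sub-Gaussian with scale \(\sigma\), and the standard Hoeffding/Chernoff bound gives
\begin{equation*}
\Prob\Big(w_{uv}^{1/2}|F_{uv}-\overline{F}_{uv}|\ge t \,\Big|\, \S_1,X_1,\ldots,X_n\Big)\le 2\exp\!\big(-t^2/(2\sigma^2)\big).
\end{equation*}
The one point to verify here is that conditioning on the whole covariate vector does not destroy conditional independence or the sub-Gaussian property of \(\varepsilon_i\) given \(X_i\). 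This follows because the pairs are iid and independent of \(\S_1\), so the conditional law of \(\varepsilon_i\) given all covariates and \(\S_1\) equals its law given \(X_i\) alone.

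Next, I apply a union bound over the at most \(n(n+1)/2\le n^2\) pairs \((u,v)\), with \(t=s(\log n)^{1/2}\):
\begin{equation*}
\Prob\big(M_n\ge s(\log n)^{1/2}\,\big|\,\S_1,X_1,\ldots,X_n\big)\le 2n^2 \exp\!\big(-s^2\log n/(2\sigma^2)\big) = 2n^{2-s^2/(2\sigma^2)}.
\end{equation*}
Choosing \(s=s(\sigma)=2\sigma\) (any \(s>2\sigma\) also works) makes the exponent negative; with \(s=2\sigma\) the bound is \(2n^{0}\)? No — to be safe I take \(s=\sqrt{6}\,\sigma\), which gives the bound \(2n^{-1}\to 0\). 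This bound is deterministic and uniform in the conditioning variables. Taking expectations over \(\S_1\) and \(X_1,\ldots,X_n\) therefore gives the unconditional statement.

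No deeper chaining argument is needed, since \(\log n\) already absorbs the polynomial number of intervals. The main obstacle is only the careful handling of the data-dependent ordering induced by \(\fhat_r\). Freezing that ordering by conditioning, as in \citet[Lemma~A.2]{Henzi_2023}, resolves it. Assumptions \ref{assum: consistency_bounded_mu} and \ref{assum: consistency_fhat} are not needed for this lemma.
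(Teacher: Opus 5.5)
Your proposal is correct and follows essentially the same route as the paper: a sub-Gaussian tail bound for each fixed pair \((u,v)\), a union bound over the \(O(n^2)\) index pairs, and a choice of \(s\) large relative to \(\sigma\) so that the resulting power of \(n\) vanishes. Your explicit conditioning on \(\S_1\) and \(X_1,\ldots,X_n\) to freeze the \(\fhat_r\)-induced ordering makes rigorous a step the paper leaves implicit; the only slip is the aside about \(s=2\sigma\), which gives exponent zero rather than a negative one, so a strict \(s>2\sigma\) is needed (as your final choice \(s=\sqrt{6}\,\sigma\) provides).
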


    \begin{proof}
        Since \(\{Y_i-\mu(X_i)\}_{i=1}^{\infty}\) are conditionally sub-Gaussian, it follows that, for any fixed \(t>0\),
        \begin{align*}
            \Prob\left(w_{uv}^{1/2}\left|F_{uv}-\overline{F}_{uv}\right|\geq t\right)
             = \Prob\left(\left|F_{uv}-\overline{F}_{uv}\right|\geq \frac{t}{w_{uv}^{1/2}}\right)
            & \leq 2\exp\left\{-\frac{cw_{uv}t^2}{w_{uv}\sigma^2}\right\}\\
            & = 2\exp\left\{-\frac{ct^2}{\sigma^2}\right\}
        \end{align*}
        for some absolute constant \(c\).
        Therefore, a union bound yields
        \begingroup
        \allowdisplaybreaks
        \begin{align*}
            \Prob\left(M_{n} \geq s(\log n)^{1/2}\right)
            & \leq \sum_{1\leq u\leq v\leq n} \Prob\left(w_{uv}^{1/2}\left|F_{uv}-\overline{F}_{uv}\right|\geq s(\log n)^{1/2}\right)\\
            & =2\left(\binom{n}{2}+n\right)\exp\left\{-\frac{cs^2\log n}{\sigma^2}\right\}\\
            & = n(n+1)\cdot n^{-cs^2/\sigma^2}.
        \end{align*}
        \endgroup
        Choosing \(s>\sigma\sqrt{2/c}\) implies that \(\lim_{n\to \infty} \Prob\left(M_{n} \geq s(\log n)^{1/2}\right) = 0.\)
    \end{proof}

    We now restate a result from \citet{Moesching_2020} which we use in the proof of the next lemma.

    \begin{prop}[\citet{Moesching_2020}]
        \label{prop: Moesching_2020}
        Let \(\mathbb{Q}_n\) be the empirical distribution of independent random variables \(X_1,\ldots,X_n\) with distribution \(\mathbb{Q}\).
        Let \(\delta_n>0\) be a sequence such that \(\delta_n\longrightarrow 0\) and \(n\delta_n/\log n \longrightarrow \infty\) as $n\to \infty$.
        Then, there exist numbers \(\epsilon_n\) such that \(\epsilon_n\rightarrow 0\) and
        \begin{equation*}
            \inf\left\{\frac{\mathbb{Q}_n(I)}{\mathbb{Q}(I)}\, \middle| \, \text{intervals } I\subset \R \text{ with } \mathbb{Q}(I)\geq \delta_n\right\}\geq 1-\epsilon_n
        \end{equation*}
        with asymptotic probability 1.
    \end{prop}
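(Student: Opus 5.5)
The plan is to reduce to the uniform case by a quantile transformation, then control the empirical measure on a finite grid of intervals with a multiplicative Chernoff bound and a union bound. Tolerance parameters \(\eta_n\to 0\) are chosen slowly enough that the union bound still vanishes, and every interval is approximated from inside by a grid interval.

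\textbf{Step 1 (reduction to uniforms).} Let \(Q^{-1}\) be the quantile function of \(\mathbb{Q}\). Take \(U_1,\ldots,U_n\) iid \(\mathrm{Unif}(0,1)\). Then \((Q^{-1}(U_i))_i\) has the same joint law as \((X_i)_i\), so we may assume \(X_i=Q^{-1}(U_i)\). For an interval \(I\subset\R\), the set \(J=\{u\in(0,1): Q^{-1}(u)\in I\}\) is an interval, because \(Q^{-1}\) is nondecreasing. It satisfies \(\lambda(J)=\mathbb{Q}(I)\), where \(\lambda\) is Lebesgue measure, and \(\mathbb{Q}_n(I)=\mathbb{U}_n(J)\), where \(\mathbb{U}_n\) is the empirical measure of the \(U_i\). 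This also disposes of atoms of \(\mathbb{Q}\). It therefore suffices to prove the claim for \(\mathbb{U}_n\) and intervals \(J\subset(0,1)\) with \(\lambda(J)\geq\delta_n\).

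\textbf{Step 2 (grid approximation).} Fix \(\eta=\eta_n\in(0,1/4)\) and set \(m=\lceil 1/(\eta\delta_n)\rceil\). Let \(\mathcal{J}_n\) be the family of intervals \([a/m,b/m]\) with integers \(0\le a<b\le m\) and length at least \((1-2\eta)\delta_n\). Then \(\#\mathcal{J}_n\le m^2\).

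Any interval \(J\) with \(\lambda(J)\geq\delta_n\) contains some \(J'\in\mathcal{J}_n\) with \(\lambda(J')\geq\lambda(J)-2/m\geq(1-2\eta)\lambda(J)\). Consequently
\begin{equation*}
\frac{\mathbb{U}_n(J)}{\lambda(J)}\geq(1-2\eta)\,\frac{\mathbb{U}_n(J')}{\lambda(J')}.
\end{equation*}

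\textbf{Step 3 (concentration and union bound).} Since \(n\mathbb{U}_n(J')\sim\mathrm{Bin}(n,\lambda(J'))\), the multiplicative Chernoff bound gives
\begin{equation*}
\Prob\big(\mathbb{U}_n(J')\le(1-\eta)\lambda(J')\big)\le\exp\big(-n\eta^2\lambda(J')/2\big)\le\exp\big(-n\eta^2\delta_n/4\big).
\end{equation*}
The union bound over \(\mathcal{J}_n\) bounds the failure probability by \(m^2\exp(-n\eta_n^2\delta_n/4)\).

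Eventually \(\delta_n\geq\log n/n\), so \(m\lesssim n/(\eta_n\log n)\) and \(\log m^2\le 2\log n+2\log(1/\eta_n)\). Let \(a_n=n\delta_n/\log n\to\infty\) and choose \(\eta_n=a_n^{-1/4}\wedge 1/8\). Then \(n\eta_n^2\delta_n=a_n^{1/2}\log n\), which dominates \(2\log n+2\log(1/\eta_n)=O(\log n+\log a_n)\). This holds because \(\log a_n\le\log n\) for large \(n\), given \(\delta_n\to0\). Hence the failure probability tends to \(0\).

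On the complement event, every interval with \(\lambda(J)\ge\delta_n\) satisfies \(\mathbb{U}_n(J)/\lambda(J)\geq(1-\eta_n)(1-2\eta_n)=:1-\epsilon_n\), and \(\epsilon_n\to0\). By Step 1, this is the claimed bound for \(\mathbb{Q}_n\) and \(\mathbb{Q}\).

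The main obstacle is the bookkeeping in Step 3. The tolerance \(\eta_n\) must shrink, yet the grid cardinality \(m^2\), which grows with \(1/\eta_n\), must not swamp the exponential term. The rate condition \(n\delta_n/\log n\to\infty\) is exactly what makes a choice such as \(\eta_n=a_n^{-1/4}\) work. Step 1 is the other point needing care: the preimage under \(Q^{-1}\) of an interval must again be an interval of equal measure, even when \(\mathbb{Q}\) has atoms.
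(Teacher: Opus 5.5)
Your proof is correct. The paper gives no argument for this proposition. It is quoted from \citet{Moesching_2020} and used as a black box in the proof of Lemma~\ref{lem: g_hat_dense}, so your proposal is a self-contained replacement for a citation rather than a variant of an in-paper proof.

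The steps check out:
\begin{itemize}
\item Under the coupling \(X_i=Q^{-1}(U_i)\), every interval \(I\) with \(\mathbb{Q}(I)\ge\delta_n\) is sent to an interval \(J\subset(0,1)\) with \(\lambda(J)=\mathbb{Q}(I)\) and \(\mathbb{U}_n(J)=\mathbb{Q}_n(I)\), atoms included.
\item The inner grid approximation is valid. For a general interval \(J\) with endpoints \(c<d\), possibly open at either end, take \(a=\lfloor cm\rfloor+1\) and \(b=\lceil dm\rceil-1\). This gives \([a/m,b/m]\subset J\) with \(\lambda(J)-\lambda([a/m,b/m])\le 2/m\), as you claimed.
\item Only the one-sided multiplicative Chernoff bound is needed, since the statement is a lower bound.
\item Your bookkeeping bounds the failure probability by a constant times \(\exp\{(5/2)\log n-a_n^{1/2}\log n/4\}\), which tends to zero because \(a_n\to\infty\).
\end{itemize}

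Compared with the citation, your route shows exactly where \(n\delta_n/\log n\to\infty\) is used, and it produces an explicit \(\epsilon_n\le 3a_n^{-1/4}\).

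The choice \(\eta_n=a_n^{-1/4}\) is more conservative than necessary. Take instead \(\eta_n=K a_n^{-1/2}\). Then \(n\eta_n^2\delta_n=K^2\log n\), and eventually \(m\le n+1\). The union bound becomes \(O(n^{2-K^2/4})\), and for \(K\) large enough you get the sharper rate \(\epsilon_n=O\bigl((\log n/(n\delta_n))^{1/2}\bigr)\).

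For the paper's application, where \(\delta_n\asymp(\log n/n)^{1/3}\) and only \(\epsilon_n\to0\) matters, either choice is enough.
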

    The sequence \(\delta_n=(\log n/n)^{1/3}\) that we defined earlier satisfies the conditions of Proposition \ref{prop: Moesching_2020}.
    The next lemma, which was proved by \citet{Henzi_2023}, shows that the set \(\{T(\fhat_r(X_{j})):j=1,\ldots,n\}\) is asymptotically dense in \(I\).
    For the sake of completeness, we also provide a detailed proof here.
    If no further details are given, asymptotic statements refer to the fully asymptotic regime \(n,r\to \infty\).

    \begin{lem}[\citet{Henzi_2023}]
        \label{lem: g_hat_dense}
        For any set \(B\subset I\), we define
        \begin{equation*}
            \widehat{w}(B)=\# \left\{j\in \{1,\ldots,n\}: T(\fhat_r(X_{j}))\in B\right\}.
        \end{equation*}
        Then, under Assumptions \ref{assum: consistency_iid}-\ref{assum: consistency_subgaussian}, and if \(r\leq n\), the event
        \begin{equation*}
            \left\{\inf\left\{\frac{\widehat{w}(I_r)}{n\lambda(I_r)}: \text{intervals } I_r\subset I \text{ with } \lambda(I_r)\geq 4C_0\delta_r\right\}\geq D\right\}
        \end{equation*}
        has asymptotic probability \(1\) for any \(D<C_1/2\), where \(\delta_r=(\log r/r)^{1/3}\).
    \end{lem}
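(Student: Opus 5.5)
The plan is to transfer the problem from the estimated scores \(T(\fhat_r(X_j))\) to the true conditional means \(\mu(X_j)\). On the event \(B_r\) from \eqref{eq: B_r} the two differ by at most \(C_0\delta_r\) uniformly. Lemma \ref{lem: g_hat_dense} then reduces to a uniform lower bound on the empirical measure of intervals, which is exactly what Proposition \ref{prop: Moesching_2020} provides. Throughout, let \(\mathbb{Q}\) denote the law of \(\mu(X_1)\) and \(\mathbb{Q}_n\) the empirical distribution of \(\mu(X_1),\ldots,\mu(X_n)\). These are iid by Assumption \ref{assum: consistency_iid}, and they do not depend on \(\S_1\) at all.

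\textbf{Step 1 (shrinking).} Fix an interval \(I_r=[a,b]\subset I\) with \(\lambda(I_r)\ge 4C_0\delta_r\), and set \(I_r'=[a+C_0\delta_r,\,b-C_0\delta_r]\subset I\).
\begin{itemize}
\item On \(B_r\), \(\mu(X_j)\in I_r'\) implies \(T(\fhat_r(X_j))\in I_r\). Hence \(\widehat{w}(I_r)\ge n\,\mathbb{Q}_n(I_r')\).
\item Moreover \(\lambda(I_r')=\lambda(I_r)-2C_0\delta_r\ge \lambda(I_r)/2\).
\item By Assumption \ref{assum: consistency_bounded_mu}, the density of \(\mathbb{Q}\) is at least \(C_1\) on \(I\). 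Therefore
\begin{equation*}
\mathbb{Q}(I_r')\ \ge\ C_1\lambda(I_r)/2\ \ge\ 2C_1C_0\delta_r .
\end{equation*}
\end{itemize}

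\textbf{Step 2 (uniform empirical bound).}
\begin{itemize}
\item Since \(r\le n\) and \(t\mapsto(\log t/t)^{1/3}\) is decreasing for \(t\ge 3\), we have \(\delta_r\ge\delta_n\). Thus every shrunken interval satisfies \(\mathbb{Q}(I_r')\ge \tilde\delta_n:=2C_1C_0\delta_n\).
\item The sequence \(\tilde\delta_n\) satisfies \(\tilde\delta_n\to0\) and \(n\tilde\delta_n/\log n\to\infty\).
\item Proposition \ref{prop: Moesching_2020} then gives numbers \(\epsilon_n\to0\) such that, with asymptotic probability one, \(\mathbb{Q}_n(J)\ge(1-\epsilon_n)\mathbb{Q}(J)\) simultaneously for all intervals \(J\) with \(\mathbb{Q}(J)\ge\tilde\delta_n\). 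This bound is uniform over intervals, so the data-independent but arbitrary choice of \(I_r\) causes no issue.
\end{itemize}

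\textbf{Step 3 (combine).} On the intersection of \(B_r\) with the event of Step 2, every admissible \(I_r\) satisfies
\begin{equation*}
\frac{\widehat{w}(I_r)}{n\lambda(I_r)}\ \ge\ \frac{\mathbb{Q}_n(I_r')}{\lambda(I_r)}\ \ge\ (1-\epsilon_n)\frac{C_1}{2}.
\end{equation*}
For any fixed \(D<C_1/2\) the right-hand side exceeds \(D\) once \(n\) is large. Both events have probability tending to one: \(B_r\) by Assumption \ref{assum: consistency_fhat} as \(r\to\infty\), and the other by Proposition \ref{prop: Moesching_2020} as \(n\to\infty\). Hence so does their intersection in the joint regime \(n,r\to\infty\).

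The main point of care is Step 2: the threshold in Proposition \ref{prop: Moesching_2020} is indexed by \(n\), whereas the interval lengths are controlled through \(\delta_r\). This mismatch is resolved by the monotonicity \(\delta_r\ge\delta_n\), which uses the assumption \(r\le n\), together with the rescaling of the threshold by the constant \(2C_1C_0\), which leaves its admissibility conditions intact.
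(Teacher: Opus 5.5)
Your proposal is correct and follows essentially the same route as the paper. On \(B_r\) you compare \(\widehat{w}(I_r)\) with the number of \(\mu(X_j)\) lying in the interval shrunk by \(C_0\delta_r\) at each end, which keeps at least half the length. You then apply Proposition \ref{prop: Moesching_2020} to \(\mu(X_1),\ldots,\mu(X_n)\) with threshold \(2C_0C_1\delta_n\), using \(r\le n\) to get \(\delta_r\ge\delta_n\) and the density bound \(C_1\) to pass from \(\mathbb{Q}\) to Lebesgue measure. The only cosmetic difference is that you bound \(\mathbb{Q}_n(I_r')/\lambda(I_r)\) directly, whereas the paper first passes to the infimum \(m_{r,n}\) over shrunken intervals.
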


    \begin{proof}
        Define \(w(B)=\# \{j\in \{1,\ldots,n\}:\mu(X_{j})\in B\}\) for \(B\subseteq I\).
        On the event \(B_r\) defined in \eqref{eq: B_r}, and for any interval \(J\subseteq I\) with \(\lambda(J)\geq 4C_0\delta_r\), it holds that
        \begin{align}
            \label{eq: w_inequality}
            \widehat{w}(J)-w(J)
            &\geq -\# \left\{j\in \{1,\ldots,n\}:T(\fhat_r(X_{j}))\notin J, \mu(X_j)\in J\right\}\nonumber \\
            & \geq -w\left(\{z\in J: z+C_0\delta_r\notin J \text{ or } z-C_0\delta_r\notin J\right\}).
        \end{align}
        The first inequality follows from the fact that
        \begin{align*}
            w(J)-
            &\# \left\{j\in \{1,\ldots,n\}:T(\fhat_r(X_{j}))\notin J,\quad \mu(X_{j})\in J\right\} \nonumber \\
            & =\# \{j\in \{1,\ldots,n\}:\mu(X_{j})\in J\} - \# \left\{j\in \{1,\ldots,n\}:T(\fhat_r(X_{j}))\notin J,\quad \mu(X_{j})\in J\right\}\nonumber \\
            & = \# \{j\in \{1,\ldots,n\}: \mu(X_{j})\in J,\quad T(\fhat_r(X_{j}))\in J\}\nonumber \\
            & \leq \widehat{w}(J).
        \end{align*}
        The second inequality follows from the fact that, if \(T(\fhat_r(X_{j}))\notin J\) and \(\mu(X_{j})\in J\), then, since we are under \(B_r\), it must hold that either \(\mu(X_{j})+C_0\delta_r\notin J\) or \(\mu(X_{j})-C_0\delta_r\notin J\) (that is, \(\mu(X_{j})\) must be close to the endpoints of \(J\), because it is close to \(T(\fhat_r(X_{j}))\), which does not belong to \(J\)).
        By \eqref{eq: w_inequality}, it follows that
        \begin{equation}
            \label{eq: what_inequality}
            \widehat{w}(J)\geq w\Big(J\setminus\{z\in J: z+C_0\delta_r \notin J \text{ or } z-C_0\delta_r\notin J\Big\}\Big).
        \end{equation}
        For any interval \(I_r\subset I\) of length at least \(4C_0\delta_r\), the set
        \begin{equation*}
            \widetilde{I}_r=I_r\setminus \{z\in I_r:z+C_0\delta_r\notin I_r \text{ or } z-C_0\delta_r\notin I_r\}
        \end{equation*}
        is an interval of length
        \begin{equation*}
            \lambda(\widetilde{I}_r)=\lambda(I_r)-2C_0\delta_r\geq \lambda(I_r)-\frac{\lambda(I_r)}{2}=\frac{\lambda(I_r)}{2}.
        \end{equation*}
        That is because \(\widetilde{I}_r\) essentially consists of the points that are at least \(C_0\delta_r\) away from both endpoints of \(I_r\).
        Thus, we need to exclude those points that are too close to the boundaries.
    
        By \eqref{eq: what_inequality}, it follows that \(\widehat{w}(I_r)\geq w(\widetilde{I}_r)\), for any interval \(I_r\subseteq I\) with \(\lambda(I_r)>4C_0\delta_r\).
        Thus,
        \begin{align*}
            \widehat{m}_{r,n}\vcentcolon
            &= \inf\left\{\frac{\widehat{w}(I_r)}{n\lambda(I_r)}:\text{intervals } I_r\subset I\text{ with } \lambda(I_r)\geq 4C_0\delta_r\right\}\\
            & \geq \frac{1}{2}\inf\left\{\frac{w(\widetilde{I}_r)}{n\lambda(\widetilde{I}_r)}:\text{intervals } \widetilde{I}_r\subset I \text{ with } \lambda(\widetilde{I}_r)\geq 2C_0\delta_r\right\}\\
            & =\vcentcolon\frac{1}{2} m_{r,n}.
        \end{align*}
        Define \(\widehat{A}_{r,n}=\{\widehat{m}_{r,n}\geq D\}\) and \(A_{r,n}=\{m_{r,n}\geq 2D\}\) for \(D<C_1/2\).
        Then, \(A_{r,n}\cap B_r\subseteq \widehat{A}_{r,n}\) and
        \begin{equation*}
            \Prob(\widehat{A}_{r,n})\geq \Prob(\widehat{A}_{r,n}\cap B_r)\geq \Prob(A_{r,n})+\Prob(B_r)-1\rightarrow 1\text{ as } n,r\rightarrow \infty,
        \end{equation*}
        by Assumption \ref{assum: consistency_fhat} and Proposition \ref{prop: Moesching_2020}.
        In this setting, Proposition \ref{prop: Moesching_2020} is applied as follows: we consider the sequence \(\eta_n=2C_0C_1(\log n/n)^{1/3}\) and the random variables \(\mu(X_1),\ldots,\mu(X_n)\) with empirical measure \(\mathbb{Q}_n\) and true distribution \(\mathbb{Q}\).
        Then,
        \begin{equation*}
            \frac{w(\widetilde{I}_r)}{n\lambda(\widetilde{I}_r)}=\frac{\mathbb{Q}_n(\widetilde{I}_r)}{\lambda(\widetilde{I}_r)}\geq C_1\frac{\mathbb{Q}_n(\widetilde{I}_r)}{\mathbb{Q}(\widetilde{I}_r)},
        \end{equation*}
        where the last inequality follows from Assumption \ref{assum: consistency_bounded_mu}.
        Since \(r\leq n\), it follows that \(\lambda(\widetilde{I}_r)\geq 2C_0(\log r/r)^{1/3}\geq 2C_0(\log n/n)^{1/3}\), so the assumptions of Proposition \ref{prop: Moesching_2020} for the sequence \(\eta_n\) are satisfied.
    \end{proof}

    We are now ready to prove Theorem \ref{thm: consistency}.
    Due to its asymptotic nature, our proof covers both the in-sample calibrated version \(G(\Prob_n,\cdot)=\mathsf{c}(\fhat_r(\cdot)),\) and the out-of-sample calibrated one \(G(\Prob_{n+1},\cdot)=\mathsf{c}(\fhat_r(\cdot))\).

    \begin{proof}[Proof of Theorem \ref{thm: consistency}]
        Without loss of generality, we assume that \(\fhat_r(X_{1})\leq \ldots \leq \fhat_r(X_{n})\).
        By Lemma \ref{lem: g_hat_dense} it follows that, for all \(x\in \X_{r}\vcentcolon =\{x^\prime\in \X:[\mu(x^\prime)\pm 5C_0\delta_r]\subset I\}\), the indices
        \begin{align*}
            & j(x)=\max\{j\in \{1,\ldots,n\}:T(\fhat_r(X_{j}))\leq T(\fhat_r(x))+4C_0\delta_r\}\\
            & i(x)=\min\{j\in \{1,\ldots,n\}:T(\fhat_r(X_{j}))\geq T(\fhat_r(x))\}
        \end{align*}
        are well defined with asymptotic probability one, because \([T(\fhat_r(x)),T(\fhat_r(x))+4C_0\delta_r]\) is of length \(4C_0\delta_r\) and is contained in \(I\):
        indeed, on the event \(B_r\) it holds that
        \begin{equation*}
            \left|T(\fhat_r(x))-\mu(x)\right|\leq C_0\delta_r,
        \end{equation*}
        so
        \begin{align*}
            \mu(x)-C_0\delta_r\leq T(\fhat_r(x))
            & \leq T(\fhat_r(x))+4C_0\delta_r\\
            & \leq \mu(x)+C_0\delta_r+4C_0\delta_r\\
            & = \mu(x)+5C_0\delta_r.
        \end{align*}
        This shows that \(T(\fhat_r(x)),T(\fhat_r(x))+4C_0\delta_r\in I\), given the definition of \(\X_r\).
        Lemma \ref{lem: g_hat_dense} shows that the points \(T(\fhat_r(X_{j}))\) are asymptotically dense in \(I\), so the above inequality yields that \(i(x),j(x)\) are the minimum and the maximum of asymptotically nonempty sets.
        Therefore, they are well-defined and it holds that \(i(x)\leq j(x)\) and
        \begin{equation*}
            T(\fhat_r(x))\leq T(\fhat_r(X_{i(x)}))\leq T(\fhat_r(X_{j(x)}))\leq T(\fhat_r(x))+4C_0\delta_r.
        \end{equation*}
        Lemma \ref{lem: g_hat_dense} yields that, with asymptotic probability one,
        \begin{equation*}
            \widehat{w}_{i(x)j(x)}=\# \left\{j\in \{1,\ldots,n\}:T(\fhat_r(x))\leq T(\fhat_r(X_j))\leq T(\fhat_r(x))+4C_0\delta_r\right\}\geq 4C_0Dn\delta_r
        \end{equation*}
        for some absolute constant \(0<D<C_1/2\).
        Therefore, it holds with asymptotic probability 1 (with respect to \(n\)) that 
        \begingroup
        \allowdisplaybreaks
        \begin{align*}
        \mathsf{c}(\fhat_r(x))-\mu(x)
        & \leq \mathsf{c}\left(\fhat_r(X_{i(x)})\right)-\mu(x)\\
        & \leq \min_{v\geq j(x)}\max_{u\leq i(x)} F_{uv}-\mu(x)\\
        & \leq \max_{u\leq i(x)} F_{uj(x)}-\mu(x)\\
        & \leq w_{i(x)j(x)}^{-1/2}M_n+\max_{u\leq i(x)} \overline{F}_{uj(x)}-\mu(x)\\
        & \leq (4C_0Dn\delta_r)^{-1/2}M_n+\max_{u\leq i(x)} \left(\overline{F}_{uj(x)}-\widehat{F}_{uj(x)}+\widehat{F}_{uj(x)}\right)-\mu(x)\\
        &\leq (4C_0Dn\delta_r)^{-1/2}M_n+\sup_{x\in \X}\left|T(\fhat_r(x))-\mu(x)\right|+\max_{u\leq i(x)} \widehat{F}_{uj(x)}-\mu(x)\\
        & \leq (4C_0Dn\delta_r)^{-1/2}M_n+\sup_{x\in \X}\left|T(\fhat_r(x))-\mu(x)\right|+T\left(\fhat_r\left(X_{j(x)}\right)\right)-\mu(x)\\
        & \leq (4C_0Dn\delta_r)^{-1/2}M_n+\sup_{x\in \X}\left|T(\fhat_r(x))-\mu(x)\right|+T(\fhat_r(x))-\mu(x)+4C_0\delta_r\\
        & \leq (4C_0Dn\delta_r)^{-1/2}M_n+2\sup_{x\in \X}\left|T(\fhat_r(x))-\mu(x)\right|+4C_0\delta_r,
        \end{align*}
        where we used the following:
        \begin{itemize}
            \item First step: monotonicity of \(\mathsf{c}(\fhat_r(\cdot))\) with respect to \(\fhat_r(\cdot)\), along with \(\fhat_r(x)\leq \fhat_r(X_{i(x)})\).
            \item Second step: min-max formula for monotone regression, since \(\mathsf{c}(\cdot)\) is defined as the isotonic regression fit to the points \((\fhat_r(X_1),Y_1),\ldots,(\fhat_r(X_n),Y_n)\).
            \item Fourth step: for all \(s\leq i(x)\), it holds that \(w_{sj(x)}=j(x)-s+1\geq j(x)-i(x)+1 = w_{i(x)j(x)}\).
            Also, \(w_{i(x)j(x)}^{-1/2}\leq (4C_0Dn\delta_r)^{-1/2}\).
            \item Sixth step: we used the convexity of the max function and the trivial inequality
            \begin{equation*}
                \left|\overline{F}_{sj(x)}-\widehat{F}_{sj(x)}\right|=\left|\frac{1}{w_{sj(x)}}\sum_{i=s}^{j(x)} \mu(X_i)-T\left(\fhat_r(X_i)\right)\right|\leq \sup_{x\in \X}\left|T(\fhat_r(x))-\mu(x)\right|
            \end{equation*}
            \item Third last step: for all \(s\leq i(x)\) it holds that \(\widehat{F}_{sj(x)}\leq T(\fhat_r(X_{j(x)}))\).
            \item Second last step: definition of \(j(x)\).
        \end{itemize}
        \endgroup
        An analogous argument provides an identical upper bound for \(\mu(x)-\mathsf{c}(\fhat_r(x))\).
        By Assumption \ref{assum: consistency_fhat} and Lemma \ref{lem: M_n_pi_bound}, it follows that the event
        \begin{equation*}
            \left\{M_n\leq s(\log n)^{1/2}\right\}\cap \left\{\sup_{x\in \X}\left|T(\fhat_r(x))-\mu(x)\right|<C_0\delta_r\right\}
        \end{equation*}
        has asymptotic probability 1 for \(s>\sigma\sqrt{2/c}\).
        On this event, it holds that
        \begin{align*}
        \sup_{x\in \X_r}\left|\mathsf{c}(\fhat_r(x))-\mu(x)\right|
        &\leq s(4C_0Dn\delta_r)^{-1/2}(\log n)^{1/2}+6C_0\delta_r\\
        & \lesssim \left(\frac{r}{\log r}\right)^{1/6}\left(\frac{\log n}{n}\right)^{1/2} + \left(\frac{\log r}{r}\right)^{1/3}.
        \end{align*}
        Since \(r\leq n\), this term is bounded by \(2(\log r/r)^{1/3}\), which finishes the proof.
    \end{proof}

    \section{Proof of Theorem \ref{thm: width_isotonic}} \label{sec: app_proof_width}



    The proof of Theorem \ref{thm: width_isotonic} relies on Lemma \ref{lem: change_one_coordinate}, which explains how the isotonic projection of a vector changes when we modify only one of its coordinates.
    The isotonic projection \(\iso(y)\) of a vector \(y\in \R^{n+1}\) is defined as the projection of \(y\) onto the isotonic cone \(\C\vcentcolon = \{x\in \R^{n+1}:x_1\leq \ldots \leq x_{n+1}\}\), that is
    \begin{equation*}
        \iso(y)=\argmin_{x\in \R^{n+1}, x_1\leq \ldots\leq x_{n+1}}\norm{y-x}_2,
    \end{equation*}
    where \(\norm{\cdot}_2\) is the standard Euclidean norm.
    It can be computed by PAVA \citep[Appendix~A.1]{wuetr_ziegel_23} and it relies on a partition of \(\{1,\ldots,n+1\}\) that consists of several blocks \(I_1^y,\ldots,I_{\ell}^y\subseteq \{1,\ldots,n+1\}\) in which its value is constant and equal to
    \begin{equation}
        \label{eq: isotonic_projection}
        \iso(y)_k=\frac{1}{\# I_j^y}\sum_{\ell\in I_j^y} y_\ell\quad \text{for all } k\in I_j^y.
    \end{equation}
    The isotonic projection is essentially the isotonic regression fit, with the order not determined by a covariate but by the natural order on the index space \(\{1,\ldots,n+1\}\).
    
    Lemma \ref{lem: change_one_coordinate} is related to the construction of the abridged version of PAVA by \citet{Henzi_2022}.
    It is relevant for Algorithm \ref{alg: isotonic_recalibration} because the central idea of this algorithm is to fit isotonic regression to two different vectors that differ only by one coordinate.

    \begin{lem}
        \label{lem: change_one_coordinate}
        Let \(x,y\in \R^{n+1}\) be two vectors and let \(i_0\in \{1,\ldots,n+1\}\) be an index such that \(x_i=y_i\) for all \(i\neq i_0\), and \(x_{i_0}\geq y_{i_0}\).
        Let \(I_1^y,\ldots,I_{\ell}^y\) and \(I_1^x,\ldots,I_s^x\) be the partitions of \(\{1,\ldots,n+1\}\) induced by \(\iso(y),\iso(x)\) respectively.
        Suppose that \(i_0\in I_{\ell_0}^y\cap I_{s_0}^x\).
        Then, it holds that
        \begin{equation}
            \label{eq: intervals_shift}
            \min I_{\ell_0}^y\leq \min I_{s_0}^x\quad \text{and}\quad \max I_{\ell_0}^y\leq \max I_{s_0}^x.
        \end{equation}
    \end{lem}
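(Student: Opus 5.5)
We will argue via the min--max characterization of the isotonic projection, which the paper already used in the proof of Theorem \ref{thm: consistency}. Write \(S_{uv}(z)=\sum_{k=u}^v z_k\) and \(A_{uv}(z)=S_{uv}(z)/(v-u+1)\) for \(z\in\R^{n+1}\). Recall the standard description of the PAVA blocks of \(\iso(z)\): an interval \(\{a,\ldots,b\}\) is the block containing an index \(k\) if and only if \(a\le k\le b\) and
\[
A_{ab}(z)=\max_{u\le k}\min_{v\ge k}A_{uv}(z)=\min_{v\ge k}\max_{u\le k}A_{uv}(z).
\]
Moreover, \(a\) is the \emph{smallest} maximizer in \(\max_{u\le k}\) of \(\min_{v\ge k}A_{uv}(z)\), and \(b\) the \emph{largest} minimizer in \(\min_{v\ge k} A_{a v}(z)\). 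Equivalently, the left endpoint of the block containing \(k\) is characterized by: \(a\le k\), every proper ``left tail'' \(\{a,\ldots,m\}\) of the block with \(m<b\) has average \(\le\) the block average, and the preceding block has strictly smaller value. We will use the cleaner variational form and apply it with \(k=i_0\).

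For the left endpoint, set \(g_z(u)=\min_{v\ge i_0}A_{uv}(z)\) for \(u\le i_0\). Then \(\min I_{\ell_0}^y\) is the smallest maximizer of \(g_y\), and \(\min I^x_{s_0}\) is the smallest maximizer of \(g_x\). Since \(x=y+\Delta e_{i_0}\) with \(\Delta\ge0\), and every window \(\{u,\ldots,v\}\) with \(u\le i_0\le v\) contains \(i_0\), we have
\[
A_{uv}(x)=A_{uv}(y)+\frac{\Delta}{v-u+1}.
\]
Here the increment is larger for short windows. The key monotone-comparative-statics step is then as follows. For \(u'<u\le i_0\), the window starting at \(u'\) is longer than any window starting at \(u\) with the same right end. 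Hence raising \(y_{i_0}\) favours larger \(u\), giving a single-crossing (supermodularity-type) property: if \(g_y(u)\ge g_y(u')\), then \(g_x(u)\ge g_x(u')\). Proving this requires care because of the inner minimum over \(v\), so we would rather argue with blocks directly.

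Let \(a=\min I^y_{\ell_0}\) and \(b=\max I^y_{\ell_0}\). We show that no block of \(\iso(x)\) can contain both \(a-1\) and \(i_0\); this forces \(\min I^x_{s_0}\ge a\). Suppose \(u<a\) is the start of the \(x\)-block containing \(i_0\), with end \(v\ge i_0\). The block \(\{a,\ldots,b\}\) of \(\iso(y)\) is preceded by strictly smaller blocks, so for every \(v'\ge i_0\) we have \(A_{u,a-1}(y)<A_{a,v'}(y)\)-type inequalities. Such inequalities follow from the max--min formula, since \(u\) is not a maximizer for \(y\). Adding \(\Delta\) only to the part of the window on the right of \(a\) preserves the strict inequality \(A_{u,a-1}(x)=A_{u,a-1}(y)<A_{a,v}(x)\). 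This contradicts the PAVA property that, in a block of \(\iso(x)\), every left sub-interval \(\{u,\ldots,a-1\}\) has average \(\ge\) the block average. The right endpoint is symmetric: every right tail of an \(x\)-block has average \(\ge\) the block average. Suppose \(\max I^x_{s_0}<b\). Then the \(x\)-block ends at some \(v<b\), and the next \(x\)-block has a strictly larger value. The segment \(\{v+1,\ldots,b\}\) lies within the \(y\)-block and does not contain \(i_0\), so its averages agree under \(x\) and \(y\). Within the \(y\)-block, the right tail \(\{v+1,\ldots,b\}\) has average \(\ge A_{ab}(y)\), and by the pooling condition its prefixes \(\{v+1,\ldots,m\}\) have average \(\le\) the \(y\)-block value. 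We then compare these averages with the \(x\)-block values to reach a contradiction.

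The main obstacle is making these pooling-inequality comparisons airtight across the two different partitions. The right-endpoint case in particular is delicate: raising \(y_{i_0}\) may detach \(i_0\) from the elements to its right, which is exactly what the claim rules out. If the direct block argument becomes messy, we will fall back on the min--max formula. There we will show the right-end statement via \(b=\max\arg\min_{v\ge i_0}\max_{u\le i_0}A_{uv}\) and the fact that the perturbation \(\Delta/(v-u+1)\) decreases in \(v\). This means it penalizes short right extensions more, so the largest minimizer can only move right, and that yields \eqref{eq: intervals_shift}.
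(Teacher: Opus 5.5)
Your left-endpoint step is correct in substance. For $u<a\le i_0\le v$, the set $\{u,\ldots,a-1\}$ is a suffix of an earlier $y$-block followed by whole earlier $y$-blocks, so $A_{u,a-1}(y)<\iso(y)_{i_0}$. The set $\{a,\ldots,v\}$ is a prefix of the $y$-block (possibly extended by whole later blocks and a prefix of another), so $A_{a,v}(y)\ge\iso(y)_{i_0}$. Raising coordinate $i_0$ only increases the latter average. This contradicts the fact that a prefix of the $x$-block $\{u,\ldots,v\}$ has average at least that of the complementary suffix. You should justify the $y$-inequality this way; ``$u$ is not a maximizer'' by itself only yields $A_{uv}(y)<\iso(y)_{i_0}$ for \emph{some} $v$.

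The genuine gap is the right endpoint. You never actually prove it, and the inequalities you set up for it have the wrong sign. For a block of a non-decreasing isotonic fit, every prefix has average at least the block value and every suffix has average at most it. Your opening paragraph also states the prefix property backwards, although the left-endpoint step uses it correctly. So ``every right tail of an $x$-block has average $\ge$ the block average'' and ``$\{v+1,\ldots,b\}$ has average $\ge A_{ab}(y)$'' are false. The comparison you defer also needs a link between $\iso(x)_{i_0}$ and $\iso(y)_{i_0}$ (or your left-endpoint result), and you do not invoke one. The min--max fallback does not close this either. The increment $\Delta/(v-u+1)$ depends on $u$ as well as $v$, so $v\mapsto\max_{u\le i_0}A_{uv}(x)-\max_{u\le i_0}A_{uv}(y)$ need not be monotone, and ``the largest minimizer can only move right'' is asserted, not shown.

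The repair is short once the signs are right. Suppose $v:=\max I^x_{s_0}<b$. The segment $\{v+1,\ldots,b\}$ avoids $i_0$, so its average is the same under $x$ and $y$. As a suffix of the $y$-block, $A_{v+1,b}(y)\le\iso(y)_{i_0}$. As a union of whole $x$-blocks following $I^x_{s_0}$ plus a prefix of another, $A_{v+1,b}(x)>\iso(x)_{i_0}$. Hence $\iso(y)_{i_0}>\iso(x)_{i_0}$, contradicting the componentwise monotonicity \eqref{eq: isotonic_regression_ordering}. Alternatively, your left-endpoint result $\min I^x_{s_0}\ge a$ gives $A_{a,v}(y)\le A_{a,v}(x)\le\iso(x)_{i_0}<A_{v+1,b}(y)$, which contradicts the prefix/suffix property of the $y$-block.

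Once repaired, your route is a self-contained argument via block-averaging inequalities. The paper instead imports the coarsening result of \citet{Henzi_2022} for the left endpoint. For the right endpoint it uses $\iso(y)_j=\iso(x)_j$ for $j>\max I^x_{s_0}$ together with $\iso(x)_j>\iso(x)_{i_0}\ge\iso(y)_{i_0}$.
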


    \begin{proof}
        Since \(x_i\geq y_i\) componentwise, it follows that \(\iso(x)_i\geq\iso(y)_i\) for all \(i\in \{1,\ldots,n+1\}\).
        As shown in \citet{Henzi_2022}, the partition induced by \(\iso(y)\) will be a coarsening of the partition with the following blocks:
        \begin{equation*}
            I_t^x \text{ for } 1\leq t<s_0,\quad \{\min I_{s_0}^x,\ldots, i_0\},\quad \{j\} \text{ for } i_0< j\leq \max I_{s_0}^x,\quad I_t^x \text{ for } s_0<t\leq s.
        \end{equation*}
        Since \(i_0\) and \(\min I_{s_0}^x\) definitely belong to the same block in \(\iso(y)\), it follows directly that \(\min I_{\ell_0}^y\leq \min I_{s_0}^x\).
        
        Moreover, it is shown in \citet{Henzi_2022} that \(\iso(y)_j=\iso(x)_j\) for \(j>\max I_{s_0}^x\).
        But, for all \(j>\max I_{s_0}^x\), it holds that \(\iso(x)_j>\iso(x)_{i_0}\geq \iso(y)_{i_0}\); therefore \(\iso(y)_j>\iso(y)_{i_0}\).
        This shows that \(I_{\ell_0}^y\) cannot extend beyond \(\max I_{s_0}^x\), which finishes the proof.
    \end{proof}

    The next lemma was proved by \citet{Allen_2025}.
    It provides an upper bound for the distance between the isotonic projections of vectors that differ only in one coordinate.

    \begin{lem}
        \label{lem: isox_isoy}
        Let \(x,y\in \R^{n+1}\) be two vectors and let \(i_0\in \{1,\ldots,n+1\}\) be an index such that \(x_i=y_i\) for all \(i\neq i_0\), and \(x_{i_0}\geq y_{i_0}\).
        Let \(I_1^y,\ldots,I_{\ell}^y\) and \(I_1^x,\ldots,I_s^x\) be the partitions of \(\{1,\ldots,n+1\}\) induced by \(\iso(y),\iso(x)\) respectively.
        Suppose that \(i_0\in I_{\ell_0}^y\cap I_{s_0}^x\).
        Then,
        \begin{equation}
            \label{eq: choose_k_m}
            0\leq \iso(x)_{i_0}-\iso(y)_{i_0}
            \leq  \frac{|x_{i_0}-y_{i_0}|}{\max I_{\ell_0}^y-\min I_{s_0}^x+1}.
        \end{equation}
    \end{lem}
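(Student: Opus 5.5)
The plan is to compare block averages directly. Monotonicity of the isotonic projection, \eqref{eq: isotonic_regression_ordering}, gives the lower bound \(0\le \iso(x)_{i_0}-\iso(y)_{i_0}\) immediately, because \(x\ge y\) componentwise. For the upper bound, set \(a\vcentcolon=\min I^x_{s_0}\) and \(b\vcentcolon=\max I^y_{\ell_0}\), and write \(B\vcentcolon=\max I^x_{s_0}\) and \(A\vcentcolon=\min I^y_{\ell_0}\). By Lemma \ref{lem: change_one_coordinate} we have \(A\le a\) and \(b\le B\). Since \(i_0\) lies in both blocks, we also have \(a\le i_0\le b\). Hence \([a,b]\subseteq I^x_{s_0}\cap I^y_{\ell_0}\), and this interval contains \(i_0\).

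The key ingredient is the standard prefix/suffix property of a PAVA block. It follows from the characterization of the projection onto the isotonic cone \(\C\), i.e. the Fenchel/KKT conditions \citep{BarlowBartholomewETAL1972}. Take a block \([p,q]\) with level \(m\). Then
\begin{equation*}
\sum_{k=p}^{v}(z_k-m)\ge 0\ \text{ for all } p\le v\le q,
\qquad
\sum_{k=u}^{q}(z_k-m)\le 0\ \text{ for all } p\le u\le q.
\end{equation*}
If this failed, one could split the block and strictly decrease the squared distance while staying in the cone. In words, prefix averages within a block are at least the block level, and suffix averages are at most the block level. I expect this step to need the most care, since the rest is arithmetic. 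If needed, I would prove it directly from the variational inequality \(\langle z-\iso(z),\theta-\iso(z)\rangle\le 0\) for \(\theta\in\C\), perturbing \(\iso(z)\) on a prefix or suffix of the block.

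Apply the prefix property to \(x\) on the block \([a,B]\) with \(v=b\). This gives \(\iso(x)_{i_0}\le \frac{1}{b-a+1}\sum_{k=a}^{b}x_k\). Apply the suffix property to \(y\) on the block \([A,b]\) with \(u=a\). This gives \(\iso(y)_{i_0}\ge \frac{1}{b-a+1}\sum_{k=a}^{b}y_k\). Subtracting, and using that \(x\) and \(y\) differ only in coordinate \(i_0\in[a,b]\), we obtain
\begin{equation*}
\iso(x)_{i_0}-\iso(y)_{i_0}\le \frac{1}{b-a+1}\sum_{k=a}^{b}(x_k-y_k)=\frac{x_{i_0}-y_{i_0}}{\max I^y_{\ell_0}-\min I^x_{s_0}+1}.
\end{equation*}
This is \eqref{eq: choose_k_m}.
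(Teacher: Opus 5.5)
Your proof is correct. The paper gives no proof of its own here; it attributes the lemma to \citet{Allen_2025}. So your proposal supplies a self-contained argument, and it is the standard block-average (min--max) one. You bound $\iso(x)_{i_0}$ from above and $\iso(y)_{i_0}$ from below by the averages of $x$ and $y$ over the common window $[\min I^x_{s_0},\max I^y_{\ell_0}]\ni i_0$, then subtract, so that only coordinate $i_0$ survives.

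Two small refinements are worth making.

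First, the cleanest justification of your prefix/suffix property uses the cumulative residuals $S_v=\sum_{k\le v}(z_k-\iso(z)_k)$. Testing the variational inequality against $\iso(z)\pm\mathbf{1}$ and $\iso(z)+\one_{\{v+1,\dots,n+1\}}$ gives $S_v\ge 0$ for every $v$. Moreover, $S$ vanishes at every block boundary because each level is its block average. This works for any contiguous partition with block-average levels. By contrast, your ``split the block'' perturbation needs a strict jump at the left end of the block, which can fail if adjacent blocks share a level.

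Second, the same computation shows that your prefix inequality for $x$ holds for every $v\ge\min I^x_{s_0}$, not just inside the block. The reason is that $\iso(x)_k\ge\iso(x)_{i_0}$ for $k\ge\min I^x_{s_0}$, so $\sum_{k=\min I^x_{s_0}}^{v}(x_k-\iso(x)_{i_0})\ge S^x_v\ge 0$. Symmetrically, the suffix inequality for $y$ holds for every $u\le\max I^y_{\ell_0}$. Hence your appeal to Lemma~\ref{lem: change_one_coordinate} is unnecessary: the trivial fact $\min I^x_{s_0}\le i_0\le\max I^y_{\ell_0}$ already makes the choice $u=\min I^x_{s_0}$, $v=\max I^y_{\ell_0}$ legitimate. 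This removes your proof's dependence on that lemma, whose proof in the paper relies on properties quoted from \citet{Henzi_2022}.
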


    We can now move on to the proof of Theorem \ref{thm: width_isotonic}. This is similar to the proof of \citet[Theorem~3.1]{Allen_2025}.
    One of the main differences is that we use Assumption \ref{assum: jumps} instead of the related result of \citet{Dimitriadis_2023}, which is only applicable to binary labels.
    Throughout the proof, we use the notation \(A_n\lesssim B_n\) to denote the existence of a universal constant \(C>0\) such that \(A_n\leq C B_n\) for all values of \(n\).

    \begin{proof}[Proof of Theorem \ref{thm: width_isotonic}]
        In Algorithm \ref{alg: isotonic_recalibration},
        \begin{equation*}
            \left(T_{\ell}(\fhat(X_1)),\ldots,T_{\ell}(\fhat(X_{n+1}))\right)\quad \text{and}\quad \left(T_{h}(\fhat(X_1)),\ldots,T_{h}(\fhat(X_{n+1}))\right)
        \end{equation*}
        are the isotonic projections of the (rearranged) vectors \((Y_1,\ldots,Y_n,\ell_{n+1})\) and \((Y_1,\ldots,Y_n,h_{n+1})\), where the order in the associated isotonic cone is not determined by the indices \(\{1,\ldots,n+1\}\), but by the values \(\fhat(X_1),\ldots,\fhat(X_{n+1})\).
        The calibrated prediction
        \begin{equation*}
            \left(\mathsf{c}(\fhat(X_1)),\ldots,\mathsf{c}(\fhat(X_{n+1}))\right)
        \end{equation*}
        is the isotonic projection of the vector \((Y_1,\ldots,Y_{n+1})\) onto the same isotonic cone.
        
        Let \(\pi\in S_{n+1}\) be a permutation such that \(\fhat(X_{\pi(1)})\leq \ldots \leq \fhat(X_{\pi(n+1)})\) and set \(i_0=\pi^{-1}(n+1)\).
        Then, the expected width of the interval \(\C_{n+1,\alpha}\) is equal to
        \begin{equation*}
            \Exp[w(\C_{n+1,\alpha})]=\Exp\Big[T_h(\fhat(X_{n+1}))-T_{\ell}(\fhat(X_{n+1}))\Big]=\Exp\Big[\iso(\bm{Y}^h)_{i_0}-\iso(\bm{Y}^{\ell})_{i_0}\Big],
        \end{equation*}
        where
        \begin{equation*}
            \bm{Y}^{\square}=\left(Y_{\pi(1)},\ldots,Y_{\pi(i_0-1)},\square_{n+1},Y_{\pi(i_0+1)},\ldots,Y_{\pi(n+1)}\right),\quad \square\in \{\ell,h\}.
        \end{equation*}
        We also define the vector \(\bm{Y}=\left(Y_{\pi(1)},\ldots,Y_{\pi(n+1)}\right)\).
        Let \(\{I_1^{\bm{Y}},\ldots,I_{\ell}^{\bm{Y}}\},\, \{I_1^{\bm{Y}^\ell},\ldots,I_m^{\bm{Y}^\ell}\}\), and \(\{I_1^{\bm{Y}^h},\ldots,I_{s}^{\bm{Y}^h}\}\) be the partitions induced by \(\iso(\bm{Y}),\iso(\bm{Y}^\ell)\), and \(\iso(\bm{Y}^h)\) respectively.
        Assumption \ref{assum: jumps} yields \(\Exp[J_{n+1}]\lesssim n^{1/3}\log^q n\) for large enough values of \(n\).
        Conditionally on \(\S_1\) and the empirical distribution \(\Qrob_{n+1}\)
        of the points \((\fhat(X_i),Y_i),\, i=1,\ldots,n+1\), the probability that \(i_0\) falls into any index interval \(I\subseteq \{1,\ldots,n+1\}\) is equal to \(|I|/(n+1)\).
        Suppose that \(i_0\in I_{\ell_0}^{\bm{Y}}\cap I_{m_0}^{\bm{Y}^{\ell}}\cap I_{s_0}^{\bm{Y}^h}\) for some indices \(\ell_0\in \{1,\ldots,\ell\},\, m_0\in \{1,\ldots,m\}\), and \(s_0\in \{1,\ldots,s\}\).
        We consider two different cases:
        \paragraph{Case I: \(Y_{n+1}\geq h_{n+1}\).}
        In that case, we split each interval \(I_{j}^{\bm{Y}}\) into two consecutive intervals \(I_j^\prime\) and \(I_j^{\prime\prime}\), such that \(I_j^{\prime}\) contains the \(\lceil |I_j^{\bm{Y}}|\cdot n^{-\delta}\rceil\) smallest elements, where \(\delta\in (0,1/3)\).
        If \(i_0\) falls into \(I_{\ell_0}^{\prime\prime}\), then \(I_{\ell_0}^{\prime}\subseteq I_{\ell_0}^{\bm{Y}}\cap I_{s_0}^{\bm{Y}^h}\).
        Indeed, by construction it holds that \(I_{\ell_0}^{\prime}\subseteq I_{\ell_0}^{\bm{Y}}\) and, since \(i_0\in I_{\ell_0}^{\prime\prime}\), it follows that \(\max I_{\ell_0}^{\prime}< i_0 \leq \max I_{s_0}^{\bm{Y}^h}\).
        From Lemma \ref{lem: change_one_coordinate}, it follows that \(\min I_{s_0}^{\bm{Y}^h}\leq \min I_{\ell_0}^{\bm{Y}}=\min I_{\ell_0}^{\prime}\), so \(I_{\ell_0}^{\prime}\) is also a subset of \(I_{s_0}^{\bm{Y}^h}\).
        The inclusion \(I_{\ell_0}^{\prime}\subseteq I_{\ell_0}^{\bm{Y}}\cap I_{s_0}^{\bm{Y}^h}\) yields
        \begin{equation*}
            \max I_{s_0}^{\bm{Y}^h}-\min I_{\ell_0}^{\bm{Y}}+1\geq \lceil |I_{\ell_0}^{\bm{Y}}|\cdot n^{-\delta}\rceil.
        \end{equation*}
        Therefore, \eqref{eq: choose_k_m} implies that
        \begingroup
        \allowdisplaybreaks
        \begin{align}
            \label{eq: lower_bound_iso}
            0
            & \leq \Exp\left[\one\{Y_{n+1}\geq h_{n+1}\}\left|\iso(\bm{Y})_{i_0}-\iso(\bm{Y}^h)_{i_0}\right|\right]\nonumber \\
            & \leq \Exp\left[\one\{Y_{n+1}\geq h_{n+1}\}\frac{|h_{n+1}-Y_{n+1}|}{\max I_{s_0}^{\bm{Y}^h}-\min I_{\ell_0}^{\bm{Y}}+1}\right]\nonumber \\ 
            & \leq \Exp\left[n^{\delta}|Y_{n+1}-h_{n+1}|\cdot \sum_{j=1}^{J_{n+1}} \frac{\one\{i_0\in I_j^{\prime\prime}\}}{|I_j^{\bm{Y}}|}\right] + \Exp\left[|h_{n+1}-Y_{n+1}|\cdot \sum_{j=1}^{J_{n+1}} \one\{i_0\in I_j^{\prime}\}\right],
        \end{align}
        \endgroup
        where we used the fact that \(\max I_{s_0}^{\bm{Y}^h}-\min I_{\ell_0}^{\bm{Y}}+1\geq \lceil |I_j^{\bm{Y}}|\cdot n^{-\delta}\rceil\) in the last step.
        We now treat the two terms separately.
        The first term can be bounded by Cauchy-Schwarz and the fact that \(\Prob(i_0\in I_j^{\bm{Y}}|\Qrob_{n+1})=|I_j^{\bm{Y}}|/(n+1)\).
        Indeed,
        \begingroup
        \allowdisplaybreaks
        \begin{align*}
            \Exp\left[n^{\delta}|Y_{n+1}-h_{n+1}|\cdot \sum_{j=1}^{J_{n+1}} \frac{\one\{i_0\in I_j^{\prime\prime}\}}{\left|I_j^{\bm{Y}}\right|}\right]
            & \leq n^\delta \left(\Exp\left[\sum_{j=1}^{J_{n+1}}\frac{\one\{i_0\in I_j^{\prime\prime}\}}{\left|I_j^{\bm{Y}}\right|^2}\right]\right)^{1/2}\left(\Exp\left[(Y_{n+1}-h_{n+1})^2\right]\right)^{1/2}\\
            &\leq  O\left(n^{\delta} \left(\Exp\left[\Exp\left[\sum_{j=1}^{J_{n+1}}\frac{\one\{i_0\in I_j^{\prime\prime}\}}{\left|I_j^{\bm{Y}}\right|^2}\, \middle| \, \S_1, \Qrob_{n+1}\right]\right]\right)^{1/2}\right)\\
            & \leq O\left(n^{\delta} \left(\Exp\left[\sum_{j=1}^{J_{n+1}} \frac{1}{(n+1)\left|I_j^{\bm{Y}}\right|}\right]\right)^{1/2}\right)\\
            & \leq O\left(n^{\delta} \left(\Exp\left[J_{n+1} n^{-1}\right]\right)^{1/2}\right)\\
            & = O\left( n^{\delta-1/3}\log^{q/2}n\right),
        \end{align*}
        \endgroup
        where we used Assumption \ref{assum: jumps} in the second-to-last step.

        The second term in \eqref{eq: lower_bound_iso} is treated in a similar way.
        We have
        \begingroup
        \allowdisplaybreaks
        \begin{align*}
            \Exp\left[|h_{n+1}-Y_{n+1}|\cdot \sum_{j=1}^{J_{n+1}} \one\{i_0\in I_j^{\prime}\}\right]
            & \leq \left(\Exp(Y_{n+1}-h_{n+1})^2\right)^{1/2}\left(\Exp\left[\sum_{j=1}^{J_{n+1}} \one\{i_0\in I_j^\prime\}\right]\right)^{1/2}\\
            & \leq O\left(\Exp\left[\Exp\left[\sum_{j=1}^{J_{n+1}} \one\{i_0\in I_j^\prime\}\, \middle| \, \Qrob_{n+1}\right]\right]^{1/2}\right)\\
            & = O\left(\Exp\left[\sum_{j=1}^{J_{n+1}} \frac{\left|I_j^\prime\right|}{n+1}\right]^{1/2}\right)\\
            & \leq O\left(\Exp\left[\sum_{j=1}^{J_{n+1}} \frac{\left|I_j^{\bm{Y}}\right|n^{-\delta}+1}{n+1}\right]^{1/2}\right)\\
            & = O\left(\Exp\left[\frac{n^{-\delta}(n+1)+J_{n+1}}{n+1}\right]^{1/2}\right)\\
            & = O\left(n^{-\delta/2}+n^{-1/3}\log^{q/2} n\right),
        \end{align*}
        \endgroup
        which also converges to zero.
        Along with \eqref{eq: lower_bound_iso}, this shows that
        \begin{equation}
            \label{eq: case1_upper}
            \Exp\Big[\one\{Y_{n+1}\geq h_{n+1}\}\left|\iso(\bm{Y})_{i_0}-\iso(\bm{Y}^h)_{i_0}\right|\Big]\overset{n\to \infty}{\longrightarrow} 0.
        \end{equation}
        The whole argument can be used with \(\ell_{n+1}\) instead of \(h_{n+1}\).
        This yields
        \begin{equation}
            \label{eq: case1_lower}
            \Exp\Big[\one\{Y_{n+1}\geq \ell_{n+1}\}\left|\iso(\bm{Y})_{i_0}-\iso(\bm{Y}^\ell)_{i_0}\right|\Big]\overset{n\to \infty}{\longrightarrow} 0.
        \end{equation}
        \paragraph{Case II: \(Y_{n+1}\leq \ell_{n+1}\).}
        This case can be reduced to the first one by considering the random variables \(-Y_i\) as the response variables.
        In that case, the prediction set is given by \(I_\alpha(X_{n+1})=[-h_{n+1},-\ell_{n+1}]\) and it holds that \(\iso(\bm{-Y}_{\text{inv}})=-(\iso(\bm{Y}))_{\text{inv}}\), where \(\bm{Y}_{\text{inv}}=(Y_{\pi(n+1)},\ldots,Y_{\pi(1)})\).
        Similar relationships hold for \(\bm{Y}^\ell,\bm{Y}^h\).
        Therefore, the proof can use the same argument as in the previous case.
        We conclude that
        \begin{equation}
            \label{eq: case2_lower}
            \Exp\left[\one\{Y_{n+1}\leq \ell_{n+1}\}\left|\iso(\bm{Y})_{i_0}-\iso(\bm{Y}^\ell)_{i_0}\right|\right]\overset{n\to \infty}{\longrightarrow} 0.
        \end{equation}
        The argument used in Case I can also be reversed to cover the event \(\{Y_{n+1}\leq h_{n+1}\}\).
        This yields
        \begin{equation}
            \label{eq: case2_upper}
            \Exp\left[\one\{Y_{n+1}\leq h_{n+1}\}\left|\iso(\bm{Y})_{i_0}-\iso(\bm{Y}^h)_{i_0}\right|\right]\overset{n\to \infty}{\longrightarrow} 0.
        \end{equation}
        From \eqref{eq: case1_upper}, \eqref{eq: case1_lower}, \eqref{eq: case2_lower}, \eqref{eq: case2_upper}, it follows that
        \begin{align*}
            \Exp[w(\C_{n+1,\alpha})]
            &=\Exp\Big[\iso(\bm{Y}^h)_{i_0}-\iso(\bm{Y}^\ell)_{i_0}\Big]\\
            & = \Exp\Big[\iso(\bm{Y}^h)_{i_0}-\iso(\bm{Y})_{i_0}+\iso(\bm{Y})_{i_0}-\iso(\bm{Y}^\ell)_{i_0}\Big]\\
            & \leq \Exp\Big[\left|\iso(\bm{Y}^h)_{i_0}-\iso(\bm{Y})_{i_0}\right|\Big]+\Exp\Big[\left|\iso(\bm{Y})_{i_0}-\iso(\bm{Y}^\ell)_{i_0}\right|\Big]
            \overset{n\to \infty}{\rightarrow} 0,
        \end{align*}
        which finishes the proof.
    \end{proof}

    \section{Construction of the conformal prediction set} \label{sec: app_CP}
    
    Theorem \ref{thm:I} does not enforce any restrictions on the choice of the prediction set \(I_\alpha(X_{n+1})\).
    As we have discussed in Subsection \ref{subsec: calibrated_intervals}, it is often practical to use conformal prediction sets, which are valid under exchangeability.
    In our simulations in Section \ref{sec: simulation}, we derived \(I_\alpha(X_{n+1})\) using conformalized quantile regression (CQR) \citep{Romano_2019}.
    CQR is a conformal-prediction-based method that is adaptive to the test point \(X_{n+1}\).
    Adaptivity here translates to higher flexibility with respect to the size of the conformal prediction set.
    CQR obtains its adaptive properties by leveraging estimates of the upper and lower quantiles of \(Y\, | \, X\) and by taking the number of data points that are available from each part of the \(X\)-domain into account.
    Split CQR consists of three basic steps:
    \begin{itemize}
        \item The dataset is split into a training set \((X_{-r},Y_{-r}),(X_{-r+1},Y_{-r+1}),\ldots,(X_0,Y_0)\) and a calibration set \((X_1,Y_1),\ldots,(X_n,Y_n)\).
        \item A quantile prediction model \(\widehat{q}_{\gamma}\) is fitted on the training set, where \(\gamma\in (0,1)\).
        This model can be linear quantile regression, a quantile regression forest \citep{Meinshausen_2006}, a quantile neural network \citep{Taylor_2000}, or any other model that predicts the \(\gamma\)-quantile of the conditional distribution \(\L(Y|X=x)\).
        In our simulation, we use quantile regression forests, which perform well in terms of prediction set size and conservativeness \citep[Section~6]{Romano_2019}.
        \item We then compute the scores
        \begin{equation*}
            s_i=s(X_i,Y_i)=\max\left\{\widehat{q}_{\alpha/2}(X_i) - Y_i,\ Y_i-\widehat{q}_{1-\alpha/2}(X_i)\right\},\quad i=1,\ldots,n.
        \end{equation*}
        Let \(\widehat{q}\) be the \(\left\lceil (1-\alpha)(n+1)\right\rceil\)-th smallest calibration score -- \(\widehat{q}=+\infty\) if \((1-\alpha)(n+1)>n\).
        For any test point \(X_{n+1}\), we output the prediction set
        \begin{equation*}
            \mathcal{C}(X_{n+1})=\Big[\widehat{q}_{\alpha/2}(X_{n+1})-\widehat{q}, \widehat{q}_{1-\alpha/2}(X_{n+1})+\widehat{q}\Big].
        \end{equation*}
    \end{itemize}
    The general idea of CQR is to penalize \emph{bad} predictions via the score \(s\) and then inflate the miscalibrated prediction set \(\left(\widehat{q}_{\alpha/2}(X_i), \widehat{q}_{1-\alpha/2}(X_i)\right)\) depending on the overall penalty.
    Under exchangeability, the CQR prediction sets enjoy coverage guarantees conditionally on the training set \citet[Theorem~1]{Romano_2019}.

    An interesting feature of CQR is that its validity does not depend on the coverage guarantees of the base model \(\widehat{q}_{\gamma}\).
    Even if this model is arbitrarily bad, the CQR prediction sets will still satisfy the marginal coverage condition \(\Prob(Y_{n+1}\in \mathcal{C}(X_{n+1}))\geq 1-\alpha\).
    Therefore, even if we choose a different value of \(\gamma\) for the base model \(\widehat{q}_\gamma\), CQR will continue to be valid.
    
    Inspired by this observation, and in order to improve power, \citet{Romano_2019} suggest that we tweak the parameter \(\gamma\) of the two fitted quantile prediction models.
    In their implementation, they choose \(\gamma= 0.15\) by default, instead of \(\gamma = 0.1\).
    However, they also mention that this approach can be optimized even further by treating \(\gamma\) as a hyperparameter, tuned by cross-validation.
    The cross-validation error is defined as the size of the prediction set \(\C(X_{n+1})\).
    Therefore, the chosen pair \(\gamma_1,\gamma_2\) -- possibly a different value for each quantile level -- is the one that leads on average to the smallest sets \(\C(X_{n+1})\).

    \section{Additional details for the simulation study} \label{sec: app_simulation}

    In this section, we provide additional details for the simulation study in Section \ref{sec: simulation}.
    We start with the neural network architecture, which consists of the following elements:
    \begin{enumerate}
        \item 2 hidden layers, with \(32\) neurons each.
        \item Swish activation function, which is an alternative to ReLU.
        This function is defined as
        \begin{equation*}
            \phi(x)=\frac{x}{1+\exp(-\beta x)},
        \end{equation*}
        where \(\beta\) is a trainable parameter.
        \item Batch normalization with momentum.
        \item Early stopping.
    \end{enumerate}
    To fit the network, we split the training dataset into a learning set and a validation set, with a ratio of \(9:1\).
    Figure \ref{fig: NN_general_plot} shows the calibrated confidence intervals produced by this base model, as well as the in-sample calibrated prediction.

    \begin{figure}[H]
        \centering
        \includegraphics[width=0.65\linewidth]{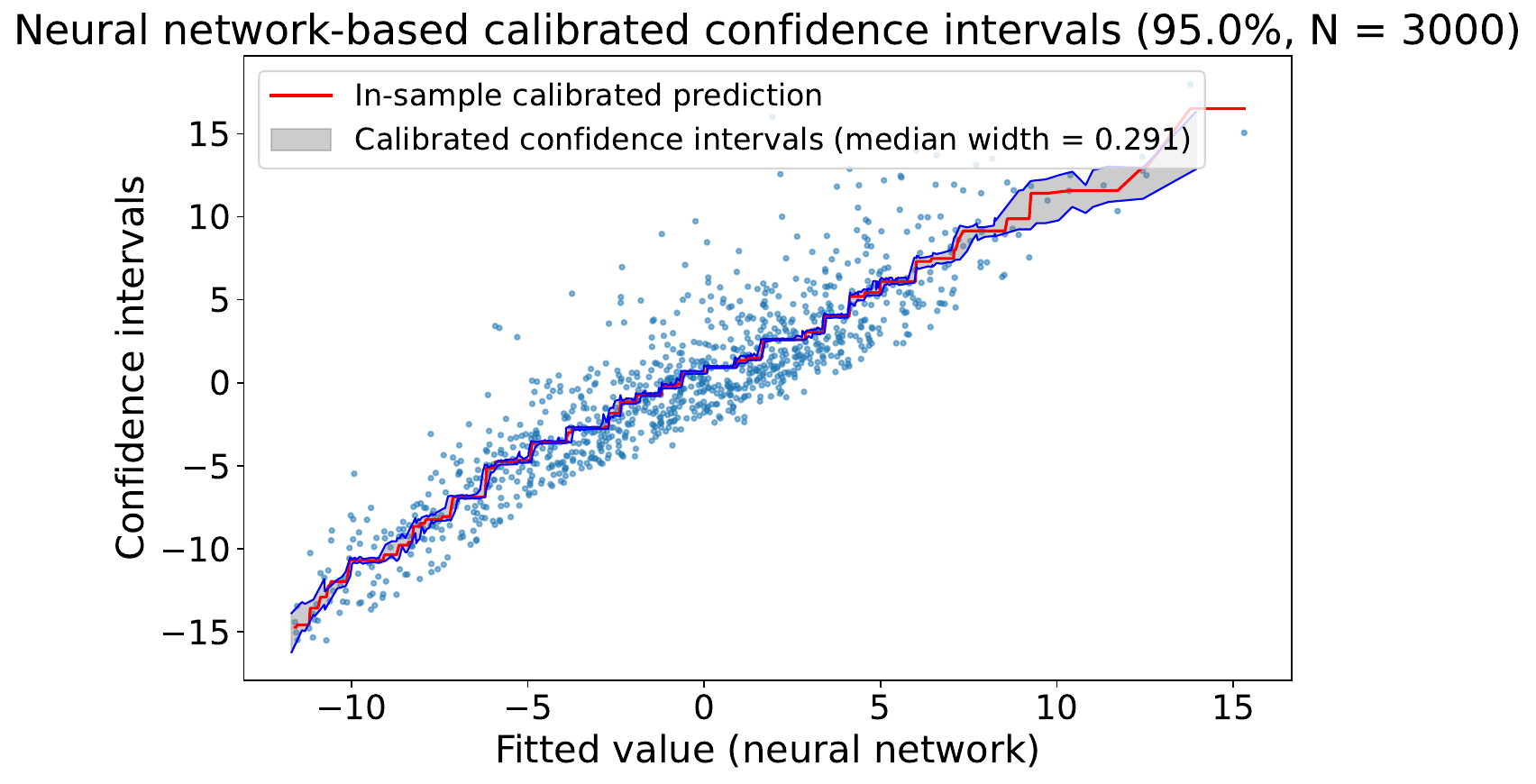}
        \caption{Calibrated confidence intervals based on a neural network \(\fhat\).
        The points \((\fhat(x_i),y_i)\), \(i=1,\ldots,n\) are also shown.
        The red curve is the isotonic regression fit on these points, which is in-sample calibrated.}
        \label{fig: NN_general_plot}
    \end{figure}

    The neural network appears to be adequately calibrated according to the calibration plots in Figure \ref{fig: calibration_plots_NN}.
    This figure includes two of the best candidates discussed in Subsection \ref{subsec: best_candidates}, which seem to be slightly less calibrated.
    The calibration plots of the neural network and the best candidates can be compared with that of the true conditional mean.
    
    \begin{figure}[ht]
    \centering
    \begin{minipage}{0.49\textwidth}
        \centering
        \includegraphics[width=\linewidth]{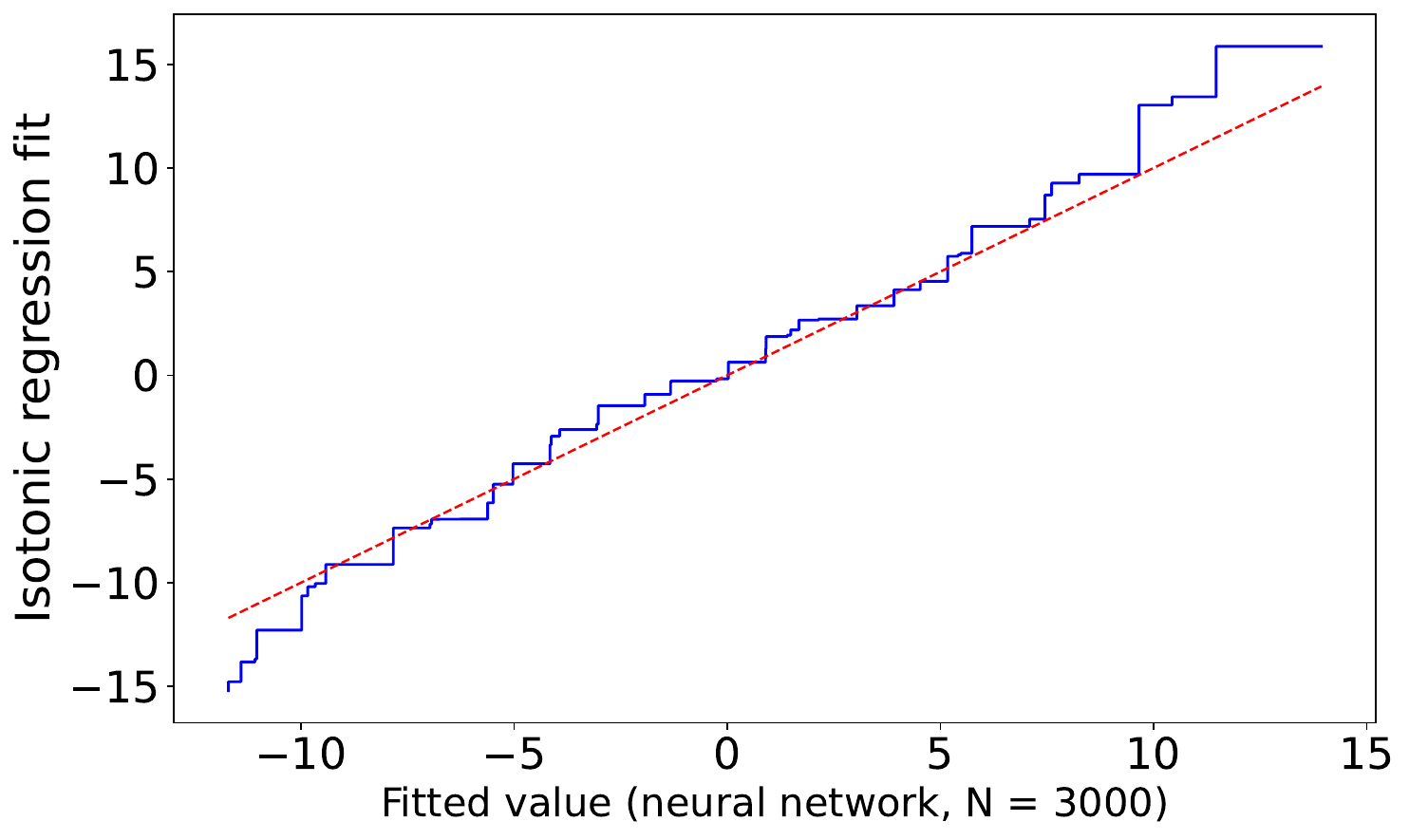}
    \end{minipage}
    \hfill
    \begin{minipage}{0.49\textwidth}
        \centering
        \includegraphics[width=\linewidth]{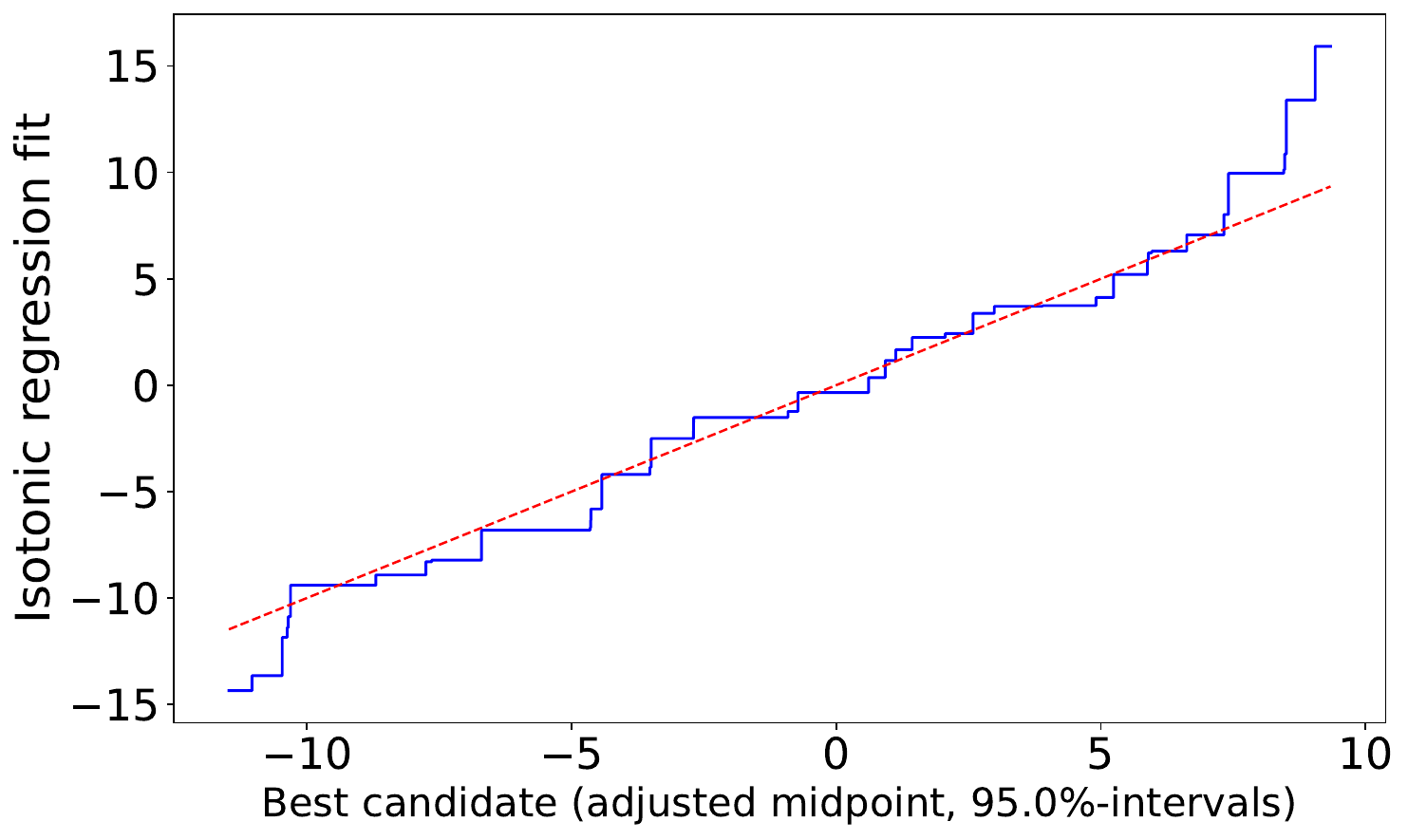}
    \end{minipage}

    \vspace{0.5cm}

    \begin{minipage}{0.49\textwidth}
        \centering
        \includegraphics[width=\linewidth]{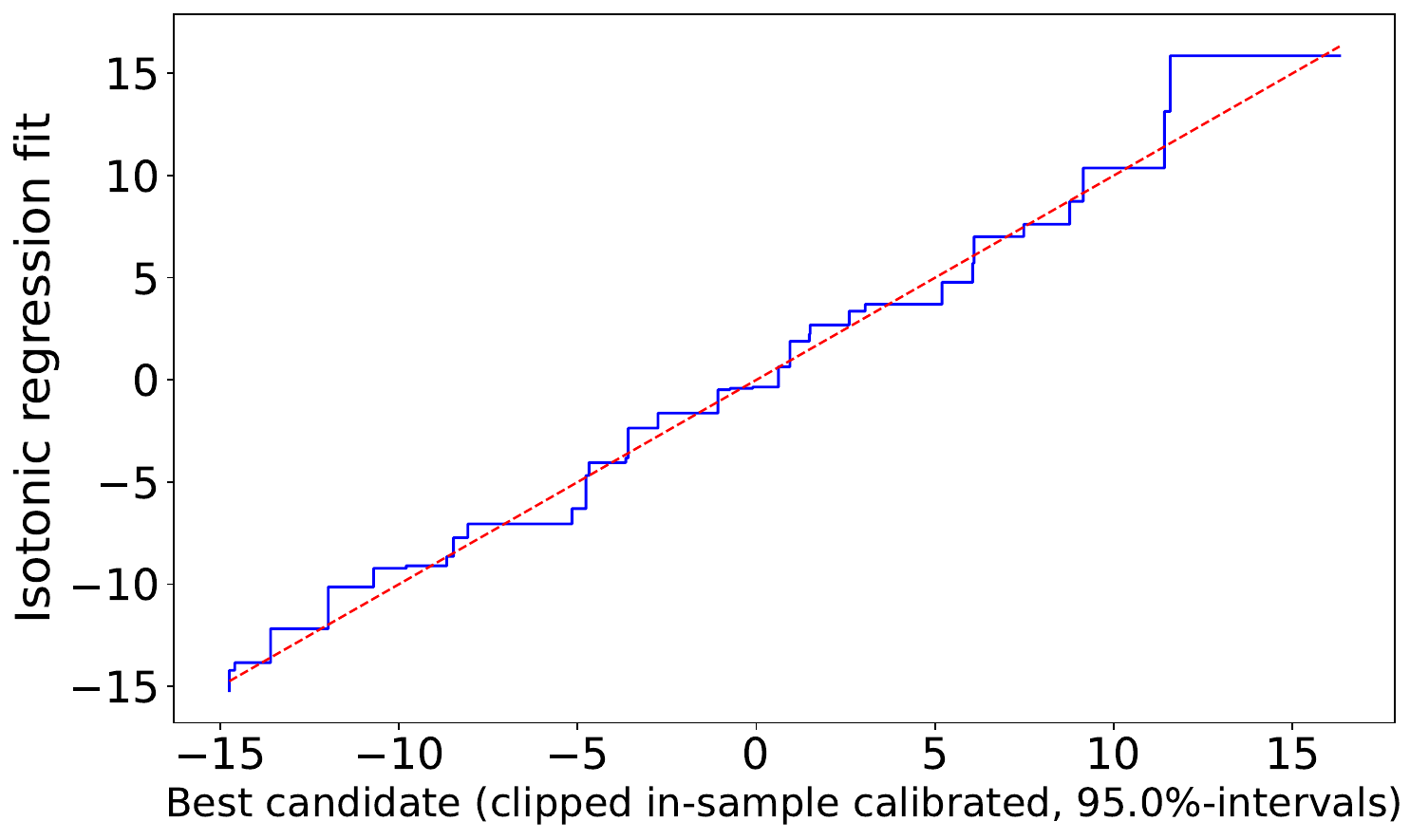}
    \end{minipage}
    \hfill
    \begin{minipage}{0.50\textwidth}
        \centering
        \includegraphics[width=\linewidth]{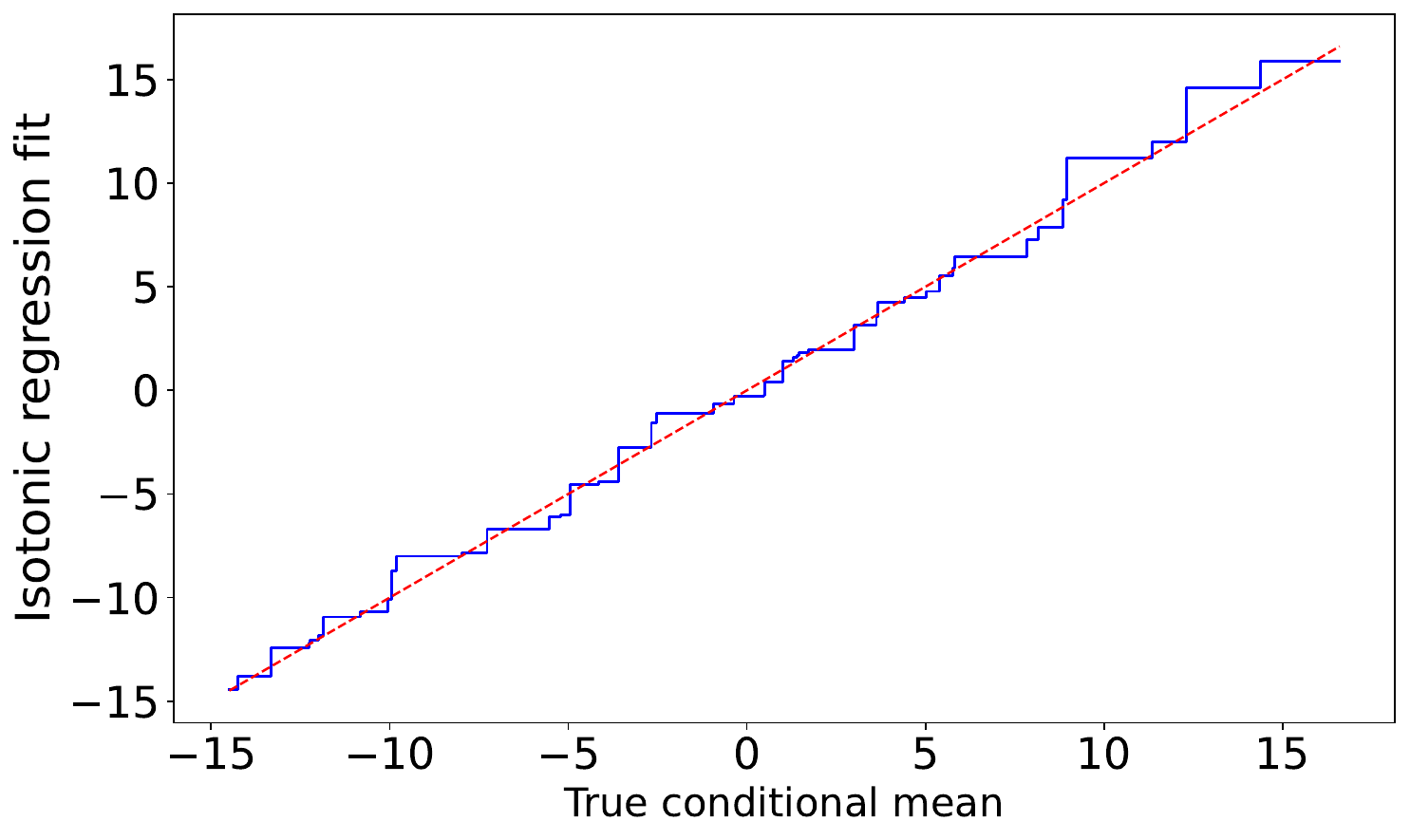}
    \end{minipage}
    \caption{Calibration plots for four different predictions.
    The neural network seems to be adequately calibrated.
    Two of the best candidates proposed in Subsection \ref{subsec: best_candidates} seem to be slightly miscalibrated.}
    \label{fig: calibration_plots_NN}
    \end{figure}

    \subsection{Motivation for calibration plots} \label{subsec: calibration_plots_motivation}
    
    In Subsection \ref{subsec: calibration_plots}, we presented the calibration plots corresponding to the predictions of the GAM and to the best-candidate predictions proposed in Subsection \ref{subsec: best_candidates}.
    The calibration plot has been established as a graphical diagnostic tool due to the following (heuristic) idea:
    let \(T\) denote the isotonic regression fit, which induces a partition \(B_1,\ldots,B_k\) of \(\fhat(\X)\subseteq \R\) into consecutive intervals and an associated partition \(D_1,\ldots,D_k\) of the set \(\{1,\ldots,m\}\) such that \(D_j=\{i\in \{1,\ldots,m\}:\fhat(X_i)\in B_j\}\).
    Let \(x\in \X\) be an arbitrary point such that \(\fhat(x)\in B_{j_0}\) for some index \(j_0\in \{1,\ldots,k\}\) and denote \(\fhat(x)\) by \(y\).
    Then, if \(T\) is close to the identity function, it follows that
    \begingroup
    \allowdisplaybreaks
    \begin{align*}
        y\approx T(y)= \frac{1}{\# D_{j_0}}\sum_{j: \widehat{f}(X_j)\in B_{j_0}} Y_j
        & = \Exp_{\Prob_n}[Y\, | \, \widehat{f}(X)\in B_{j_0}]\\
        &\approx \Exp[Y\, | \, \widehat{f}(X)\in B_{j_0}]\\
        & = \frac{1}{\Prob\left(\widehat{f}(X)\in B_{j_0}\right)}\int_{B_{j_0}} \Exp[Y\, | \, \widehat{f}(X)=u]\, d\Prob_{\widehat{f}(X)}(u)\\
        & \approx \frac{1}{\Prob\left(\widehat{f}(X)\in B_{j_0}\right)}\int_{B_{j_0}} \Exp[Y\, | \, \widehat{f}(X)=y]\, d\Prob_{\widehat{f}(X)}(u)\\
        & = \Exp[Y\, | \, \widehat{f}(X)=y],
    \end{align*}
    \endgroup
    so \(\fhat\) is approximately calibrated.
    This is only a heuristic argument that uses the Law of Large Numbers and the assumption that \(\widehat{f}(X)\) does not vary a lot over \(B_{j_0}\).
    Even so, this argument shows why calibration plots are meaningful as a diagnostic tool.

    \section{Additional details for the case study} \label{sec: app_case_study}

    In this section, we give more details about the case study in Section \ref{sec: case_study}.
   Table \ref{tab: freMTPL2freq} presents a summary of the dataset \texttt{freMTPL2freq}.
    
    \begin{table}[h]
        \centering
        \begin{tabular}{|>{\centering\arraybackslash}m{3cm}|m{13cm}|}
            \hline
            \textbf{Column} & \textbf{Interpretation} \\ \hline
            \texttt{IDpol} & Policy number.\\ \hline
            \texttt{ClaimNb} & Number of claims.\\ \hline
            \texttt{Exposure} & Exposure of a given policy. For example, an exposure equal to \(0.5\) denotes that the policy was active for half of the accounting year.\\ \hline
            \texttt{Area} & Area code (categorical, ordinal). \\ \hline
            \texttt{VehPower} & Power of the insured car (categorical, ordinal). \\ \hline
            \texttt{VehAge} & Age of the insured car (in years). \\ \hline
            \texttt{DrivAge} & Age of the driver (in years). \\ \hline
            \texttt{BonusMalus} & Variable that takes values between \(50\) and \(230\) and indicates the driver's past behaviour. Any accidents for which the insured driver is liable lead to a higher bonus-malus score. That score is subsequently used to determine the driver's insurance premium.\\ \hline
            \texttt{VehBrand} & Car brand (categorical, nominal).\\ \hline
            \texttt{VehGas} & Diesel or regular fuel (binary). \\ \hline
            \texttt{Density} & Number of inhabitants per \(\text{km}^2\) in driver's place of residence. \\ \hline
            \texttt{Region} & Administrative regions of France between 1982 and 2015 (categorical).\\ \hline
        \end{tabular}
        \caption{Summary of the dataset \texttt{freMTPL2freq}.}
        \label{tab: freMTPL2freq}
    \end{table}

    \subsection{Poisson GLM}\label{subsec:Poisson_GLM}

    The Poisson GLM maintains the universal model assumption that \(N_{-r},\ldots,N_n\) are independent random variables such that \(N_i\sim \text{Poi}(\lambda(x_i)v_i)\), \(i=-r,\ldots,n\).
    The intensity function \(\lambda:\X\to \mathbb{R}_+\) represents the expected annual claim frequency and is assumed to be a log-linear function, that is  \(\log \lambda(x)=\beta_0 + \sum_{\ell = 1}^d \beta_\ell x_\ell\).

    Using \(v_i\) as an offset is a sensible choice: If policies \(i\) and \(j\) have the same covariate values, but \(v_i=2v_j\), i.e. \(i\) has been active for twice as long as \(j\), then the expected number of claims of \(i\) should be twice that of \(j\).
    Since the expectation of a Poisson random variable \(\text{Poi}(\lambda)\) is equal to \(\lambda\), the choice of the exposure as a weight function agrees with our intuition.
    
    The coefficients \(\beta_0, \ldots, \beta_d\) are estimated by minimizing the deviance loss
    \begin{align}
        \label{eq: deviance_loss}
        \L(\beta_0,\ldots,\beta_d)
        & =\sum_{i=0}^r 2N_{-i}\left[\frac{\lambda(x_{-i})v_{-i}}{N_{-i}}-1-\log\left(\frac{\lambda(x_{-i})v_{-i}}{N_{-i}}\right)\right]\geq 0 \nonumber \\
        & = \sum_{i=0}^r \left[2
        \lambda(x_{-i})v_{-i}-2N_{-i}+2N_{-i}\log\left(\frac{N_{-i}}{\lambda(x_{-i})v_{-i}}\right)\right]\geq 0
    \end{align}
    on the training set \(\{(x_{-i},N_{-i},v_{-i})\}_{i=0}^r\), where \(N_{-i}\) are the observed claim counts and \(v_{-i}\) the corresponding exposure values.
    To compute the deviance loss in case \(N_{-i}=0\), we use the convention \(0\cdot \log(0)=0\).
    This minimization problem is equivalent to maximum-likelihood estimation.
    Indeed, the negative log-likelihood of the sample \(\{(x_{-i},N_{-i})\}_{i=0,\ldots,r}\) is given by
    \begingroup
    \allowdisplaybreaks
    \begin{align*}
    -\sum_{i=0}^r \log \ell(x_{-i},N_{-i})
    & = \sum_{i=0}^r -\log \left(\frac{e^{-\lambda(x_{-i})v_{-i}}\cdot (\lambda(x_{-i})v_{-i})^{N_{-i}}}{N_{-i} !}\right)\\
    & = \sum_{i=0}^r \left[\lambda(x_{-i})v_{-i} - N_{-i}\log\left(\lambda(x_{-i})v_{-i}\right)+\log(N_{-i} !)\right],
    \end{align*}
    \endgroup
    which has the same minimizer as
    \begin{equation*}
        \sum_{i=0}^r 2N_{-i} \left[\frac{\lambda(x_{-i})v_{-i}}{N_{-i}}-1-\log \left(\frac{\lambda(x_{-i})v_{-i}}{N_{-i}}\right)\right],
    \end{equation*}
    because \(N_{-i}\) is regarded as a constant.
    To summarize, the implementation of Algorithm \ref{alg: isotonic_recalibration} in this setting includes the following steps:
    \begin{enumerate}
        \item Estimate \(\beta_0,\ldots,\beta_d\) by minimizing the Poisson deviance on the training set \(\S_1\).
        \item Compute the squared Poisson deviance residuals
        \begin{equation}
            \label{eq: Pearson_scores}
            s\left(N_i, \widehat{\lambda}(x_i),v_i\right)=2N_i\left(\frac{\widehat{\lambda}(x_i)v_i}{N_i}-1-\log\left(\frac{\widehat{\lambda}(x_i)v_i}{N_i}\right)\right),\quad i=1,\ldots,n
        \end{equation}
        on the calibration set \(\C=\{(x_i,N_i,v_i)\}_{i=1}^{n}\) and their \(\lceil (1-\alpha)(n+1)\rceil/n\) quantile \(\widehat{q}_{1-\alpha}\).
        The deviance residual is a good choice of a nonconformity score because the model was fitted to the data using the Poisson deviance loss, which is simply the sum of these residuals.
        \item For any covariate test point \(x^\prime\), the set \(I_\alpha(x^\prime)=\left\{y\in \{0,1,2,3,4\}:s(y,\widehat{\lambda}(x^\prime),v^\prime)\leq \widehat{q}_{1-\alpha}\right\}\) is a \((1-\alpha)\)-conformal prediction set for the unobserved claim number \(N^\prime\).
        \item For all test points \(x^\prime\), consider the covariate vector
        \begin{equation*}
            \Big(\widehat{\lambda}(x_1)v_1,\ldots,\widehat{\lambda}(x_n)v_n, \widehat{\lambda}(x^\prime)v^\prime\Big)
        \end{equation*}
        and the response vectors \(\left(N_1,\ldots,N_n,\min I_\alpha(x^\prime)\right)\) and \(\left(N_1,\ldots,N_n,\max I_\alpha(x^\prime)\right)\).
        Run isotonic regression of each of these response vectors against the above covariate vector to obtain the isotonic fits \(T_\ell\) and \(T_h\).
        \item Output the prediction set \(\C_{\alpha}(x^\prime)=\left[T_\ell(\widehat{\lambda}(x^\prime)v^\prime),T_h(\widehat{\lambda}(x^\prime)v^\prime)\right]\).
        If \(I_\alpha(x^\prime)=\emptyset\), output \(\C_{\alpha}(x^\prime)=\emptyset\).
        According to Theorem \ref{thm:I}, this interval contains an expectation-calibrated prediction for \(N^\prime\) with probability at least \(1-\alpha\).
        This prediction is the isotonic regression fit of \((N_1,\ldots,N_n, N^\prime)\) against the same covariate vector as above.
    \end{enumerate}

    \subsection{Boosted regression tree}\label{subsec:Boosted_Regression_Tree}

    The second model we fit to the data is CatBoost \citep{Prokhorenkova_2018}.
    This is a gradient boosting approach that is particularly suitable for categorical features.
    As in \citet{Noll_2020}, we do not need any feature preprocessing for this model.

    Gradient boosting models combine several \emph{weak} base learning models \(f_1,\ldots,f_M\) to come up with an accurate ensemble model \(F_M\).
    The base models come from a pre-specified function class (e.g. trees) and they are added to the ensemble sequentially.
    The objective function at step \(t\) is given by \(\L^{(t)}=\sum_{i=0}^r \ell(y_{i}, F_{t-1}(x_{i})+f_t(x_{i}))\), and it is minimized over \(f_t\) by fitting a base model to the negative gradients
    \begin{equation*}
        g_{it}=-\left.\frac{\partial \ell(y_{i},s)}{\partial s}\right|_{s = F_{t-1}(x_{i})}
    \end{equation*}
    using least squares regression.
    CatBoost uses trees as base learners and minimizes the weighted Poisson negative log-likelihood \(\ell(N_i, \lambda(x_i),v_i) =\lambda(x_i)v_i-N_i\log(\lambda(x_i))\).
    This is the same objective function as that of a Poisson GLM.
    The difference is that, in Poisson GLMs, \(\log(\lambda(x_i))\) is modeled as a linear function of \(x_i\), whereas in CatBoost the log-frequency is modeled as a tree ensemble.
    More details about gradient boosting can be found in \citet[Chapter~10]{Hastie_2009}.

    \subsection{Neural network}\label{subsec:FFNN}

    We use the same network configuration as \citet[Section~7.3.2]{Wuthrich_2023}.
    Like before, \texttt{Exposure} and \texttt{ClaimNb} are clipped at levels \(1\) and \(4\) respectively.
    Feature pre-processing differs from that used for Poisson GLMs.

    \begin{itemize}
        \item \texttt{VehBrand} and \texttt{Region} are encoded as categorical features using one-hot encoding.
        They have 11 and 22 classes respectively.
        \item \texttt{VehGas} is encoded as a binary numerical variable.
        \item \texttt{VehAge}, \texttt{DrivAge}, and \texttt{BonusMalus} are clipped at \(20, 90, 150\) respectively.
        \item \texttt{Area} is encoded as a numerical random variable with values in \(\{0,1,\ldots,6\}\).
    \end{itemize}

    A \texttt{StandardScaler} is applied to all numerical features before inputting them to the network. 
    We use the same transformation as in \citet[Equation~7.29]{Wuthrich_2023}.
    We also use a batch size of 5\,000 and the Nadam optimizer.
    We split the dataset into the following subsets:
    \begin{enumerate}
        \item A learning dataset \(\S_1\) containing \(60\%\) of the data points.
        During training, this is further split into training and validation sets with a ratio $9:1$.
        \item A calibration set \(\S_2\) containing \(20\%\) of the data points.
        \item A test set \(\T\) containing the remaining \(20\%\) of the data points.
    \end{enumerate}
    We train the neural network for \(100\) epochs.
    The objective function is the Poisson negative log-likelihood \(\sum_{i=1}^r \left(\lambda(x_{-i})v_{-i}-N_{-i}\log\lambda(x_{-i})\right)\), where \(\S_1=\{(x_{-i},N_{-i},v_{-i})\}_{i=1}^r\) is the training set.

    \subsection{Model comparison}
    
    \subsubsection{Marginal Calibration}\label{subsec: app_marginal_calibration}

    By definition, the null model is marginally calibrated only on the sample on which it is fitted.
    Indeed, it holds that
    \begin{equation*}
        \widehat{\lambda}(x)=\frac{\sum_{i=0}^r N_{-i}}{\sum_{i=0}^r v_{-i}}
    \end{equation*}
    for all \(x\in \X\).
    Therefore,
    \begin{equation*}
        \frac{\sum_{i=0}^r \widehat{\lambda}(x_{-i})v_{-i}}{\sum_{i=0}^r v_{-i}}=\frac{\sum_{i=0}^r N_{-i}}{\sum_{i=0}^r v_{-i}}\cdot \frac{\sum_{i=0}^r v_{-i}}{\sum_{i=0}^r v_{-i}}= \frac{\sum_{i=0}^r N_{-i}}{\sum_{i=0}^r v_{-i}},
    \end{equation*}
    which proves that \eqref{eq: marginal_calibration} holds.
    When \(\S_2\neq \S_1\), we can recalibrate the null model by setting \(\widehat{\lambda}(x)\) to be equal to the exposure-weighted claim frequency over \(\S_2\).
    Under this definition, the null model remains marginally calibrated.

    The Poisson GLM is also marginally calibrated when \(\S_1=\S_2\).
    Indeed, this model minimizes the negative Poisson log-likelihood
    \begin{align*}
        -\sum_{i=0}^r \log \ell(x_{-i}, N_{-i})\propto \sum_{i=0}^r \left(\lambda(x_{-i})v_{-i}-N_{-i}\log(\lambda(x_{-i}))\right),
    \end{align*}
    where \(\lambda(x_{-i})=\exp\left\{\beta_0+\sum_{j=1}^d \beta_{j}x_{-ij}\right\}\).
    It holds that
    \begin{align*}
        \frac{\partial}{\partial \beta_0} \sum_{i=0}^r \left(\exp\left\{\beta_0+\sum_{j=1}^d \beta_{j}x_{-ij}\right\}v_{-i}-N_{-i}\left(\beta_0+\sum_{j=1}^d \beta_{j}x_{-ij}\right)\right)
        & = \sum_{i=0}^r (\lambda(x_{-i})v_{-i}-N_{-i}).
    \end{align*}
    Setting this to zero yields \(\sum_{i=0}^r \widehat{\lambda}(x_{-i})v_{-i} = \sum_{i=0}^r N_{-i}\), which proves \eqref{eq: marginal_calibration}.

    \subsubsection{Width of calibrated confidence intervals}
    As we explained in Section \ref{sec: case_study}, none of the three models produces uniformly narrower calibrated confidence intervals.
    This is verified by Figure \ref{fig: width_intervals}.
    Even if we disregard the frequent spikes, we still observe that different models dominate in different areas of the domain.
    In this figure, we restricted the \(x\)-axis to \([0,0.4]\), where the vast majority of the points lie.

    \begin{figure}[ht]
        \centering
        \includegraphics[width=0.5\linewidth]{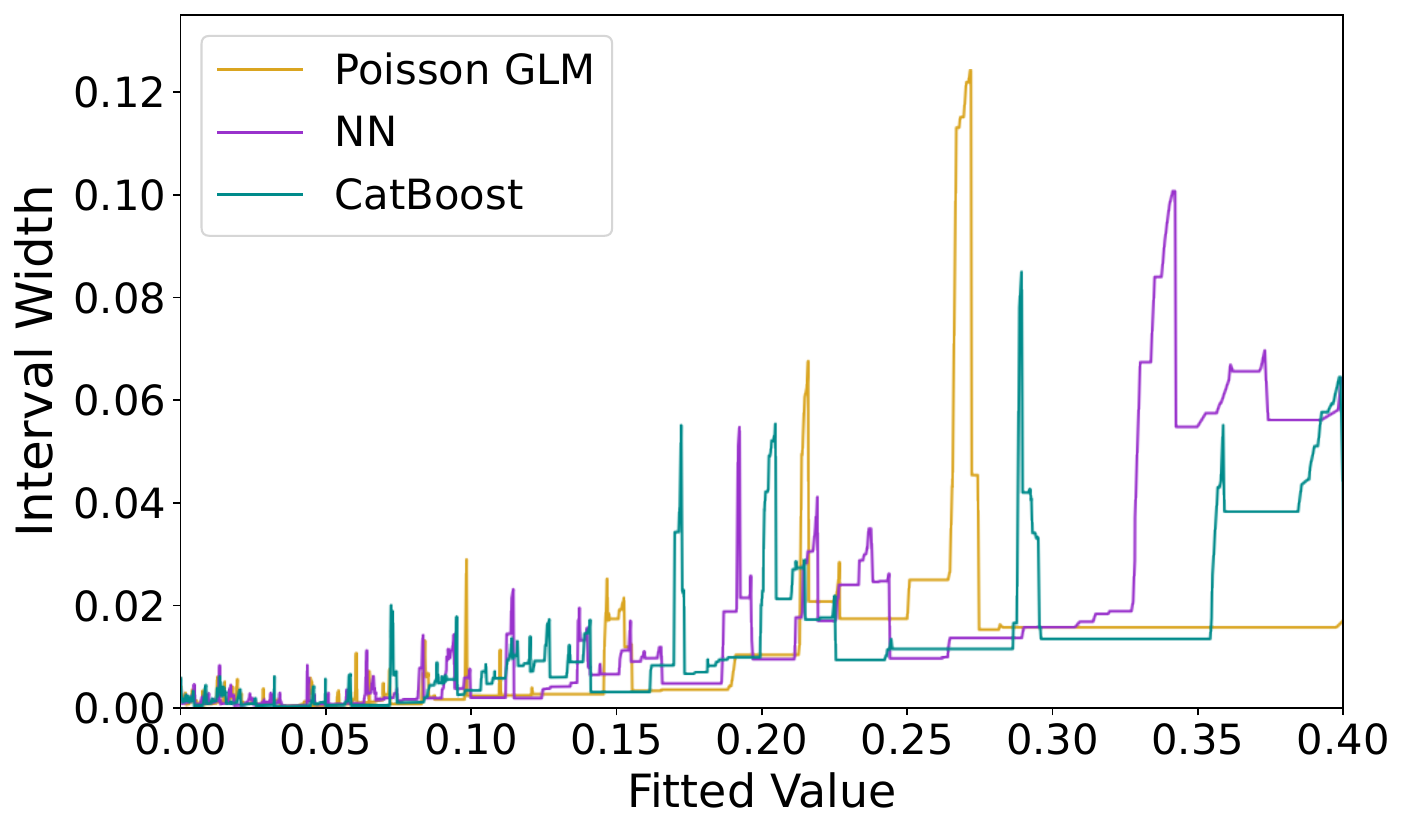}
        \caption{Widths of the calibrated confidence intervals display multiple spikes, which occur if the test point \(\fhat(X_{n+1})\) belongs to a block with few data points.
        Even if we ignore these spikes, we see that no model is uniformly superior over the entire domain.}
         \label{fig: width_intervals}
    \end{figure}
\end{document}